\theoremstyle{definition}
\newtheorem{dfn}{Definition}
\newtheorem{lem}{Lemma}
\newtheorem{cor}{Corollary}
\newtheorem{thm}{Theorem}
\newtheorem{ax}{Axiom}
\newtheorem{rem}{Remark}
\newtheorem*{rem*}{Remark}
\newcommand{\hodge}{{*}}
\newcommand{\bfhodge}{{\boldsymbol{*}}}
\newcommand{\dif}{\mathrm{d}}
\newcommand{\imult}{\mathbin{\lrcorner}}
\begin{document}
\begin{center} 
\par 
\bigskip 
{\bf Filippo Saatkamp\footnote{Master's student of mathematics at the LMU Munich}}\\  
\noindent
filippo.saatkamp@gmail.com
\par
\rm \normalsize
\bigskip \bigskip \bigskip 

\par \par \par \par \par \par \par \par \par 
\par 
\par 
\par 
\par 
\bigskip \Large
\noindent

{\bf\LARGE Reformulation of Special Relativity and Electromagnetism in terms of Reference Frames defined as maps from spacetime onto affine spaces\par}

\end{center}
\newpage
\section*{Abstract}
The starting point of this paper is one of the several definitions of reference frames (frames for short) introduced in \cite{Mor2020}: In classical mechanics a frame can be defined as a triple $(E,\Pi,t)$, where $E$ is a 3-dimensional euclidean space, $\Pi$ maps spacetime $M$ onto $E$ and $t$ maps $M$ onto $\mathbb R$. The definition allows an intuitive and coordinate-free formulation of Newtonian mechanics in terms of frames instead of coordinates \cite{Mor2020}. In particular, the postulate of a set of charts on $M$ (an atlas) is replaced by the postulate of a set of frames. Then the charts can be re-obtained through the choice of affine coordinates.

The main point of this paper is to continue the work and to reformulate special relativity and electromagnetism in terms of frames. In addition, we make some modifications to the definition of frames: In order to reflect the possibility to choose different origins of time, a frame is defined to be a quadruple - the additional item is a 1-dimensional affine space $A$ and $t$ maps $M$ onto $A$. In addition, units enter the theory as elements of positive spaces and we obtain a geometric and manifestly unit-independent reformulation.

Each frame allows us to identify spacetime with a 4-dimensional product-space and we can generalize the definition of differentiable manifolds such that the frames turn out to form an atlas. Then the only difference between Newtonian mechanics and special relativity is the assumed type of transition functions - Galilean transformations or Poincaré transformations between affine spaces. This allows us to highlight the common features and differences in the second chapter of the paper. For example, we define velocity reciprocity in mathematical terms and prove the phenomenon for both kind of transformations. The chapter concludes with a discussion of inertial and accelerated frames.

In the third chapter we restrict our attention to the reformulation of special relativity and electromagnetism with a strong emphasis on covariance. World lines are introduced as particular subsets of spacetime, the proper time of world line is defined as an affine structure on the world line and the reformulation of electromagnetism is based on the representation of differential forms w.r.t. a frame.
\section*{Acknowledgements}
A heartfelt thanks goes to Valter Moretti for carefully reviewing the manuscript and giving valuable feedback. More generally, his didactic work - in particular \cite{Mor2020} - has been a fundamental inspiration and I am thankful for his many elaborate answers to my questions.
\tableofcontents

\chapter{Reference Frames}
\section{Mathematical setup}
Units are elements of positive spaces - this statement simply summarizes the commonly accepted axioms for units \cite{Janyska:2010bpn}:
\begin{dfn}
Let $\mathbb R^+$ be the set of strictly positive real numbers. A \textbf{positive space} is a set $P$ equipped with two operations $+\colon P\times P\to P$ and $\mathbb R^+\times P\to P$ with the following properties:
\begin{itemize}
    \item $+$ is associative and commutative.
    \item For all $u\in P:1u=u$.
    \item For all $x,y\in\mathbb R^+$ and $u\in P:(x\cdot y)u=x(yu)$ and $(x+y)u=xu+yu$.
    \item For all $x\in\mathbb R^+$ and $u,v\in P:x(u+v)=xu+xv$.
    \item The operation $\mathbb R^+\times P\to P$ is a left free and transitive action of the group $(\mathbb R^+,\,\cdot\,)$ on $P$.
\end{itemize}
\end{dfn}
That being said, let $\mathbb T$ be the positive space associated to the units of time. $\mathbb T$ can be extended to a 1-dimensional oriented real vector space $V^1$:
\begin{dfn}
Let $P$ be a positive space, then its \textbf{extension} consists of an oriented real vector space $X^1$ and a function $i\colon P\to X$ such that
\begin{itemize}
    \item the image of $i$ is equal to the positive part of $X$.
    \item the astriction\footnote{Let $f\colon A\to B$ be some function and $f(A)\subset U\subset B$, then the obvious function $A\to U$ is called an astriction of $f$.} of $i$ onto the image is a homomorphism of positive spaces.
    \end{itemize}
    Given two extensions of $P$, there is an obvious identification of the positive parts and this bijection extends to a unique vector space isomorphism.
\end{dfn}
\hspace{15pt}Units of length are elements of a positive space $\mathbb L$ and units of area are elements of its square $\mathbb L^2$, which is defined as follows:
\begin{dfn}
Let $P$ be a positive space, then its square is a pair $(Q,(\cdot)^2)$ consisting of a positive space $Q$ and a function
\begin{equation*}
   P\ni l\mapsto l^2\in Q
\end{equation*}
such that
\begin{equation*}
    (\lambda u)^2=\lambda^2u^2
\end{equation*}
for all $u\in P$ and $\lambda>0$. Note that $(\cdot)^2\colon P\to Q$ is bijective. Thus, if $Q'$ is another square of $P$, then there is an obvious bijection $Q\to Q'$ and it actually is an isomorphism of positive spaces.
\end{dfn}
\hspace{15pt}Arrows are elements of a real vector space $V^3$ and the inner product is a function from $V^3\times V^3$ to $W^1$, the extension of $\mathbb L^2$. Since $W^1$ is oriented, it has a natural ordering and the proposition
\begin{equation*}
    \forall v\in V^3:0\leq\langle v,v\rangle
\end{equation*}
makes sense. The codomain of the associated norm is not the positive space $\mathbb L$, because the length of a vector can be equal to zero. Thus, we have to introduce a new concept: Non-negative spaces, which contain a neutral element of addition and whose elements can be multiplied by non-negative real numbers. Given the definition of positive spaces, the definition of non-negative spaces is obvious. Then the codomain of the inner product is the square root of the non-negative part of $W^1$, defined as follows:
\vspace{6pt}
\begin{dfn}
Let $X$ be a non-negative space, then its square root consists of a non-negative space $Y$ together with a function
    \begin{equation*}
        X\ni x\mapsto\sqrt{x}\in Y
    \end{equation*}
    such that
    \begin{equation*}
        \sqrt{\lambda x}=\sqrt{\lambda}\sqrt{\vphantom{\lambda}x}
    \end{equation*}
    for all $x\in X$ and $\lambda\geq 0$. Recall that two squares of the same positive space can be identified through a natural isomorphism of positive spaces. A similar construction allows us to identify two square roots of the same non-negative space through an isomorphism of non-negative spaces.
\end{dfn}
 \hspace{15pt}Lastly, we consider a speed $c$ - a homomorphism of positive spaces from $\mathbb T$ to $\mathbb L$ - and the unique inner product $V^1\times V^1\to W^1$ satisfying
\begin{equation*}
    \forall u\in\mathbb T:\sqrt{\langle u,u\rangle}=cu.
\end{equation*}
\begin{dfn}Consider some velocity $v\in L(V^1,V^3)$, then the function
\begin{align*}
    \|v\|\colon\mathbb T&\to\mathbb L\\
    u&\mapsto \|vu\|
\end{align*}
is called its \textbf{speed}.
\end{dfn}
\begin{rem}\label{rem_dimensions}
In the section on electromagnetism we consider a fixed set of units. Thus it is natural to wonder about the invariance of the physical laws under a change of units - that is, if some equation holds true for one particular choice of units, how do we know that it holds true for all possible choices of units? To answer the question, we first reformulate it within a clear mathematical setting:

In general, we consider a list of positive spaces $X_1,\ldots,X_n$ (e.g. the positive spaces associated to the base dimensions of the International System of Quantities) and a \textbf{physical quantity} with values in a real vector space $V$ is a function
\begin{equation*}
    Q\colon\prod_{i=1}^nX_i\to V
\end{equation*}
with a well-defined dimension, meaning that there exists a list
\begin{equation*}
    \alpha_1,\ldots,\alpha_n\in\mathbb Q
\end{equation*}
such that
\begin{equation*}
Q(\lambda_1x_1,\ldots,\lambda_nx_n)={\lambda_1}^{\alpha_1}\cdots{\lambda_n}^{\alpha_n} Q(x_1,\ldots,x_n)
\end{equation*} 
for all $x$ and positive real numbers $\lambda_1,\ldots,\lambda_n$.

That being said, the initial question can be rephrased as follows: If we consider two physical quantities
\begin{equation*}
    Q,Q'\colon\prod_{i=1}^nX_i\to V
\end{equation*}
then what is a sufficient condition such that the following implication holds:
\begin{equation*}
    \exists x:Q(x_1,\ldots,x_n)=Q'(x_1,\ldots,x_n)\Rightarrow \forall x:Q(x_1,\ldots,x_n)=Q'(x_1,\ldots,x_n)
\end{equation*}
A sufficient requirement that will always hold in practice is clearly that $Q$ and $Q'$ have the same dimension.
\end{rem}
\section{Reference Frames}
\begin{dfn}
A \textbf{reference frame} on a set $M$ consists of the following data:
\begin{itemize}
    \item An affine space $A^1$ with translation space $V^1$.
    \item An affine space $A^3$ with translation space $V^3$.
    \item A function $t\colon M\to A^1$ and a function $\Pi\colon M\to A^3$ such that the induced function $F\colon M\to A^1\times A^3$ is bijective.
\end{itemize}
\end{dfn}
\begin{dfn}Suppose we have fixed a reference frame, a unit of time $e_0$ and a unit of length. Then for each choice of
\begin{itemize}
    \item an origin of time $0\in A^1$,
    \item an origin of space $O\in A^3$ and
    \item an orthonormal basis $e$ of $V^3$ (orthonormal w.r.t. the real valued inner product induced by the unit of length)
\end{itemize}
the bijective function
\begin{align*}
    A^1\times A^3&\to\mathbb R\times\mathbb R^3\\
    (t,P)&\mapsto (e^0(t-0),e(P-O))
\end{align*}
is called an \textbf{orthonormal coordinate system}.
\end{dfn}
\begin{rem}
In \cite{Mor-SR} a reference frame on $M$ is a maximal atlas $\mathcal{A}$ such that
\begin{equation}\label{AlternativeRef}
    \forall \phi,\phi'\in \mathcal{A}:\exists B\in\mathrm O(3):\dif(\phi'\circ\phi^{-1})=\begin{pmatrix}1&0\\0&B\end{pmatrix}
\end{equation}
holds true. This definition is compatible with our definition in the following sense:
Given a reference frame, a unit of time and a unit of length, then we can consider the composition of $R$ with all orthonormal coordinate systems in order to obtain such an atlas. Conversely, suppose that the following data is given:
\begin{itemize}
    \item A maximal atlas $\mathcal{A}$ satisfying (\ref{AlternativeRef}).
    \item An affine space $A^1$ with translation space $V^1$.
    \item An affine space $A^3$ with translation space $V^3$.
    \item A unit of time and a unit of length.
\end{itemize}
Let $\mathcal{O}$ be the set of all orthonormal coordinates, then we can easily construct a function $R\colon M\to A^1\times A^3$ such that
\begin{equation}\label{TwoAtlases}
    \mathcal{A}=\{\kappa\circ R:\kappa\in \mathcal{O}\}:
\end{equation}
We simply pick some $\kappa\in \mathcal{O}$
and a $\phi\in \mathcal{A}$ and set $R\coloneqq \kappa^{-1}\circ\phi$. In fact, each function $R$ satisfying (\ref{TwoAtlases}) is obviously of this form.
\end{rem}
\section{Generalized Manifolds and Tangent Bundles}
Consider a set of reference frames on a set $M$ such that $F'\circ F^{-1}$ continuously differentiable for each pair of reference frames. Note that there is a unique topology such that all reference frames are homeomorphisms. Then one way to introduce the tangent bundle is to use coordinates to define an atlas, but this is actually a detour: It is straightforward to generalize the definition of a differentiable manifold and its tangent bundle such that the reference frames form the atlas of a generalized manifold.
\begin{dfn}
Suppose that we are given a topological space $M$ and a positive integer $n$.
\begin{itemize}\label{def_GM}
    \item A generalized $n$-dimensional reference frame (an $n$-frame for short) is a pair $(A,F)$, where $A$ is an $n$-dimensional affine space and $F\colon M\to A$ is a homeomorphism.\footnote{More generally, $F$ could be a bijective function between a subset of $M$ and a subset of $A$, but this is sufficient for our purposes. Given the usual definitions of differentiable manifolds with or without boundary, a generalization should be straightforward.}
    \item Let $\mathcal{A}$ be a set of $n$-frames on $M$. If the \textbf{transition function}
    \begin{equation*}
        F'\circ F^{-1}\colon A\to A'
    \end{equation*}
    is differentiable for all $\displaystyle F,F'\in \mathcal{A}$, then $(M,\mathcal{A})$ is called a $n$-dimensional \textbf{differentiable space}.
\end{itemize}
\end{dfn}
\begin{rem}
Let $M$ be a differentiable space. We would like to emphasize that the differentials of the transition functions are not assumed to be continuous. If they are, then $M$ is called a continuously differentiable space. 
\end{rem}
\begin{dfn}
Let $M$ be an $n$-dimensional differentiable space. A \textbf{pre-tangent bundle} consists of the following data:
\begin{itemize}
    \item A set $TM$.
    \item A function $\pi\colon TM\to M$.
    \item For each $p\in M$ an $n$-dimensional real vector space structure on $T_pM\coloneqq \pi^{-1}(\{p\})$ (in particular, $T_pM$ is non-empty for all $p\in M$, i.e. $\pi$ is surjective).
\end{itemize}
\end{dfn}
\begin{dfn}
Let $(M,\mathcal{A})$ be a differentiable space. A \textbf{tangent bundle} is a pair $(TM,\mathrm{d})$, where $TM$ is a pre-tangent bundle and $\dif$ is a function defined on $\mathcal{A}$ with the following properties:
\begin{itemize}
    \item Let $(F,A)$ be some frame in $\mathcal A$ and $V$ the translation space of $A$, then $\dif F\colon TM\to A\times V$ is bijective,
    \begin{equation*}
        \forall x\in A:\forall v\in V:\dif F^{-1}(x,v)\in T_{F^{-1}(x)}M
    \end{equation*}
    and for all $p\in M$ the obvious function $\dif F_p\colon T_pM\to V$ is a vector space isomorphism. Note that we made an abuse of notation by using the same letter for a frame and the associated bijective function, i.e. $F=(A,F)$. This will happen throughout the rest of the paper.
    \item If $F$ and $F'$ are two frames in $\mathcal{A}$, then $\dif F'\circ\dif F^{-1}=\dif(F'\circ F^{-1})$.
\end{itemize}
\end{dfn}
\begin{rem}
Let $(M,\mathcal{A})$ be a continuously differentiable space.
\begin{itemize}
    \item  We can consider the unique topology on $TM$ such that $\dif F$ is a homeomorphism for all $F\in \mathcal{A}$. Then the differentials of the frames form a continuous atlas for the tangent bundle. Furthermore, each differential is a trivialization of the tangent bundle and we obtain a vector bundle.
    \item The tangent bundle is defined up to a natural isomorphism: If $(TM',\dif ')$ is a second tangent bundle, then the vector bundle isomorphism
\begin{equation*}
    \Phi\coloneqq \dif'F\circ\dif F^{-1}\colon TM\to TM'
\end{equation*}
does not depend on $F$.
\item To show the existence of a tangent bundle, we first note the existence of a pre-tangent bundle: For example, we can choose an $n$-dimensional real vector space $T_pM$ for each $p\in M$ and then consider the disjoint union. That being said, let $TM$ be a pre-tangent bundle. For each $p\in M$ we can pick a reference $F$, choose a vector space isomorphism $\dif F_p\in L(T_pM,V)$ (where $V$ is the translation space of the affine space associated to $F$) and set
\begin{equation*}
    \dif F'\coloneqq \dif(F'\circ F^{-1})_{F(p)}\circ\dif F_p
\end{equation*}
for all $F'\in \mathcal{A}$. We finally obtain a tangent bundle $(TM,\dif)$.
\end{itemize}
\end{rem}

\chapter{Classical Mechanics vs. Special Relativity}
\section{Galilean Transformations}
In view of our discussion of accelerated frames it is useful to introduce Galilean groups as a subgroup of a larger group. Furthermore, it will play an important role that the differentials of Galilean transformations are orientation-preserving (in the sense defined below), so we begin with a technical lemma:
\begin{lem}\label{lem_orientations}
Let $V^n$ and $W^n$ be two real vector spaces and suppose that $A\in L(V,W)$ is invertible. Then two bases $v_1\ldots,v_n$ and $w_1\ldots,w_n$ have the same orientation if and only if $Av_1,\ldots,Av_n$ and $Aw_1,\ldots,Aw_n$ have the same orientation. This has two implications:
\begin{itemize}
    \item The function $A$ defines a bijective function between the sets of orientations.
    \item If $V=W$, then $A$ is either \textbf{orientation-preserving} or orientation-inverting.
\end{itemize}
\end{lem}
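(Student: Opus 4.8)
The plan is to reduce everything to the standard fact that orientation is detected by the sign of a determinant, and to use the multiplicativity of the determinant under composition with a fixed invertible map. Fix once and for all a reference basis $f_1,\ldots,f_n$ of $V$ and its image $Af_1,\ldots,Af_n$, which is a basis of $W$ since $A$ is invertible. For any basis $v_1,\ldots,v_n$ of $V$, let $C$ be the change-of-basis matrix expressing the $v_i$ in terms of the $f_j$; then the matrix expressing the $Av_i$ in terms of the $Af_j$ is exactly the same matrix $C$, because $A$ is linear. Consequently the image basis $Av_1,\ldots,Av_n$ has the same orientation relative to $\{Af_j\}$ (i.e. the same sign of $\det C$) as $v_1,\ldots,v_n$ has relative to $\{f_j\}$.

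With this observation the main equivalence is almost immediate. First I would recall that two bases $v_1,\ldots,v_n$ and $w_1,\ldots,w_n$ of $V$ have the same orientation precisely when the determinant of the change of basis from one to the other is positive, equivalently when the two change-of-basis matrices $C_v, C_w$ (relative to the fixed $\{f_j\}$) satisfy $\det C_v \cdot \det C_w > 0$. By the previous paragraph, the images $Av_1,\ldots,Av_n$ and $Aw_1,\ldots,Aw_n$ are described relative to $\{Af_j\}$ by the very same matrices $C_v$ and $C_w$. Hence the sign condition $\det C_v \cdot \det C_w > 0$ characterizing "same orientation" is literally identical on both sides, which gives the if-and-only-if.

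The two stated implications then follow formally. For the first bullet, the equivalence shows that $A$ respects the partition of bases into orientation classes in both directions, so it descends to a well-defined map on the (two-element) sets of orientations of $V$ and $W$; injectivity and surjectivity follow since distinct orientations are sent to distinct orientations (by the "only if") and every orientation of $W$ is hit by pulling back a basis through $A^{-1}$ (using invertibility). For the second bullet, when $V=W$ one compares the orientation of a basis with that of its own image: the equivalence guarantees that if one basis and its image agree (resp.\ disagree) in orientation, then every basis and its image agree (resp.\ disagree), since the induced map on the two-element orientation set is either the identity or the swap. This dichotomy is exactly the definition of orientation-preserving versus orientation-inverting.

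I do not expect a genuine obstacle here; the content is essentially bookkeeping with determinants. The one point requiring a little care is the very first claim, namely that linearity of $A$ forces the change-of-basis matrix of the images to coincide with that of the originals. This must be phrased so as to avoid conflating the abstract linear map $A$ with any matrix of its own; the cleanest formulation is to say that $A$ carries the expansion $v_i = \sum_j C_{ji} f_j$ to $A v_i = \sum_j C_{ji} (A f_j)$, so the same scalars $C_{ji}$ reappear, and that is where the whole proof turns.
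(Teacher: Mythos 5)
Your proof is correct and follows essentially the same route as the paper: both arguments reduce the claim to the observation that the change-of-basis (transition) matrix between two bases is unchanged when both are pushed forward through the linear map $A$, so the sign of its determinant — and hence the orientation relation — is preserved. The only cosmetic difference is that you route the comparison through a fixed auxiliary reference basis, while the paper compares the two bases directly via the coordinate isomorphisms $e$ and $e\circ A^{-1}$, cancelling $A$ inside the determinant.
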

\begin{proof}
Note that if $e\in L(V,\mathbb R^n)$ is the vector space isomorphism associated to the basis $e_1,\ldots,e_n$, then $e\circ A^{-1}\in L(V,\mathbb R^n)$ is the vector space isomorphism associated to the basis $Ae_1,\ldots,Ae_n$. That being said, suppose that $e$ and $e'$ are two bases of $V$ with the same orientation, i.e. $\det(e'\circ e^{-1})>0$. Then $e\circ A^{-1}$ and $e'\circ A^{-1}$ have the same orientation as well:
\begin{equation*}
    \det(e'\circ A^{-1}\circ(e\circ A^{-1})^{-1})=\det(e'\circ e^{-1})>0.
\end{equation*}
\end{proof}
\begin{dfn}\label{def_GKG}
    Let $F=(A^1,T,A^3,\Pi)$ and $F'=(B^1,T',B^3,\Pi')$ be two reference frames, then $F'\circ F^{-1}$ is called an element of the \textbf{general kinematic group} if and only if
\begin{itemize}
    \item $T'=\phi\circ T$, where $\phi\colon A^1\to B^1$ is affine and $\dif\phi=1$.
    \item $\forall t\in A^1$ the function
    \begin{align*}
        \Sigma_t\colon A^3&\to B^3\\
        p&\mapsto(\Pi'\circ F^{-1})(t,p)
    \end{align*}
    is affine.
    \item The image of the function
    \begin{align*}
        R\colon A^1&\to L(V^3,V^3)\\
        t&\mapsto\dif\Sigma_t
    \end{align*}
    is a subset of the rotation group (i.e. $R_t$ is orientation-preserving and orthogonal).
\end{itemize}
\end{dfn}
\begin{dfn}
    Let $F=(A^1,T,A^3,\Pi)$ and $F'=(B^1,T',B^3,\Pi')$ be two reference frames, then the transition function $T\coloneqq F'\circ F^{-1}$ is a Galilean transformation if and only if
\begin{itemize}
    \item $T$ is an element of the general kinematic group and
    \item $\forall p\in A^3$ the function \begin{align*}
        A^1&\to B^3\\
        t&\mapsto(\Pi'\circ F^{-1})(t,p)=\Sigma_t(p)
    \end{align*}
    is affine and the differential is independent of $p$.
\end{itemize}
\end{dfn}
\begin{rem}\label{remark-matrix}
Let $(V_1,V_2,W_1,W_2)$ be a list of vector spaces over the same field. Then each
\begin{equation*}
    A\in L(V_1\oplus V_2,W_1\oplus W_2)
\end{equation*}
can be identified with the unique matrix satisfying
\begin{equation*}
    \forall(v_1,v_2)\in V_1\oplus V_2:A(v_1,v_2)=\begin{pmatrix}A_{11}&A_{12}\\
    A_{21}&A_{22}
    \end{pmatrix}\begin{pmatrix}
    v_1\\v_2
    \end{pmatrix}=\begin{pmatrix}
    A_{11}v_1+A_{12}v_2\\A_{21}v_1+A_{22}v_2
    \end{pmatrix}
\end{equation*}
and the composition of two linear operators corresponds to the product of the matrices. Note that $A_{ij}\in L(V_j,W_i)$.
\end{rem}
\begin{lem}\label{Galilean_equivalent}
Let $F$ and $F'$ be two reference frames, then $T\coloneqq F'\circ F^{-1}$ is a Galilean transformation if and only if $T$ is affine and
\begin{equation*}
    \dif T=\begin{pmatrix}1&0\\v&R\end{pmatrix}
\end{equation*}
for some rotation $R\in L(V^3,V^3)$.
\end{lem}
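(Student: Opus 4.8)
The plan is to analyse the transition map $T=F'\circ F^{-1}\colon A^1\times A^3\to B^1\times B^3$ through its two components $T_1\colon A^1\times A^3\to B^1$ and $T_2\colon A^1\times A^3\to B^3$, where $T_2(s,p)=\Sigma_s(p)$ in the notation of Definition~\ref{def_GKG}. Invoking Remark~\ref{remark-matrix} with $V_1=W_1=V^1$ and $V_2=W_2=V^3$, the claimed shape of $\dif T$ is exactly the assertion that $\dif T_1=(1,0)$ and $\dif T_2=(v,R)$, i.e. $\dif T_1(\sigma,\xi)=\sigma$ and $\dif T_2(\sigma,\xi)=v\sigma+R\xi$ for some $v\in L(V^1,V^3)$ and some rotation $R\in L(V^3,V^3)$. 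I would prove the two implications separately; the reverse implication is a routine unpacking, whereas the forward implication rests on upgrading separate affineness in the two slots to joint affineness, which is where the real work lies.

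For the forward direction I first dispatch the time component. The first axiom of the general kinematic group provides an affine $\phi\colon A^1\to B^1$ with $\dif\phi=1$ relating the temporal maps of the two frames; since the time coordinate of $F^{-1}(s,p)$ is $s$, this forces $T_1(s,p)=\phi(s)$, which is affine, independent of $p$, and has differential $(1,0)$. For the space component I would fix base points $s_0\in A^1$ and $p_0\in A^3$ and write $s=s_0+\sigma$, $p=p_0+\xi$ with $\sigma\in V^1$, $\xi\in V^3$. The second axiom makes $\xi\mapsto T_2(s,p)$ affine with differential $R_s:=\dif\Sigma_s$, while the additional Galilean axiom makes $\sigma\mapsto T_2(s,p)$ affine with a differential $v\in L(V^1,V^3)$ that is independent of $p$. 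Expanding both one-variable affine formulas from the common base point gives on one hand $T_2(s,p)=T_2(s_0,p_0)+R_{s_0}\xi+v\sigma$ and on the other $T_2(s,p)=T_2(s_0,p_0)+v\sigma+R_s\xi$; comparing them forces $R_s\xi=R_{s_0}\xi$ for all $\sigma,\xi$, so $R_s\equiv R:=R_{s_0}$ is constant and $T_2$ is jointly affine with $\dif T_2=(v,R)$. The third axiom then identifies $R$ as a rotation, completing this direction.

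For the reverse direction I would start from an affine $T$ with $\dif T=\left(\begin{smallmatrix}1&0\\ v&R\end{smallmatrix}\right)$, $R$ a rotation, and simply read off the defining properties. The vanishing $(1,2)$ block together with the $(1,1)$ block equal to $1$ shows $T_1$ is constant in $p$ and hence $T_1(s,p)=\phi(s)$ for an affine $\phi\colon A^1\to B^1$ with $\dif\phi=1$; translating back through $F^{-1}$ yields the first axiom for the temporal maps. At fixed $s$ the map $\Sigma_s=T_2(s,\cdot)$ is affine with differential $R$, which gives the second and third axioms at once, and at fixed $p$ the map $s\mapsto T_2(s,p)$ is affine with differential $v$ independent of $p$, which is precisely the extra Galilean requirement.

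The one genuine obstacle is the passage from separate to joint affineness in the forward direction: a map affine in each slot need not be jointly affine, the product map $(a,b)\mapsto ab$ on $\mathbb R^2$ being the standard warning. What rescues the argument is exactly the hypothesis that the $\sigma$-differential is independent of $p$; this is the input that makes the cross comparison collapse $R_s$ to a constant and delivers joint affineness \emph{without} invoking any continuity or differentiability of $s\mapsto R_s$, which matters here since the ambient object is only a differentiable space and not necessarily a continuously differentiable one.
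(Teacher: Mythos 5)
Your proof is correct and follows essentially the same route as the paper: the substantive direction is handled by using the $p$-independence of the time-differential to force $R_s$ to be constant and thereby upgrade separate affineness to joint affineness, which is exactly the paper's argument (phrased there as showing $t\mapsto R_tv=\Sigma_t(q)-\Sigma_t(p)$ is constant), while the converse is a routine unpacking. Your explicit remark that no continuity of $s\mapsto R_s$ is needed is a nice clarification but does not change the approach.
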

\begin{proof}
If the transition function $F'\circ F^{-1}$ is assumed to be a Galilean transformation, then it is straightforward to prove that it has the properties listed above. To prove the other direction we first show that the function $R\colon A^1\to L(V^3,V^3)$ from definition \ref{def_GKG} is constant: Given some $v\in V^3$ we can choose $p,q\in A^3$ such that $v=q-p$ and then $R_tv=\dif(\Sigma_t)(q-p)=\Sigma_t(q)-\Sigma_t(p)$ for all $t\in A^1$. This implies that
\begin{equation*}
    A^1\ni t\mapsto R_tv\in V^3
\end{equation*}
is constant for each $v\in V^3$. That being said, let $R$ be a rotation in $L(V^3,V^3)$ for the rest of the proof.

Note that there exists a $v\in L(V^1,V^3)$ such that
\begin{equation*}
    \Sigma_{t+u}(p)=\Sigma_{t}(p)+vu
\end{equation*}
for all $t\in A^1,u\in V^1,p\in A^3$: By assumption the function $A^1\ni t\mapsto \Sigma_t(p)=:p(t)\in B^3$ is affine and its differential $v\coloneqq \dif p\in L(V^1,V^3)$ is independent of $p$.

That being said, consider $t\in A^1,u\in V^1,p\in A^3,x\in V^3$, then
\begin{equation*}
    (F'\circ F^{-1})(t+u,p+x)=(\phi(t+u),\Sigma_{t+u}(p+x))=(F'\circ F^{-1})(t,p)+(u,vu+Rx)
\end{equation*}
and this shows that $F'\circ F^{-1}$ is affine and that its differential has the desired form.
\end{proof}
\begin{rem}
By our definition the differential of a Galilean transformation is orientation-preserving. This will allow us to identify the orientations of the tangent bundle with the orientations of $V^3$ in section \ref{section_orietations}, but most importantly this implies that the transformation of vectors defined in \ref{section_transformation} is orientation-preserving.
\end{rem}
\begin{lem}\label{lem_Galilean}
Consider two frames of reference $F$ and $F'$. Fix some unit of time and length. Suppose $\phi$ and $\phi'$ are orthonormal coordinates for $F$ and $F'$. If the bases associated to $\phi$ and $\phi'$ have the same orientation, then $F'\circ F^{-1}$ is a Galilean transformation if and only if
\begin{equation*}
    \dif(\phi'\circ F'\circ F^{-1}\circ\phi^{-1})=\begin{pmatrix}1&0\\v&R\end{pmatrix}
\end{equation*}
for some $R\in\mathrm{SO}(3)$.
\end{lem}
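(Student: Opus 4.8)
The plan is to transport the coordinate-free criterion of Lemma~\ref{Galilean_equivalent} through the orthonormal coordinate systems by the chain rule. First I would identify the differentials of $\phi$ and $\phi'$. An orthonormal coordinate system is the affine map $(t,P)\mapsto(e^0(t-0),e(P-O))$, so its differential is the block-diagonal map $\dif\phi=\begin{pmatrix}\tau&0\\0&e\end{pmatrix}$, where $\tau=e^0\in L(V^1,\mathbb R)$ sends the chosen unit of time to $1$ and $e\in L(V^3,\mathbb R^3)$ is the isomorphism attached to the orthonormal basis. Since the inner product on $V^3$ is the one induced by the fixed unit of length and the basis is orthonormal, $e$ is an isometry and hence orthogonal; likewise $\dif\phi'=\begin{pmatrix}\tau&0\\0&e'\end{pmatrix}$ with $e'$ orthogonal, and the time functional $\tau$ is \emph{the same} for both coordinate systems because a single unit of time was fixed.

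For the forward implication I would assume $T\coloneqq F'\circ F^{-1}$ is Galilean. By Lemma~\ref{Galilean_equivalent} it is affine with $\dif T=\begin{pmatrix}1&0\\v&R\end{pmatrix}$ for a rotation $R$. Then $\phi'\circ T\circ\phi^{-1}$ is a composition of affine maps, hence affine, and the chain rule gives $\dif(\phi'\circ T\circ\phi^{-1})=\dif\phi'\cdot\dif T\cdot(\dif\phi)^{-1}$. A block multiplication, using $\tau\circ\tau^{-1}=1$, produces $\begin{pmatrix}1&0\\e'v\tau^{-1}&e'Re^{-1}\end{pmatrix}$. The lower-right block is a composition of orthogonal maps, hence orthogonal; writing $e'Re^{-1}=(e'\circ e^{-1})\circ(e\circ R\circ e^{-1})$, the second factor has determinant $\det R=+1$ and the first has positive determinant by the same-orientation hypothesis, so $e'Re^{-1}\in\mathrm{SO}(3)$, as claimed.

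For the converse I would suppose $\dif(\phi'\circ T\circ\phi^{-1})=\begin{pmatrix}1&0\\ \tilde v&\tilde R\end{pmatrix}$ with $\tilde R\in\mathrm{SO}(3)$. The differential being constant forces $\phi'\circ T\circ\phi^{-1}$ to be affine, whence $T=(\phi')^{-1}\circ(\phi'\circ T\circ\phi^{-1})\circ\phi$ is affine as well. Inverting the chain-rule relation and repeating the same block computation yields $\dif T=\begin{pmatrix}1&0\\(e')^{-1}\tilde v\tau&(e')^{-1}\tilde Re\end{pmatrix}$. The block $R\coloneqq(e')^{-1}\tilde Re$ is orthogonal, and exactly as before its determinant equals $\det\tilde R$ times the (positive) determinant of $e\circ(e')^{-1}$, hence $+1$; thus $R$ is a rotation and Lemma~\ref{Galilean_equivalent} identifies $T$ as Galilean.

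The only step I expect to require care is the orientation bookkeeping in the lower-right block. Orthogonality of the conjugate $e'Re^{-1}$ is automatic, but its determinant is a priori only $\pm1$; it is precisely the hypothesis that the two bases share an orientation that pins this sign to $+1$ and makes ``rotation in $L(V^3,V^3)$'' correspond to ``element of $\mathrm{SO}(3)$''. Had the bases been oppositely oriented, a rotation of $V^3$ would appear in coordinates as an element of $\mathrm O(3)\setminus\mathrm{SO}(3)$, and the stated equivalence would fail.
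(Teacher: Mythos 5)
Your proposal is correct and follows exactly the route the paper intends: the paper's own proof is only a sketch (``use the chain rule for affine maps and compute the differentials of $\phi$ and $\phi'$''), and you have carried out precisely that computation, including the block multiplication and the orientation bookkeeping that pins the determinant of $e'\circ R\circ e^{-1}$ to $+1$. No discrepancy with the paper's argument.
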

\begin{proof}
This follows from the following facts: Suppose that $A$ and $B$ are two affine functions, then $B\circ A$ is affine and
\begin{equation*}
    \dif(B\circ A)=\dif B\circ\dif A.
\end{equation*}
Moreover, if $A$ is invertible, then $A^{-1}$ is affine and $\dif(A^{-1})=(\dif A)^{-1}$. Now the key is to realize that $\phi$ and $\phi'$ are affine and to compute the differentials.
\end{proof}

\section{Lorentz and Poincaré Transformations}
\begin{dfn}\label{Def_Lorentz}
Consider the bilinear form $\eta$ on $V^1\oplus V^3$ defined by
\begin{equation*}
    \forall v,w\in V^1:\forall x,y\in V^3:\eta\begin{pmatrix}v\\x\end{pmatrix}\begin{pmatrix}w\\y\end{pmatrix}=\langle v,w\rangle-\langle x,y\rangle.
\end{equation*}
A vector space endomorphism $\Lambda$ on $V^1\oplus V^3$ is called a \textbf{Lorentz transformation} if it preserves $\eta$, i.e.
\begin{equation*}
    \forall u,u'\in V^1\oplus V^3:\eta(\Lambda u,\Lambda u')=\eta(u,u').
\end{equation*}
\end{dfn}
\begin{rem}
Of course an endomorphism preserves $\eta$ if and only if it preserves $-\eta$. But the signature has not been chosen arbitrarily: The most important reason is explained in remark \ref{F_signature} and a more aesthetic reason is that we do not need to consider the absolute value of the metric in the definition of proper time.
\end{rem}
\begin{lem}\label{time_component}
Let $\Lambda$ be a Lorentz transformation. Note that $\Lambda_{11}\in L(V^1,V^1)$ can be identified with a real number: If $A\in L(V^1,V^1)$ and $m\colon\mathbb R\times V^1\to V^1$ is the scalar multiplication associated to $V^1$, then there is a unique $x\in\mathbb R$ such that $A=m(x,\phantom{x})$. That being said, $1\leq |\Lambda_{11}|$.
\end{lem}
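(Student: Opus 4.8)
The plan is to test the $\eta$-invariance of $\Lambda$ on a single, carefully chosen vector: a purely timelike one of the form $(v,0)$ with $0\neq v\in V^1$. The merit of this choice is that $\eta((v,0),(v,0))=\langle v,v\rangle$ involves only the inner product on $V^1$, while by the matrix notation of Remark \ref{remark-matrix} the image splits as $\Lambda(v,0)=(\Lambda_{11}v,\Lambda_{21}v)$. Since $V^1$ is one-dimensional, the endomorphism $\Lambda_{11}$ is precisely multiplication by the real number $\lambda$ with which the statement identifies it, so $\Lambda_{11}v=\lambda v$ and $\langle\Lambda_{11}v,\Lambda_{11}v\rangle=\lambda^2\langle v,v\rangle$.

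First I would write out the invariance $\eta(\Lambda(v,0),\Lambda(v,0))=\eta((v,0),(v,0))$ explicitly. By the definition of $\eta$ the left-hand side equals $\langle\Lambda_{11}v,\Lambda_{11}v\rangle-\langle\Lambda_{21}v,\Lambda_{21}v\rangle$, and the right-hand side is $\langle v,v\rangle$. Substituting $\langle\Lambda_{11}v,\Lambda_{11}v\rangle=\lambda^2\langle v,v\rangle$ and rearranging yields the single identity
\[
(\lambda^2-1)\,\langle v,v\rangle=\langle\Lambda_{21}v,\Lambda_{21}v\rangle,
\]
an equation between two elements of $W^1$.

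The conclusion then follows from positivity. Since $v\neq 0$ and the inner product on $V^1$ is positive definite (its defining property $\sqrt{\langle u,u\rangle}=cu$ for $u\in\mathbb T$ forces $\langle v,v\rangle$ to be strictly positive for every nonzero $v$), the element $\langle v,v\rangle$ lies in the strictly positive part of $W^1$, whereas the right-hand side $\langle\Lambda_{21}v,\Lambda_{21}v\rangle$ is non-negative by the positivity of the inner product on $V^3$. Hence $(\lambda^2-1)\langle v,v\rangle\geq 0$ in $W^1$; as $W^1$ is ordered and $\langle v,v\rangle$ is strictly positive, the scalar $\lambda^2-1$ cannot be negative. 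Therefore $\lambda^2\geq 1$, i.e.\ $1\leq|\lambda|=|\Lambda_{11}|$.

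The step I expect to require the most care is the last one: the whole argument lives in the ordered one-dimensional space $W^1$ rather than in $\mathbb R$, so I must ensure the sign reasoning — ``a strictly positive element multiplied by a negative real is negative'' — is justified from the order structure induced by the orientation of $W^1$, rather than silently imported from $\mathbb R$. Everything else is a direct substitution into the definitions of $\eta$ and of the matrix blocks.
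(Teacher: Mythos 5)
Your argument is correct, and it is a genuinely different (and more economical) route than the one in the paper. You apply the $\eta$-invariance directly to a timelike vector $(v,0)$ and read off the \emph{column} identity $(\lambda^2-1)\langle v,v\rangle=\langle\Lambda_{21}v,\Lambda_{21}v\rangle$, entirely coordinate-free except for the identification of $\Lambda_{11}$ with a scalar that the lemma itself sets up; your concluding sign argument in the ordered one-dimensional space $W^1$ is sound, since $\langle v,v\rangle$ is strictly positive for $v\neq 0$ and the $V^3$-inner product is non-negative. The paper instead picks a unit of time $e_0$, an adapted basis of $V^1\oplus V^3$ and the musical isomorphism $\widehat\eta$, and applies the invariance to $\widehat\eta^{-1}e^0$ (equivalently, it works with $\Lambda^{-1}$), arriving at the \emph{row} identity $\Lambda^0{}_0\Lambda^0{}_0=1+\sum_{\alpha}\Lambda^0{}_\alpha\Lambda^0{}_\alpha$, labelled (\ref{1Plus}). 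Both identities give $1\leq|\Lambda_{11}|$ equally well, but the paper's extra effort is not wasted: the row version (\ref{1Plus}) is cited again verbatim in the proof of theorem \ref{thm_covariance_WL}, where the bound $\sqrt{\sum_\alpha\Lambda^0{}_\alpha\Lambda^0{}_\alpha}<e^0\Lambda e_0$ on the first row of $\Lambda$ is what makes the Cauchy--Schwarz step work. So your proof buys brevity and basis-independence for the lemma itself, while the paper's buys a reusable intermediate formula; if you adopted your version, the later theorem would need a separate derivation of (\ref{1Plus}) (or an argument that $\Lambda^{-1}$ is again a Lorentz transformation, so your column identity applied to it yields the row identity for $\Lambda$).
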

\begin{proof}
Let $e_0$ be some unit of time and $(e_1,e_2,e_3)$ a basis of $V^3$ such that
\begin{equation*}
    \forall i,j:\frac{\langle e_i,e_j\rangle}{\langle e_0,e_0\rangle}=\delta_{ij}.
\end{equation*}
Then we obtain a basis $(e_0,e_1,e_2,e_3)$ of $V^1\oplus V^3$. Lastly, we define $\widehat\eta\colon V\to V^*$ through
\begin{equation*}
    \forall v,w\in V:\frac{\eta(v,w)}{\langle e_0,e_0\rangle}=:(\widehat\eta v)w.
\end{equation*}
Then
\begin{flalign*}
&1=\frac{\eta(e_0,e_0)}{\langle e_0,e_0\rangle}=\frac{\eta(\widehat\eta^{-1}e^0,\widehat\eta^{-1}e^0)}{\langle e_0,e_0\rangle}=\frac{\eta(\Lambda^{-1}\widehat\eta^{-1}e^0,\Lambda^{-1}\widehat\eta^{-1}e^0)}{\langle e_0,e_0\rangle}=\frac{\eta(\widehat\eta^{-1}e^0\Lambda,\widehat\eta^{-1}e^0\Lambda)}{\langle e_0,e_0\rangle}&\\
&=\sum_{k=0}^3\sum_{l=0}^3e^0\Lambda e_ke^0\Lambda e_l\frac{\eta(\widehat\eta^{-1}e^k,\widehat\eta^{-1}e^l)}{\langle e_0,e_0\rangle}= \Lambda^0{}_0\Lambda^0{}_0- \sum_{\alpha=1}^3 \Lambda^0{}_\alpha\Lambda^0{}_\alpha&
\end{flalign*}
and thus
\begin{equation}\label{1Plus}
    \Lambda^0{}_0\Lambda^0{}_0=1+ \sum_{\alpha=1}^3 \Lambda^0{}_\alpha\Lambda^0{}_\alpha
\end{equation}
which concludes the proof.
\end{proof}
\begin{dfn}
Let $\Lambda$ be a Lorentz transformation. Lemma \ref{time_component} shows that $\Lambda_{11}$ is either positive or negative. If $\Lambda_{11}$ is positive, then $\Lambda$ is called orthochronous. If $\Lambda$ is additionally orientation-preserving, then $\Lambda$ is called \textbf{proper orthochronous}.
\end{dfn}
\begin{dfn} Let $R$ and $R'$ be two reference frames such that $T\coloneqq R'\circ R^{-1}$ is affine. If $\dif T$ is a proper orthochronous Lorentz transformation, then $T$ is called a Poincaré transformation. We already justified the requirement of orientation-preservation in our definition of Galilean transformations.
\end{dfn}
\begin{lem}\label{lem-Poincare-equiv}
Let $F$ and $F'$ be two frames of reference. Furthermore, fix a set of natural units and let $\phi$ and $\phi'$ be orthonormal coordinates for $F$ and $F'$ such that the associated bases of $V^3$ have the same orientation.
\begin{itemize}
    \item $F'\circ F^{-1}$ is affine if and only if $\phi'\circ F'\circ F^{-1} \circ\phi^{-1}$ is affine.
    \item If $F'\circ F^{-1}$ is affine, then $\dif(F'\circ F^{-1})$ is a Lorentz transformation if and only if $\dif(\phi'\circ F'\circ F^{-1} \circ\phi^{-1})$ is a Lorentz transformation.
    \item Suppose that $F'\circ F^{-1}$ is affine and $\dif(F'\circ F^{-1})$ is a Lorentz transformation. If the bases of $V^3$ associated to $\phi$ and $\phi'$ have the same orientation, then $\dif(F'\circ F^{-1})$ is proper orthochronous if and only if $\dif(\phi'\circ F'\circ F^{-1} \circ\phi^{-1})$ is proper orthochronous.
\end{itemize}
\end{lem}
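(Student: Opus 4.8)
The plan is to treat the orthonormal coordinate systems $\phi$ and $\phi'$ as affine bijections whose linear parts are Minkowski isometries, and then to track affineness, preservation of $\eta$, and orientation under conjugation by these maps. Write $T := F'\circ F^{-1}$ and $\Lambda := \dif T$, so that $\phi'\circ T\circ\phi^{-1}$ is the composite appearing in the statement. The first bullet is then immediate from the facts recalled in the proof of Lemma \ref{lem_Galilean}: since $\phi$ and $\phi'$ are affine bijections, $T$ affine implies $\phi'\circ T\circ\phi^{-1}$ affine, and conversely $T=\phi'^{-1}\circ(\phi'\circ T\circ\phi^{-1})\circ\phi$ exhibits $T$ as a composite of affine maps. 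Assuming $T$ affine, the chain rule $\dif(\phi'\circ T\circ\phi^{-1})=\dif\phi'\circ\Lambda\circ(\dif\phi)^{-1}$ reduces the remaining two bullets to linear algebra on $V^1\oplus V^3$.

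The structural observation driving the proof is that, with a set of natural units fixed, the linear part of $\phi$ is $(v,x)\mapsto(e^0 v,ex)$ with $e^0(e_0)=1$, and the normalization $\langle e_i,e_j\rangle/\langle e_0,e_0\rangle=\delta_{ij}$ used in Lemma \ref{time_component} is exactly what is needed to make $\dif\phi$ an isometry from $(V^1\oplus V^3,\ \eta/\langle e_0,e_0\rangle)$ onto $\mathbb R^4$ equipped with the standard form $\mathrm{diag}(1,-1,-1,-1)$; the same holds for $\dif\phi'$. For the second bullet I would then only observe that conjugation by isometries preserves the property of being metric-preserving: because $\dif\phi$ and $\dif\phi'$ are isometries, a short computation shows that $\Lambda$ preserves $\eta$ if and only if $M:=\dif\phi'\circ\Lambda\circ(\dif\phi)^{-1}$ preserves $\mathrm{diag}(1,-1,-1,-1)$, i.e. is a Lorentz transformation.

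For the third bullet the two constituents of proper orthochronicity are handled in turn. Orthochronicity is the cleanest point: since a single set of units is fixed, the time functionals of $\phi$ and $\phi'$ coincide, so $(\dif\phi)^{-1}(1,0,0,0)=e_0$ and the $V^1$-component of $\Lambda e_0$ equals $\Lambda_{11}e_0$; applying $\dif\phi'$ gives $M^0{}_0=\Lambda_{11}$, whence $M$ is orthochronous precisely when $\Lambda$ is. For properness I would equip $V^1$ with the orientation induced by $\mathbb T$ and $V^3$ with its given orientation; the hypothesis that the bases associated to $\phi$ and $\phi'$ share this orientation makes both $\dif\phi$ and $\dif\phi'$ send the product orientation of $V^1\oplus V^3$ to the standard orientation of $\mathbb R^4$. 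By Lemma \ref{lem_orientations} the relation $\Lambda=(\dif\phi')^{-1}\circ M\circ\dif\phi$ then shows $\Lambda$ is orientation-preserving if and only if $M$ is, and combining the two clauses finishes the bullet.

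The main obstacle is the isometry claim of the second paragraph: one must check that natural units force both the time-normalization (through $\sqrt{\langle u,u\rangle}=cu$) and the space-normalization (orthonormality with respect to the unit of length), so that the $\eta$-orthonormal basis is carried to the standard Minkowski basis. Once this is established, the Lorentz and orientation statements are formal consequences of conjugation by isometries and by orientation-preserving isomorphisms, and the orthochronous statement collapses to the single identity $M^0{}_0=\Lambda_{11}$.
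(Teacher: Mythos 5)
Your proposal is correct and follows essentially the same route as the paper: it reduces everything to the fact that $\dif\phi$ and $\dif\phi'$ are affine bijections whose differentials are isometries of the (normalized) Minkowski form, so that affineness, $\eta$-preservation and orientation all transfer under conjugation, with the paper merely packaging the second bullet into a small lemma on product-preserving maps between bilinear spaces. If anything, you are more explicit than the paper on the orthochronous clause (via $M^0{}_0=\Lambda_{11}$) and on the need for natural units to match the time and length normalizations.
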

\begin{proof}
Recall the proof of lemma \ref{lem_Galilean} for the first item. To prove the second item, it helps to first introduce some new terminology:
\end{proof}
\begin{dfn}
Let $V$ be some vector space over the field $F$. If $A\colon V\times V\to F$ is bilinear, then the pair $(V,A)$ is called a \textbf{bilinear space}.
\end{dfn}
\begin{dfn}
Let $(V,A)$ and $(W,B)$ be two bilinear spaces over the same field. Then $T\in L(V,W)$ is called \textbf{product-preserving} if
\begin{equation*}
    \forall u,v\in V:B(Tu,Tv)=A(u,v).
\end{equation*}
\end{dfn}
\begin{lem}
Let $U,V,W$ be bilinear spaces over the same field. If $A\in L(U,V)$ and $B\in L(V,W)$ are product-preserving, then and $A^{-1}\in L(V,U)$ and $B\circ A\in L(U,W)$ are product-preserving as well.
\end{lem}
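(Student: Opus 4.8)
The plan is to observe that both assertions are immediate consequences of unwinding the definition of product-preservation, so the whole proof amounts to two short substitutions. Before computing I would fix notation, since in this lemma the symbols $A$ and $B$ denote the linear maps rather than the bilinear forms: let me write $\beta_U,\beta_V,\beta_W$ for the bilinear forms carried by $U$, $V$ and $W$ respectively. With this convention, product-preservation of $A$ reads $\beta_V(Au,Au')=\beta_U(u,u')$ for all $u,u'\in U$, and product-preservation of $B$ reads $\beta_W(Bv,Bv')=\beta_V(v,v')$ for all $v,v'\in V$.

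For the composition $B\circ A\in L(U,W)$ I would take arbitrary $u,u'\in U$ and chain the two hypotheses, first applying product-preservation of $B$ to the pair $(Au,Au')$ and then product-preservation of $A$ to $(u,u')$:
\begin{equation*}
    \beta_W\bigl((B\circ A)u,(B\circ A)u'\bigr)=\beta_W\bigl(B(Au),B(Au')\bigr)=\beta_V(Au,Au')=\beta_U(u,u').
\end{equation*}
This is precisely the statement that $B\circ A$ is product-preserving.

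For the inverse I would first remark that writing $A^{-1}$ already presupposes that $A$ is bijective, so that $A^{-1}\in L(V,U)$ is a well-defined linear map; this is the only point where any care is needed. Granting invertibility, I would take arbitrary $x,y\in V$ and apply product-preservation of $A$ to the pair $(A^{-1}x,A^{-1}y)\in U\times U$, using $A\circ A^{-1}=\mathrm{id}_V$:
\begin{equation*}
    \beta_U(A^{-1}x,A^{-1}y)=\beta_V\bigl(A(A^{-1}x),A(A^{-1}y)\bigr)=\beta_V(x,y).
\end{equation*}
Hence $A^{-1}$ is product-preserving, which completes the proof.

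I expect no genuine obstacle here: the result is a direct formal consequence of the definition, the composition case being pure substitution and the inverse case requiring only the tacit invertibility of $A$. The single point worth stating explicitly is that, for degenerate forms, product-preservation of $A$ does not by itself force $A$ to be injective, so the second assertion must be read under the standing assumption -- implicit in the very notation $A^{-1}$ -- that $A$ is an isomorphism.
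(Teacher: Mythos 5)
Your proof is correct; the paper itself leaves this lemma as an exercise, and your two substitutions (chaining the hypotheses for $B\circ A$, and applying product-preservation of $A$ to the pair $(A^{-1}x,A^{-1}y)$ for the inverse) are exactly the routine verification intended. Your observation that the notation $A^{-1}$ tacitly presupposes invertibility, which is not forced by product-preservation when the forms are degenerate, is a fair and worthwhile caveat.
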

\begin{proof}
The proof is left as an exercise.
\end{proof}
\begin{proof}[Proof of lemma \ref{lem-Poincare-equiv}]
Now the proof is straightforward: Since coordinate systems are affine and their differentials are product-preserving (w.r.t. to the Minkowski metric), the claim follows from the last lemma:
If $\dif(F'\circ F^{-1})$ is a Lorentz transformation, then
\begin{equation*}
    \dif(\phi'\circ F'\circ F^{-1}\circ\phi^{-1})=\dif(\phi')\circ\dif\circ F'\circ F^{-1})\circ(\dif\phi)^{-1}
\end{equation*}
is a Lorentz transformation and conversely, if $\dif(\phi'\circ F'\circ F^{-1}\circ\phi^{-1})$ is a Lorentz transformation, then
\begin{equation*}
    \dif(F'\circ F^{-1})=(\dif\phi')^{-1}\circ\dif(\phi'\circ F'\circ F^{-1}\circ\phi^{-1})\circ\dif\phi
\end{equation*}
is a Lorentz transformation.

Since we have already proven the second item, the third item boils down to the following fact: Since the bases $\dif\phi$ and $\dif\phi'$ have the same orientation, the determinant of $\dif(\phi'\circ F'\circ F^{-1}\circ\phi^{-1})$ is positive if and only if $\dif (R'\circ R^{-1})$ is orientation-preserving. This can easily be verified.
\end{proof}
\section{Representation of Lorentz transformations}
\begin{dfn}\label{def-boosts}
We define an inner product on $V^1\oplus V^3$ as follows:
\begin{equation*}
    \forall v,w\in V^1:\forall x,y\in V^3:\begin{pmatrix}v\\x\end{pmatrix}\boldsymbol{\cdot}\begin{pmatrix}w\\y\end{pmatrix}=\langle v,w\rangle+\langle x,y\rangle
\end{equation*}
A Lorentz transformation is called a \textbf{Lorentz boost} if it is symmetric and positive w.r.t. this inner product.
\end{dfn}
\begin{lem}\label{lem_boosts-proper}
Lorentz boosts are proper orthochronous.
\end{lem}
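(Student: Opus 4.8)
The plan is to establish the two defining properties of a proper orthochronous transformation separately—positivity of the time component $\Lambda^0{}_0$ (orthochronous) and positivity of the determinant (orientation-preserving)—and to obtain both directly from the hypothesis that a boost is symmetric and positive with respect to the Euclidean inner product $\boldsymbol{\cdot}$ of Definition \ref{def-boosts}.

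First I would reuse the adapted basis $(e_0,e_1,e_2,e_3)$ from the proof of Lemma \ref{time_component}, where $e_0$ is a unit of time and $\langle e_i,e_j\rangle=\langle e_0,e_0\rangle\delta_{ij}$. The key observation is that this basis is orthogonal with respect to $\boldsymbol{\cdot}$: since $e_0$ lies in the summand $V^1$ and $e_1,e_2,e_3$ lie in the complementary summand $V^3$, the definition of $\boldsymbol{\cdot}$ gives $e_0\boldsymbol{\cdot} e_i=0$ for $i\geq 1$ and $e_\mu\boldsymbol{\cdot} e_\mu=\langle e_0,e_0\rangle>0$. Expanding $\Lambda e_0=\sum_\mu\Lambda^\mu{}_0 e_\mu$ and pairing with $e_0$ isolates the time component,
\begin{equation*}
    e_0\boldsymbol{\cdot}\Lambda e_0=\Lambda^0{}_0\,(e_0\boldsymbol{\cdot} e_0)=\Lambda^0{}_0\,\langle e_0,e_0\rangle.
\end{equation*}
Because $\Lambda$ is positive and $e_0\neq 0$, the left-hand side is strictly positive, and since $\langle e_0,e_0\rangle>0$ we conclude $\Lambda^0{}_0>0$, which is precisely the orthochronous condition. (As a cross-check, positivity alone gives $\Lambda^0{}_0\geq 0$, and combined with $|\Lambda^0{}_0|\geq 1$ from Lemma \ref{time_component} this again forces $\Lambda^0{}_0\geq 1$.)

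For the orientation I would invoke the spectral theorem: as $\Lambda$ is symmetric with respect to the positive-definite inner product $\boldsymbol{\cdot}$, it is diagonalizable over $\mathbb R$ with real eigenvalues, and positivity forces each eigenvalue to be strictly positive—a Lorentz transformation preserves a nondegenerate form and is therefore invertible, so no eigenvalue can vanish. Since the determinant of an endomorphism is basis-independent and equals the product of its eigenvalues, $\det\Lambda>0$. By Lemma \ref{lem_orientations} an invertible endomorphism is either orientation-preserving or orientation-inverting, and the sign $\det\Lambda>0$ selects the former. Together with the previous paragraph this shows that $\Lambda$ is proper orthochronous.

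I do not expect a serious obstacle here; the statement is essentially a repackaging of two standard linear-algebra facts. The only points requiring care are conceptual rather than computational: one must connect the paper's coordinate-free notion of orientation-preservation (through Lemma \ref{lem_orientations}) to the sign of the determinant, and one must use that ``positive'' for a self-adjoint operator means strictly positive eigenvalues, so that the determinant and the single matrix entry $\Lambda^0{}_0$ both inherit the correct sign. Lemma \ref{time_component} is then needed only as an optional confirmation, since the positivity hypothesis already delivers $\Lambda^0{}_0>0$ on its own.
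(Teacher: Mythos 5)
Your proof is correct, but it takes a genuinely different route from the paper. The paper's proof is a one-line reduction: it passes to the matrix $(e^i\Lambda e_j)$ in the adapted basis from Lemma \ref{time_component}, observes that the boost and proper-orthochronous properties transfer to the matrix, and then cites Theorem 2 of \cite{MorLorentz} for the corresponding statement about $4\times 4$ matrices. You instead give a self-contained argument that extracts both signs directly from the positivity hypothesis: pairing $\Lambda e_0$ against $e_0$ in the Euclidean product $\boldsymbol{\cdot}$ isolates $\Lambda^0{}_0$ (with Lemma \ref{time_component} available to upgrade $\geq 0$ to $>0$ if ``positive'' is read as semi-definite), and the spectral theorem plus invertibility gives $\det\Lambda>0$, hence orientation preservation via Lemma \ref{lem_orientations}. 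What the paper's approach buys is brevity and consistency with its other proofs (the same basis reduction reappears in the polar decomposition theorem); what yours buys is independence from the external reference and an explicit display of which hypothesis is responsible for which conclusion. The only point you leave implicit is that $\boldsymbol{\cdot}$ is $W^1$-valued, so ``symmetric and positive'' should be read after dividing by the positive element $\langle e_0,e_0\rangle$ to obtain a genuine real-valued positive-definite inner product --- exactly the normalization already used in the proof of Lemma \ref{time_component}, so this is a cosmetic remark rather than a gap.
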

\begin{proof}
Consider a basis like in the proof of lemma \ref{time_component}, then a Lorentz transformation $\Lambda$ is boost (proper orthochronous) if and only if the matrix $(e^i\Lambda e_j)_{0\leq i,j\leq3}$ is a boost (proper orthochronous) and thus the claim boils down to theorem 2 in \cite{MorLorentz}.
\end{proof}
\begin{dfn}
Suppose that $v\in L(V^1,V^3)$ and $\|v\|<c$. We set
\begin{equation*}
    \gamma\coloneqq \left[1-\frac{\|v\|}{c}\frac{\|v\|}{c}\right]^{-1/2}\in [1,\infty[
\end{equation*}
and we define $J\in L(V^3,V^1)$ trough
\begin{equation*}
    \forall x\in V^3:Jx=\frac{\langle vu,x\rangle}{\langle u,u\rangle}u
\end{equation*}
where $u$ is some basis of $V^1$ (but $J$ does not depend on the choice of $u$). Lastly, let $P\in L(V^3,V^3)$ be the projection of $V^3$ onto the image of $v$. Then
\begin{equation*}
    \begin{pmatrix}
    \gamma&\gamma J\\
    \gamma v&I+(\gamma-1)P
    \end{pmatrix}=:\Lambda(v)
\end{equation*}
can be verified to be a Lorentz boost.
\end{dfn}
\begin{cor}\label{Rep_Lorentz}
Let $\Lambda$ be a proper orthochronous Lorentz transformation, then there exist a unique rotation $R\in L(V^3,V^3)$ and a unique $v\in L(V^1,V^3)$ with $\|v\|<c$ such that
    \begin{equation*}
        \Lambda=\begin{pmatrix}
    1&0\\
    0&R
    \end{pmatrix}\Lambda(v).
    \end{equation*}
    Similarly, there exist a unique rotation $R'$ and a unique $v'\in L(V^1,V^3)$ with $\|v'\|<c$ such that
    \begin{equation*}
        \Lambda(v')\begin{pmatrix}
    1&0\\
    0&R'
    \end{pmatrix}=\Lambda.
    \end{equation*}
    In fact $R=R'$ and $v'=R\circ v$.
\end{cor}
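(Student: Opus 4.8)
The plan is to obtain the right-hand factorization $\Lambda = \mathrm{diag}(1,R)\,\Lambda(v)$ by reading off $v$ and $R$ directly from the block structure of $\Lambda$, and then to deduce the left-hand factorization together with the relations $R'=R$ and $v'=Rv$ from a single equivariance identity for boosts under rotations. Throughout I would fix a normalized basis $(e_0,e_1,e_2,e_3)$ as in the proof of Lemma \ref{time_component}, so that the relation (\ref{1Plus}) is available and the blocks of $\Lambda$ (in the sense of Remark \ref{remark-matrix}) become honest matrices.

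\emph{Existence of the right factorization.} Set $\gamma:=\Lambda_{11}$; since $\Lambda$ is orthochronous this is positive, and Lemma \ref{time_component} gives $\gamma\geq 1$. The time row of any boost $\Lambda(w)$ is $(\gamma_w,\gamma_w J_w)$, so I would recover the velocity by declaring $J:=\gamma^{-1}\Lambda_{12}$ and letting $v\in L(V^1,V^3)$ be the velocity whose associated operator is $J$, the correspondence $v\leftrightarrow J$ being a bijection because the inner products on $V^1$ and $V^3$ are non-degenerate. The one point that must be checked is $\|v\|<c$, i.e. that the $\gamma$ read off from $\Lambda_{11}$ agrees with $[1-(\|v\|/c)^2]^{-1/2}$; this is exactly the content of (\ref{1Plus}), which rearranges to $\gamma^2(1-\|v\|^2/c^2)=1$. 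With $v$ in hand, put $B:=\Lambda(v)$, a boost (hence proper orthochronous by Lemma \ref{lem_boosts-proper}, in particular invertible), and set $O:=\Lambda B^{-1}$.

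\emph{Rigidity of $O$.} Since $\Lambda$ and $B$ have the same time row by construction, the time row of $O=\Lambda B^{-1}$ equals the time row of $BB^{-1}=I$, namely $(1,0)$; thus $O^0{}_0=1$ and $O^0{}_\alpha=0$. Because $O$ is a Lorentz transformation, $\eta(Oe_0,Oe_0)=\eta(e_0,e_0)$ forces $\sum_\alpha (O^\alpha{}_0)^2=0$, so the time column of $O$ vanishes as well and $O=\mathrm{diag}(1,R)$. The Lorentz condition on the spatial block says $R$ is orthogonal, and since $O$ is proper with $O^0{}_0=1$ we get $\det R=1$, i.e. $R\in\mathrm{SO}(3)$. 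This yields $\Lambda=\mathrm{diag}(1,R)\Lambda(v)$; uniqueness is immediate because $\mathrm{diag}(1,R)$ does not alter the time row, so $\gamma$, $J$ and hence $v$ are forced by $\Lambda$, after which $R=\Lambda B^{-1}$ is forced too.

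\emph{Equivariance and the left factorization.} The key identity is $\mathrm{diag}(1,R)\,\Lambda(v)\,\mathrm{diag}(1,R)^{-1}=\Lambda(Rv)$, which I would verify blockwise: conjugation fixes $\gamma$ (as $\|Rv\|=\|v\|$), sends the projection $P$ onto $\mathrm{im}\,v$ to $RPR^{-1}$, the projection onto $\mathrm{im}(Rv)$, and sends $J$ to $JR^{-1}$, which is precisely the $J$-operator of $Rv$ (using orthogonality of $R$). Granting this, $\Lambda=\mathrm{diag}(1,R)\Lambda(v)=\Lambda(Rv)\,\mathrm{diag}(1,R)$ gives the left factorization outright, with $v'=Rv$ and $R'=R$; its uniqueness follows by the mirror of the argument above, now using that $\mathrm{diag}(1,R')$ on the right fixes the time column $(\gamma,\gamma v')^{\top}$ of $\Lambda$, which forces $v'$ and then $R'$. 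The hard part will be the equivariance computation together with the consistency check $\|v\|<c$; both hinge on the explicit description of $J$ and $P$ and on (\ref{1Plus}), whereas the block-diagonal rigidity and the uniqueness statements are then bookkeeping. Alternatively one could reduce everything to matrices in the normalized basis and quote the standard polar decomposition of $\mathrm{SO}^+(1,3)$ from \cite{MorLorentz}, but the direct construction has the advantage of exhibiting $v$, $R$ and the relation $v'=Rv$ explicitly.
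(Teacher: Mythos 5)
Your argument is correct, but it is not the route the paper takes. The paper derives the corollary as an ``immediate consequence'' of two heavier inputs: Theorem \ref{Boosts-Rep} (the map $v\mapsto\Lambda(v)$ is a bijection from $\{\|v\|<c\}$ onto the set of boosts) and the polar decomposition theorem with respect to the auxiliary Euclidean inner product of Definition \ref{def-boosts}, together with the observations that the symmetric--positive factor of a Lorentz transformation is again a Lorentz transformation (hence a boost) and that an orthogonal proper orthochronous Lorentz transformation must be of the form $\mathrm{diag}(1,R)$; the relation between the two factorizations then comes from $\Lambda'=\Omega\Lambda\Omega^{-1}$, $\Omega=\Omega'$. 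You instead bypass polar decomposition entirely: you read $\gamma$ and $J$ (hence $v$) off the time row of $\Lambda$, use (\ref{1Plus}) for the consistency check $\|v\|<c$, prove block-diagonal rigidity of $O=\Lambda\Lambda(v)^{-1}$ by hand, and obtain the second factorization and the relations $R'=R$, $v'=Rv$ from the equivariance identity $\mathrm{diag}(1,R)\,\Lambda(v)\,\mathrm{diag}(1,R)^{-1}=\Lambda(Rv)$, which I have checked blockwise ($\gamma_{Rv}=\gamma_v$, $J_{Rv}=J_v\circ R^{-1}$ by orthogonality of $R$, $RPR^{-1}$ is the projection onto $\mathrm{im}(Rv)$). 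What each approach buys: the paper's route is short once the polar decomposition machinery and Theorem \ref{Boosts-Rep} are in place, and it explains conceptually why the decomposition exists (it \emph{is} the polar decomposition); your route is self-contained and elementary, needs neither the surjectivity statement of Theorem \ref{Boosts-Rep} nor the external references for polar decomposition, and has the advantage of exhibiting $v$, $R$ and the relation $v'=Rv$ explicitly --- indeed the equivariance identity you verify is also what is silently needed to translate the paper's $\Lambda(v')=\Omega\Lambda(v)\Omega^{-1}$ into $v'=Rv$. The one place where you lean on the paper's apparatus is Lemma \ref{lem_boosts-proper} (so that $O$ is proper orthochronous and the rigidity argument closes); that is legitimate, though you could equally check $\Lambda(v)_{11}=\gamma>0$ and $\det\Lambda(v)=1$ directly and make the proof fully independent of \cite{MorLorentz}.
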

\begin{proof}
This is an immediate consequence of the following two theorems:
\end{proof}
\begin{thm}\label{Boosts-Rep}
Let $\Lambda$ be a Lorentz boost, then there is a unique $v\in L(V^1,V^3)$ such that $\|v\|<c$ and $\Lambda=\Lambda(v)$.
\end{thm}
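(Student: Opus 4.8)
My plan is to reduce the statement to a handful of block identities and then do the real work on the spatial block. The uniqueness part is immediate once existence is settled: by construction $\Lambda(v)$ has upper-left block $\gamma=\gamma(v)\ge 1$ and lower-left block $\gamma v$, so from a given equality $\Lambda=\Lambda(v)$ one reads off $\gamma$ as the $(1,1)$-block and then recovers $v$ as $\gamma^{-1}$ times the $(2,1)$-block. Hence at most one admissible $v$ can exist, and all the content lies in producing one.

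To organise the computation I would introduce $G=\begin{pmatrix}I&0\\0&-I\end{pmatrix}$ on $V^1\oplus V^3$, characterised by $\eta(u,u')=(Gu)\boldsymbol{\cdot} u'$, and write $\Lambda$ in block form $\begin{pmatrix}a&\Lambda_{12}\\\Lambda_{21}&D\end{pmatrix}$ as in Remark \ref{remark-matrix}. Symmetry of $\Lambda$ with respect to $\boldsymbol{\cdot}$ forces $a=\Lambda_{11}$ to be a scalar (positive, by positivity of $\Lambda$), $D$ to be symmetric, and $\Lambda_{12}=(\Lambda_{21})^{*}$, where $(\cdot)^{*}$ denotes the adjoint with respect to the inner products $\langle\cdot,\cdot\rangle$. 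Because $\Lambda$ is symmetric, the Lorentz condition $\Lambda^{*}G\Lambda=G$ reduces to $\Lambda G\Lambda=G$, and expanding the block product yields
\begin{equation*}
a^2-(\Lambda_{21})^{*}\Lambda_{21}=1,\qquad D\,\Lambda_{21}=a\,\Lambda_{21},\qquad D^2=I+\Lambda_{21}(\Lambda_{21})^{*}.
\end{equation*}
I would then set $v:=a^{-1}\Lambda_{21}\in L(V^1,V^3)$, so that $\Lambda_{21}=av$. A short computation identifies the scalar $v^{*}v\in L(V^1,V^1)$ with $\frac{\|v\|}{c}\frac{\|v\|}{c}$ — this is the one point where the unit-carrying inner products must be handled with care, using $\langle u,u\rangle=(cu)^2$ for $u\in\mathbb T$ — so the first relation becomes $a^2(1-v^{*}v)=1$. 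Since $a>0$, this simultaneously shows $a=\gamma$ and forces $v^{*}v<1$, i.e. $\|v\|<c$. The equality $J=v^{*}$ is precisely the content of the definition of $J$, so the upper-right block comes out automatically as $\Lambda_{12}=(\Lambda_{21})^{*}=\gamma v^{*}=\gamma J$.

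The remaining and genuinely substantial step is to show $D=I+(\gamma-1)P$, and I expect this to be the main obstacle. From the second relation and $\Lambda_{21}=\gamma v$ one obtains $Dv=\gamma v$, so $D$ acts as $\gamma$ on $\mathrm{im}(v)$, and by symmetry $D$ preserves the orthogonal complement $\ker P$. For the third relation I would compute $\Lambda_{21}(\Lambda_{21})^{*}=\gamma^2 vv^{*}=\gamma^2(v^{*}v)P=(\gamma^2-1)P$, using $vv^{*}=(v^{*}v)P$ and $\gamma^2-1=\gamma^2 v^{*}v$; hence $D^2=I+(\gamma^2-1)P=(I+(\gamma-1)P)^2$. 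Since $\Lambda$ is positive, its spatial block $D$ is positive as well, while $I+(\gamma-1)P$ is manifestly symmetric and positive; uniqueness of the positive symmetric square root — equivalently, $D=I$ on $\ker P$ and $D=\gamma$ on $\mathrm{im}(v)$ — then gives $D=I+(\gamma-1)P$ and therefore $\Lambda=\Lambda(v)$. Everything apart from this last step is either bookkeeping of the blocks or the unit translation $v^{*}v=\frac{\|v\|}{c}\frac{\|v\|}{c}$, whereas recovering $D$ essentially requires combining the intertwining relation $Dv=\gamma v$ with the quadratic relation $D^2=I+(\gamma^2-1)P$ and invoking positivity to extract the square root.
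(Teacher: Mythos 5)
Your proof is correct, but it takes a genuinely different route from the paper. The paper chooses a basis $(e_0,e_1,e_2,e_3)$ of $V^1\oplus V^3$ as in the proof of Lemma \ref{time_component}, transports everything to $\mathbb R^{4\times 4}$, and factors the map $v\mapsto\Lambda(v)$ as a composition $\widetilde D\circ C\circ B\circ\widetilde A$ of four bijections, the crucial one being the standard parametrization $C$ of matrix boosts by $\mathbb R^3$, whose bijectivity is imported from \cite{MorLorentz}; the theorem is thus reduced to a known coordinate result. You instead work intrinsically on $V^1\oplus V^3$: with $G=\operatorname{diag}(1,-I)$ characterised by $\eta(u,u')=(Gu)\boldsymbol{\cdot}u'$, symmetry of $\Lambda$ turns the Lorentz condition $\Lambda^{*}G\Lambda=G$ into $\Lambda G\Lambda=G$, and your three block identities are exactly what this expands to; combined with $a>0$ (positivity on vectors $(t,0)$), the identifications $J=v^{*}$, $vv^{*}=(v^{*}v)P$ and $v^{*}v=\frac{\|v\|}{c}\frac{\|v\|}{c}$ (all of which check out in the $W^1$-valued setting), they determine every block of $\Lambda$ as the corresponding block of $\Lambda(v)$ for $v:=a^{-1}\Lambda_{21}$, and uniqueness is read off from the first column. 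The one genuinely nontrivial step, $D=I+(\gamma-1)P$, is handled correctly: $Dv=\gamma v$ fixes $D$ on $\operatorname{im}(v)$, symmetry makes $D$ preserve $\ker P$, $D^2=I+(\gamma^2-1)P=(I+(\gamma-1)P)^2$, and positivity of $D$ (inherited from positivity of $\Lambda$ on vectors $(0,x)$) lets you take the unique positive symmetric square root. What your approach buys is a self-contained, basis-free argument that makes explicit where symmetry, positivity and the metric condition each enter — in effect reproving the relevant part of \cite{MorLorentz} in the abstract, unit-carrying setting; what the paper's approach buys is brevity, at the cost of delegating the substance to the cited matrix theorem.
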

\begin{proof}
Consider the set $X\coloneqq \{v\in L(V^1,V^3):\|v\|<c\}$ and let
\begin{equation*}
Y\subset L(V^1\oplus V^3,V^1\oplus V^3)    
\end{equation*}
be the set of all boosts. It can be verified that
\begin{equation*}
    \Lambda(v)=\begin{pmatrix}
    \gamma&\gamma J\\\gamma v&I+(\gamma-1)P
    \end{pmatrix}\in Y
\end{equation*}
for all $v\in X$, so we want to show that the function $\Lambda\colon X\to Y$ is bijective. We do so by considering a basis $e$ of $V^1\oplus V^3$ like the one in the proof of lemma \ref{time_component} and showing that $\Lambda$ is the composition of bijective functions:
\begin{itemize}
    \item Consider the bijective function
    \begin{align*}
        A\colon L(V^1,V^3)&\to\mathbb R^3\\
        v&\mapsto\sum_{i=1}^3(e^i\circ v)(e_0)
    \end{align*}
    We have $A(X)=B(0,1)$ and hence we obtain a bijection $\widetilde{A}\colon X\to B(0,1)$.
    \item The function
    \begin{align*}
        B\colon B(0,1)&\to\mathbb R^3\\
        v&\mapsto\frac{v}{\sqrt{1-v^tv}}
    \end{align*}
    is bijective (the function
    \begin{align*}
        v\colon\mathbb R^3&\to B(0,1)\\
        B&\mapsto\frac{B}{\sqrt{1+B^tB}}
    \end{align*}
    is its inverse.)
    \item Let $Z\subset\mathbb R^{4\times 4}$ be the set of all boosts, then
    \begin{align*}
        C\colon\mathbb R^3&\to Z\\
        B&\mapsto\left[
	\begin{array}{c|c}
		\gamma &  B^t  \\
\hline
	B & I+\frac{BB^t}{1+\gamma} 
	\end{array}
	\right]
    \end{align*}
    with $\gamma(B)\coloneqq \sqrt{1+B^tB}$ is a bijective function (see \cite{MorLorentz} for a proof).
    \item The function
    \begin{align*}
        D\colon\mathbb R^{4\times 4}&\to L(V^1\oplus V^3,V^1\oplus V^3)\\
        A&\mapsto e^{-1}\circ A\circ e
    \end{align*}
    is bijective and $D(Z)=Y$, so we can consider the bijection $\widetilde{D}\colon Z\to Y$.
\end{itemize}
It can be verified that $\Lambda=\widetilde{D}\circ C\circ B\circ \widetilde{A}$.
\end{proof}
It is well known that each Lorentz transformation on $\mathbb{R}^4$ can be decomposed into a boost and a spatial rotation (see \cite{sexl} for example). Furthermore it was observed in \cite{MorLorentz} that this decomposition is nothing but the polar decomposition. The advantage is that the polar decomposition theorem can just as well be applied to the Lorentz transformations from definition \ref{Def_Lorentz}:
\begin{thm}
Suppose that $A\in L(V^1\oplus V^3,V^1\oplus V^3)$ is invertible.
\begin{itemize}
    \item There exists a unique pair $(\Lambda,\Omega)$ such that: $A=\Lambda\Omega$ and
    \begin{itemize}
        \item $\Lambda$ is a symmetric and positive
        \item $\Omega$ is orthogonal
    \end{itemize}
    w.r.t. the inner product from definition \ref{def-boosts}.
    \item There exists a unique pair $(\Lambda',\Omega')$ such that: $A=\Lambda'\Omega'$ and
    \begin{itemize}
        \item $\Lambda'$ is a symmetric and positive
        \item $\Omega'$ is orthogonal
    \end{itemize}
    w.r.t. the inner product from definition \ref{def-boosts}.
    \item $\Omega=\Omega'$ and $\Lambda'=\Omega\Lambda\Omega^\dagger$
    \item Let $A$ be a Lorentz transformation, then $\Lambda$ is Lorentz transformation and hence a boost. Since Lorentz transformations form a group (a subgroup of the group of vector space automorphisms on $V^1\oplus V^3$), this means that $\Omega=\Lambda^{-1}A$ is a Lorentz transformation. Thus,
    \begin{equation*}
        \Lambda'=\Omega\Lambda\Omega^{-1}
    \end{equation*}
    is a Lorentz transformation as well.
    \item Now suppose that $A$ is a proper orthochronous Lorentz transformation. Since proper orthochronous transformations form a subgroup of the Lorentz group, the last item shows that $\Omega$ is a proper orthochronous Lorentz transformation. This together with the fact that $\Omega$ is orthogonal means that there exists a rotation $R\in L(V^3,V^3)$ such that
    \begin{equation*}
        \begin{pmatrix}
        1&0\\0&R
        \end{pmatrix}=\Omega.
    \end{equation*}
    \end{itemize}
\end{thm}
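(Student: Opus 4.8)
The plan is to reduce the statement to the classical polar decomposition on a finite-dimensional real inner product space and then verify that the Lorentz structure is inherited by the factors. First I would dispose of the fact that the inner product from definition \ref{def-boosts} is $W^1$-valued rather than $\mathbb R$-valued: fixing the reference $\langle e_0,e_0\rangle\in W^1$ (for any unit of time $e_0$) turns it into an honest positive-definite $\mathbb R$-valued inner product, and the three notions appearing in the theorem --- symmetric, positive, orthogonal --- are all invariant under rescaling the inner product by a positive constant, so nothing depends on this choice. Working in the basis $(e_0,e_1,e_2,e_3)$ from the proof of lemma \ref{time_component}, the Euclidean product becomes the standard dot product on $\mathbb R^4$ (up to that overall factor), and the classical polar decomposition applies: for invertible $A$ the operator $AA^\dagger$ is symmetric and positive-definite, so it has a unique symmetric positive square root $\Lambda$, and $\Omega\coloneqq\Lambda^{-1}A$ is orthogonal. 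Uniqueness follows because $A=\Lambda\Omega$ forces $AA^\dagger=\Lambda^2$, pinning down $\Lambda$ and hence $\Omega$. The second (right-handed) decomposition is obtained identically from $A^\dagger A$, and comparing the two via $\Lambda\Omega=\Omega\Lambda'$ yields $\Omega=\Omega'$ together with the conjugation formula relating $\Lambda$ and $\Lambda'$.

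Next, assume $A$ is a Lorentz transformation and show that $\Lambda$ is one as well. Let $G$ be the self-adjoint involution determined by $\eta(u,v)=g(Gu,v)$, where $g$ is the Euclidean product; in the chosen basis $G=\mathrm{diag}(1,-1,-1,-1)$, so $G^\dagger=G$ and $G^2=I$. The Lorentz condition reads $A^\dagger GA=G$, equivalently $A^\dagger=GA^{-1}G$, and a one-line manipulation gives $AGA^\dagger=G$, so $A^\dagger$ is Lorentz too; hence $M\coloneqq AA^\dagger$ is a symmetric positive Lorentz transformation. From $MGM=G$ one obtains $GMG=M^{-1}$, and since $G\Lambda G$ and $\Lambda^{-1}$ are both symmetric positive with the same square, uniqueness of the symmetric positive square root gives $G\Lambda G=\Lambda^{-1}$, i.e. $\Lambda G\Lambda=G$. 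As $\Lambda$ is symmetric this is precisely the Lorentz condition, so $\Lambda$ is Lorentz; being symmetric and positive it is a boost by definition \ref{def-boosts}, and therefore proper orthochronous by lemma \ref{lem_boosts-proper}. Because Lorentz transformations form a group, $\Omega=\Lambda^{-1}A$ is Lorentz as well.

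Finally, assume $A$ is proper orthochronous. Since those transformations form a subgroup and $\Lambda$ is a boost, $\Omega=\Lambda^{-1}A$ is proper orthochronous. Now $\Omega$ is simultaneously orthogonal, $\Omega^\dagger\Omega=I$, and Lorentz, $\Omega^\dagger G\Omega=G$; combining these gives $G\Omega=\Omega G$, so $\Omega$ commutes with $G$ and preserves its eigenspaces $V^1$ and $V^3$. Thus $\Omega$ is block-diagonal; orthochronicity forces the $V^1$-block to equal $+1$, and orientation-preservation forces the $V^3$-block into $\mathrm{SO}(3)$, yielding $\Omega=\begin{pmatrix}1&0\\0&R\end{pmatrix}$ for a rotation $R$.

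The main obstacle is the middle step: showing that the symmetric positive factor of a Lorentz transformation is again Lorentz. One cannot simply ``take the square root'' of the Lorentz relation, and the argument genuinely relies on the uniqueness of the symmetric positive square root together with the commutation identity $G\Lambda G=\Lambda^{-1}$; this is exactly the content that makes the polar decomposition restrict to the Lorentz group, as observed in \cite{MorLorentz}.
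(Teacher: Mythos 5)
Your proposal is correct and follows the same strategy as the paper: reduce to the classical polar decomposition and then show that the symmetric positive factor of a Lorentz transformation is again Lorentz. The difference is one of self-containedness rather than substance --- the paper disposes of the first three items by citing the polar decomposition theorem and reduces the key middle step to Theorem 2 of \cite{MorLorentz} via a choice of basis, whereas you prove that step directly with the identity $G\Lambda G=\Lambda^{-1}$ obtained from uniqueness of the symmetric positive square root, and you also spell out the block-diagonalization of $\Omega$ (commutation with $G$, orthochronicity fixing the $V^1$-block, orientation fixing $R\in\mathrm{SO}(3)$) that the paper merely asserts. Your argument is sound at every step and supplies exactly the content the paper outsources.
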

\begin{proof}
The first three items are a special case of the polar decomposition theorem in the
finite-dimensional case. See \cite{Mor-GM1} for a thorough discussion. We can proceed like in the proof of lemma \ref{lem_boosts-proper} and then our claim that $\Lambda$ is a Lorentz transformation boils down to theorem 2 in \cite{MorLorentz}.
\end{proof}
\section{Orientations}\label{section_orietations}
Consider a set of reference frames on a set $M$ such that all transition functions are Poincaré transformations (Galilean transformations). Then there is a unique topology on $M$ such that all reference frames are homeomorphisms and proposition 15.9 in \cite{Lee12} tells us that there are precisely two continuous orientations of the tangent bundle. Furthermore, there is a natural bijection between the orientations of $V^3$ and the continuous orientations of $M$:

Suppse that we have chosen an orientation of $V^3$. Since $V^1$ is oriented, the orientation determines an orientation of $V^1\oplus V^3$. Furthermore, if $F$ is some reference frame, then the vector space isomorphism
\begin{equation*}
    \dif F_p\in L(T_pM,V^1\oplus V^3)
\end{equation*}
allows us to assign an orientation to $T_pM$ for all $p\in M$ (see lemma \ref{lem_orientations}). The assignment is independent of $F$ and equals one of the two continuous orientations of $M$.
\section{World Lines}
We begin this section with a summary of the main results and highlight the differences between special relativity and Newtonian mechanics:

Consider a reference frame $F$ on a set $M$ and a subset $W$ of $M$. Our goal is to define what it means that $W$ is a world line w.r.t. $F$ such that we can prove the following result: If $W$ is a world line w.r.t. $F$ and $F'$ is another reference frame, then $W$ is also a world line w.r.t. $F'$. Of course the adequate definition will depend on the assumed relation between the reference frames. In the context of Special Relativity it is natural to require that the speed of a world line w.r.t. $F$ does not exceed the speed of light and we will prove the covariance of this requirement (i.e. the theory is compatible with the experimental data). In the simpler Galilean case this is not required.
\begin{dfn}[World lines in special relativity]\label{def_world_line}
Let $W$ be a subset of $M$, $i\colon W\to M$ the inclusion and $F$ a reference frame. Suppose $t\colon M\to A^1$ and $\Pi\colon M\to A^3$ are the two projections associated to $F$. $W$ is called a \textbf{world line} w.r.t. $F$ if
\begin{itemize}
    \item the restriction of $t\colon M\to A^1$ to $W$ is injective,
    \item its image is an interval $I\subset A^1$,
    \item the function
    \begin{equation*}
        P\coloneqq \Pi\circ i\circ t^{-1}\colon I\to A^3
    \end{equation*}
    is differentiable and $\|v\|\leq c$ with $v\coloneqq \dif P\colon I\to L(V^1,V^3)$.
\end{itemize}
\end{dfn} 
\begin{thm}\label{thm_covariance_WL}
If $W\subset M$ is a world line w.r.t. to a reference frame $F$, then $W$ is a world line w.r.t. every reference frame.
\end{thm}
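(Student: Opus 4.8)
The plan is to transport the parametrized description of $W$ from the $F$-picture to the $F'$-picture using the transition map $T\coloneqq F'\circ F^{-1}$, which in the present setting is a Poincaré transformation: $T$ is affine and $\Lambda\coloneqq\dif T$ is a proper orthochronous Lorentz transformation. Since $F\circ i\circ t^{-1}(\tau)=(\tau,P(\tau))$ for $\tau\in I$, the set $F'(W)\subset B^1\times B^3$ is the image of the curve $\tau\mapsto T(\tau,P(\tau))$. Writing its components as $\tilde t(\tau)\coloneqq\mathrm{pr}_{B^1}\,T(\tau,P(\tau))$ and $\tilde P(\tau)\coloneqq\mathrm{pr}_{B^3}\,T(\tau,P(\tau))$, the three conditions of Definition \ref{def_world_line} relative to $F'$ become statements about $\tilde t$ and $\tilde P$: injectivity of $\tilde t$, that $\tilde t(I)$ is an interval, and differentiability of the reparametrized curve $\tilde P\circ\tilde t^{-1}$ with speed at most $c$.

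First I would differentiate. Because $T$ is affine with constant differential $\Lambda$, the velocity of $\tau\mapsto(\tau,P(\tau))$ is $(u,v(\tau)u)$ for a chosen $u\in V^1$, so by the chain rule, in the block notation of Remark \ref{remark-matrix}, $\dif\tilde t_\tau=\Lambda_{11}+\Lambda_{12}v(\tau)\in L(V^1,V^1)\cong\mathbb R$ and $\dif\tilde P_\tau=\Lambda_{21}+\Lambda_{22}v(\tau)\in L(V^1,V^3)$.

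The decisive step is to show $\dif\tilde t_\tau>0$ for every $\tau$. Working in the orthonormal basis of the proof of Lemma \ref{time_component} and taking $u=e_0$, the tangent vector $(u,v(\tau)u)$ has coordinates $(1,\beta)$ with $|\beta|\le 1$, which is the coordinate form of $\|v(\tau)\|\le c$, and $\dif\tilde t_\tau=\Lambda^0{}_0+\sum_\alpha\Lambda^0{}_\alpha\beta^\alpha$. By Cauchy–Schwarz and identity (\ref{1Plus}), $\bigl|\sum_\alpha\Lambda^0{}_\alpha\beta^\alpha\bigr|\le\sqrt{\sum_\alpha(\Lambda^0{}_\alpha)^2}=\sqrt{(\Lambda^0{}_0)^2-1}<|\Lambda^0{}_0|$; since $\Lambda$ is orthochronous, $\Lambda^0{}_0>0$, whence $\dif\tilde t_\tau\ge\Lambda^0{}_0-\sqrt{(\Lambda^0{}_0)^2-1}>0$. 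A differentiable function with everywhere positive derivative is strictly increasing, so $\tilde t$ is injective and, being continuous, has an interval image $I'\coloneqq\tilde t(I)$; this settles the first two conditions. Moreover $\tilde t^{-1}\colon I'\to I$ is differentiable, so $P'\coloneqq\tilde P\circ\tilde t^{-1}$ is differentiable with velocity $v'(\sigma)=\dif\tilde P_\tau\circ(\dif\tilde t_\tau)^{-1}$ at $\sigma=\tilde t(\tau)$.

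Finally the speed bound falls out of $\eta$-invariance. The condition $\|v(\tau)\|\le c$ is exactly $\eta(w,w)\ge 0$ for the tangent $4$-vector $w=(u,v(\tau)u)$, since $\eta(w,w)=\langle u,u\rangle-\|v(\tau)u\|^2\ge 0$ iff $\|v(\tau)u\|\le\sqrt{\langle u,u\rangle}=cu$. As $\Lambda$ preserves $\eta$, the transformed tangent $\Lambda w=(\dif\tilde t_\tau\,u,\ \dif\tilde P_\tau\,u)$ is again causal; writing $\tilde u\coloneqq\dif\tilde t_\tau\,u$, which is a positive multiple of $u$ by the previous step and hence lies in the positive part of $V^1$, and $\tilde x\coloneqq\dif\tilde P_\tau\,u=v'(\sigma)\tilde u$, causality reads $\|\tilde x\|\le\sqrt{\langle\tilde u,\tilde u\rangle}=c\tilde u$, i.e. $\|v'(\sigma)\|(\tilde u)\le c\tilde u$, which by homogeneity of the homomorphism $\|v'(\sigma)\|$ yields $\|v'(\sigma)\|\le c$. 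This establishes the third condition and completes the proof. The main obstacle is the positivity $\dif\tilde t_\tau>0$: it is precisely the statement that an orthochronous Lorentz transformation preserves the time-orientation of causal vectors, and it is what guarantees that $W$ remains a graph over the new time axis; note that only affineness, orthochronicity, and $\eta$-invariance of $\Lambda$ enter, so the boost decomposition of Corollary \ref{Rep_Lorentz} is not needed here.
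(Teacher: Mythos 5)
Your proof is correct and takes essentially the same route as the paper: positivity of the transformed time derivative via orthochronicity, the identity (\ref{1Plus}) and the Cauchy--Schwarz inequality, followed by the speed bound via $\eta$-invariance of the causal tangent vector. If anything, your version is slightly tidier in consistently working with the non-strict bound $\|v\|\leq c$ from Definition \ref{def_world_line}, where the paper's proof silently switches to the strict inequality.
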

\begin{proof}
Suppose that $W\subset M$ is a world line w.r.t. $F$. We want to show that $W$ is also a world line w.r.t. $F'$. The proof consists of two parts: In the first part, we show that the restriction of the projection $t'\colon M\to F^1$ to $W$ is injective and that its image is an interval. In the second part, we show that $\|\dif P'<c\|$.

Part 1: Let $t\colon W\to I\subset A^1$ be the obvious bijection. It suffices to show that $t'\circ t^{-1}$ is strictly increasing. To do so, consider the basis $e$ from the proof of lemma \ref{time_component}. It suffices to show that
\begin{equation*}
    \frac{\dif t'}{\dif t}\coloneqq \frac{\dif (t'\circ t^{-1})e_0}{e_0}>0
\end{equation*}
(the LHS is obviously independent of $e$). Firstly, we note that $t'\circ t^{-1}=\Pi\circ T\circ X$ where $X\colon I\to A^1\times A^3$ is the representation of the world line w.r.t. $F$,
\begin{equation*}
    T\coloneqq F'\circ F^{-1}\colon A^1\times A^3\to B^1\times B^3
\end{equation*}
and $\Pi\colon B^1\times B^3\to B^1$ is the obvious projection. Thus, if $\Lambda\coloneqq \dif T$, then:  
\begin{flalign*}
    &\frac{\mathrm{d}t'}{\mathrm{d}t}=\frac{\mathrm{d}(\Pi\circ T\circ X)e_0}{e_0}=(e^0\circ\Lambda\circ\mathrm{d}X)e_0=e^0\Lambda\sum\nolimits_{\alpha=0}^3e^\alpha(\mathrm{d}Xe_0)e_\alpha&\\
    &=e^0\Lambda e_0+e^0\Lambda\sum\nolimits_{\alpha=1}^3e^\alpha v(e_0)e_\alpha=e^0\Lambda e_0+\sum\nolimits_{\alpha=1}^3e^\alpha v(e_0)e^0\Lambda e_\alpha&
\end{flalign*}
Note that $0<e^0\Lambda e_0$ because $\Lambda$ is orthochronous, so it suffices to show that
\begin{equation*}
    \left|\sum\nolimits_{\alpha=1}^3e^\alpha v(e_0)e^0\Lambda e_\alpha\right|<e^0\Lambda e_0.
\end{equation*}
to conclude the proof. (\ref{1Plus}) implies that
\begin{equation*}
     \sqrt{\sum\nolimits_{i=1}^3 \displaystyle\Lambda^0{}_i\Lambda^0{}_i }<e^0\Lambda e_0
\end{equation*}
and the condition on the speed is
\begin{equation*}
    \sqrt{\sum\nolimits_{\alpha=1}^3\displaystyle{e^\alpha}ve_0}=\frac{\|ve_0\|}{u}= \frac{\|ve_0\|}{ce_0}=\frac{\|v\|}{c}<1
\end{equation*}
Now the Cauchy Schwarz inequality delivers the desired result:
\begin{flalign*}
   &\left|\sum\nolimits_{\alpha=1}^3e^\alpha v(e_0)e^0\Lambda e_\alpha\right|\leq
   \sqrt{\sum\nolimits_{\alpha=1}^3 \displaystyle{e^\alpha}v(e_0)e^\alpha v(e_0)}\sqrt{\sum\nolimits_{\alpha=1}^3\displaystyle{e^0}\Lambda e_\alpha \displaystyle{e^0}\Lambda e_\alpha}&\\
   &\leq \sqrt{\sum\nolimits_{\alpha=1}^3\displaystyle{e^0}\Lambda e_\alpha \displaystyle{e^0}\Lambda e_\alpha}<e^0\Lambda e_0&
\end{flalign*}

Part 2: The key is to realize that $\|v\|<c$ and
\begin{equation*}
    0<\frac{\langle e_0,e_0 \rangle-\langle\mathrm dPe_0,\mathrm dPe_0\rangle}{\langle e_0,e_0 \rangle}=\frac{\eta(\mathrm dXe_0,\mathrm dXe_0)}{\langle e_0,e_0 \rangle}
\end{equation*}
are equivalent. Consider the obvious function $t\colon I'\to I$, then
\begin{equation*}
    X'=T\circ X\circ t
\end{equation*}
and hence by the chain rule
\begin{equation*}
        \frac{\eta(\mathrm dX'e_0,\mathrm dX'e_0)}{\langle e_0,e_0 \rangle}=\left[\frac{\eta(\dif T\dif Xe_0, \dif T\dif Xe_0)}{\langle e_0,e_0 \rangle}\circ t\right]\frac{\dif t}{\dif t'}\frac{\dif t}{\dif t'} =\left[\frac{\eta(\dif Xe_0, \dif Xe_0)}{\langle e_0,e_0 \rangle}\circ t\right]\frac{\dif t}{\dif t'}\frac{\dif t}{\dif t'}>0.
\end{equation*}
\end{proof}
\begin{rem}
In Newtonian mechanics, we do not require that the speed of world line does not exceed the speed of light, i.e. we simply drop the last item in definition \ref{def_world_line}. Then the proof of theorem \ref{thm_covariance_WL} is similar, but much simpler: We only have to show that $\dif t'/\dif t>0$ and it follows from our definition of Galilean transformations that $\dif t'/\dif t\equiv 1$.
\end{rem}
\begin{cor}\label{cor_WW}.
Let $F$ be a reference frame on a set $M$, then each $P\in A^3$ can be identified with a constant function $P\colon A^1\to A^3$ and thus with a world line $P\subset M$ (the preimage of the graph of $P\colon A^1\to A^3$ under $F$). This holds true both in special relativity and Newtonian mechanics.
\begin{proof}
This is an immediate consequence of theorem \ref{thm_covariance_WL}.
\end{proof}
\end{cor}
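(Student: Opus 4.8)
The plan is to reduce the claim to Theorem~\ref{thm_covariance_WL}: once I establish that the subset of $M$ attached to a single point $P_0\in A^3$ is a world line with respect to the distinguished frame $F$ itself, covariance immediately promotes it to a world line with respect to every frame. The assertion that this works ``both in special relativity and Newtonian mechanics'' then follows for free, since the verification I have in mind passes the full special-relativistic definition (speed bound included), and the Newtonian definition is obtained merely by dropping that bound.

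Concretely, I would fix $P_0\in A^3$, form the constant function $A^1\ni s\mapsto P_0\in A^3$, and take its graph $G:=\{(s,P_0):s\in A^1\}\subset A^1\times A^3$, a ``horizontal'' slice. Setting $W:=F^{-1}(G)$, I would then check the three bullets of Definition~\ref{def_world_line} by hand. Since $F=(t,\Pi)$ is a bijection onto $A^1\times A^3$ and $G$ meets each fibre $\{s\}\times A^3$ in the single point $(s,P_0)$, the restriction of the time projection $t\colon M\to A^1$ to $W$ is injective with image all of $A^1$, which is an interval. This disposes of the first two conditions at once.

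For the third condition I would note that the induced trajectory $P:=\Pi\circ i\circ t^{-1}\colon A^1\to A^3$ is, by the very construction of $G$, the constant map $s\mapsto P_0$. A constant map between affine spaces is differentiable with vanishing differential, so $v:=\dif P=0\in L(V^1,V^3)$ and its speed is identically zero; in particular $\|v\|=0<c$, so the bound $\|v\|\leq c$ holds trivially (and is simply not demanded in the Newtonian setting). Hence $W$ is a world line with respect to $F$, and Theorem~\ref{thm_covariance_WL} finishes the argument by upgrading this to every frame.

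I do not expect a genuine obstacle here; the statement is essentially a direct corollary of covariance. The only points deserving a word of care are bookkeeping: that the full image $A^1$ of the time projection legitimately counts as an interval $I\subset A^1$ in the sense of Definition~\ref{def_world_line}, and that a constant map is differentiable with zero differential so that the speed condition is vacuously satisfied. Both are immediate, which is exactly why the author can dispatch the corollary in a single line.
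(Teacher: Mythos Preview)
Your proposal is correct and follows exactly the route the paper intends: verify directly that the preimage of the graph of the constant map satisfies Definition~\ref{def_world_line} with respect to $F$ itself (injective time projection onto the interval $A^1$, constant trajectory with zero differential, hence speed $0<c$), and then invoke Theorem~\ref{thm_covariance_WL} to upgrade to all frames. The paper's one-line proof compresses precisely this verification into the word ``immediate''; you have simply spelled out the bookkeeping.
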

\section{Transformation of vectors}\label{section_transformation}
\begin{thm}\label{thm_vector-transform}Let $F$ and $F'$ be two reference frames such that $F'\circ F^{-1}$ is a Galilean transformation or a Poincaré transformation and recall that each point in $F$ corresponds to a world line by corollary \ref{cor_WW}.
\begin{enumerate}
    \item Each point in $F$ has a constant velocity in $F'$. In addition, all points have the same velocity. This velocity is called the velocity of $F$ w.r.t. $F'$ and is denoted by $v(F|F')$. This allows us to define a function
    \begin{equation*}
        \Phi\colon A^3\times A^3\to V^3
    \end{equation*}
    where $\Phi(P,Q)$ is the (time-independent) vector from $P$ to $Q$ in $F'$.
    \item If $P,Q,X,Y\in A^3$ and $Q-P=Y-X$, then
    \begin{equation*}
        \Phi(P,Q)=\Phi(X,Y)
    \end{equation*}
    and thus we can define a function $T(F\to F')\colon V^3\to V^3$.
    \item If $F'\circ F^{-1}$ is a Galilean transformation, i.e.
    \begin{equation*}
        \dif(F'\circ F^{-1})=\begin{pmatrix}
         1&0\\0&R   
        \end{pmatrix}\begin{pmatrix}
         1&0\\v&1   
        \end{pmatrix}
    \end{equation*}
    for some rotation $R\in L(V^3,V^3)$ and $v\in L(V^1,V^3)$, then
    \begin{equation*}
        T(F\to F')=R.
    \end{equation*}
    \item If $F'\circ F^{-1}$ is a Poincaré transformation, i.e.
    \begin{equation*}
        \dif(F'\circ F^{-1})=\begin{pmatrix}
         1&0\\0&R   
        \end{pmatrix}\begin{pmatrix}
    \gamma&\gamma J\\
    \gamma v&I+(\gamma-1)P
    \end{pmatrix}
    \end{equation*}
    (see corollary \ref{Rep_Lorentz}), then
    \begin{equation*}
        T(F\to F')=R+(\gamma-1)(R\circ P)-\gamma (R\circ v\circ J).
    \end{equation*}
    \item In both cases $v(F|F')=Rv$ and $v(F'|F)=-v$.
    \item The items above show that $T$ is an orientation-preserving vector space isomorphism, i.e. each basis is mapped to another basis with the same orientation. This is a consequence of the requirement that the differential of a Poincaré transformation (a Galilean transformation) is an orientation-preserving vector space isomorphism.
\end{enumerate}
\end{thm}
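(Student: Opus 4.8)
The plan is to reduce all six parts to a single computation built around the block decomposition of $\Lambda\coloneqq\dif(F'\circ F^{-1})$. Writing $T\coloneqq F'\circ F^{-1}$ and splitting $\Lambda$ into blocks $\Lambda_{ij}$ relative to $V^1\oplus V^3$ as in remark \ref{remark-matrix}, the first thing I would record is that $\Lambda_{11}\in L(V^1,V^1)$ is invertible: in the Galilean case $\Lambda_{11}=1$, and in the Poincaré case orthochronicity together with lemma \ref{time_component} forces $\Lambda_{11}$ to be multiplication by $\gamma>0$. This invertibility is exactly what makes the whole argument run.

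I would prove parts 1 and 2 simultaneously. Fixing origins $0\in A^1$ and $O\in A^3$ and writing $\sigma=s-0$, $\rho=P-O$, the affine map reads $T(s,P)=T(0,O)+(\Lambda_{11}\sigma+\Lambda_{12}\rho,\ \Lambda_{21}\sigma+\Lambda_{22}\rho)$. The world line attached to $P$ by corollary \ref{cor_WW} is the image of $\sigma\mapsto(s,P)$; eliminating $\sigma$ in favour of the $F'$-time $\tau$ (possible precisely because $\Lambda_{11}$ is invertible) gives the position in $F'$ as the affine function
\begin{equation*}
\xi_P(\tau)=\xi_\ast+\Lambda_{21}\Lambda_{11}^{-1}(\tau-\tau_0)+S(P-O),\qquad S\coloneqq\Lambda_{22}-\Lambda_{21}\Lambda_{11}^{-1}\Lambda_{12},
\end{equation*}
with $\xi_\ast,\tau_0$ independent of $P$. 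The slope $\Lambda_{21}\Lambda_{11}^{-1}$ is manifestly constant and $P$-independent, which is part 1, and $\Phi(P,Q)=\xi_Q(\tau)-\xi_P(\tau)=S(Q-P)$, in which the $\tau$-terms cancel, giving the asserted time-independence and, at once, the fact (part 2) that $\Phi$ depends only on $Q-P$. I would then set $T(F\to F')\coloneqq S$, the Schur complement of $\Lambda_{11}$ in $\Lambda$.

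Parts 3 and 4 become pure substitution. For a Galilean transformation $\Lambda_{11}=1,\ \Lambda_{12}=0,\ \Lambda_{21}=Rv,\ \Lambda_{22}=R$, so $S=R$; for a Poincaré transformation corollary \ref{Rep_Lorentz} gives $\Lambda_{11}=\gamma,\ \Lambda_{12}=\gamma J,\ \Lambda_{21}=\gamma Rv,\ \Lambda_{22}=R(I+(\gamma-1)P)$, whence $S=R+(\gamma-1)RP-\gamma RvJ$, exactly the stated formula. For part 5, $v(F|F')=\Lambda_{21}\Lambda_{11}^{-1}=Rv$ in both cases is read off the same blocks; for $v(F'|F)$ I would apply the identical construction to the inverse transition function, using that $\Lambda(v)^{-1}=\Lambda(-v)$ for boosts, so that $(\Lambda^{-1})_{21}(\Lambda^{-1})_{11}^{-1}=-v$ in either case.

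Finally, for part 6 I would avoid computing $\det S$ by hand in the Poincaré case and instead invoke the block factorization
\begin{equation*}
\Lambda=\begin{pmatrix}1&0\\ \Lambda_{21}\Lambda_{11}^{-1}&1\end{pmatrix}\begin{pmatrix}\Lambda_{11}&\Lambda_{12}\\ 0&S\end{pmatrix},
\end{equation*}
which yields $\det\Lambda=\det\Lambda_{11}\cdot\det S$. Since $\Lambda$ is orientation-preserving ($\det\Lambda>0$, by the definitions of Galilean and Poincaré transformations) and $\det\Lambda_{11}>0$ (it equals $1$ or $\gamma$), I conclude $\det S>0$; in particular $S$ is invertible, so $T(F\to F')$ is an orientation-preserving vector space isomorphism. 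The only delicate points I expect are the bookkeeping with the affine origins (which must, and do, drop out of every difference) and the justification that $\Lambda_{11}$ is invertible and positive from orthochronicity; everything else is the one Schur-complement computation applied three times.
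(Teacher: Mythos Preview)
Your argument is correct and, at its core, carries out the same elimination the paper does: parametrize the world line of $P$ by $F$-time, push through the affine map $T$, and solve for the $F'$-position as an affine function of $F'$-time. The paper does this more concretely---it writes $P'=(\Pi^3\circ T\circ P)\circ(\Pi^1\circ T\circ P)^{-1}$, reads off $\dif P'=Rv$ for part~1, and for parts~2/4 computes the key identity $(\Pi^1\circ T\circ Q)^{-1}-(\Pi^1\circ T\circ P)^{-1}=-Jx$ by hand before applying $\dif\Pi^3\circ\dif T$---whereas you package the same computation as the Schur complement $S=\Lambda_{22}-\Lambda_{21}\Lambda_{11}^{-1}\Lambda_{12}$. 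The paper's $-Jx$ is exactly your $-\Lambda_{11}^{-1}\Lambda_{12}x$, so nothing new is happening at the level of the calculation.

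Where your organization genuinely buys something is in parts~5 and~6. The paper's written proof stops after parts~1--4; the claim $v(F'|F)=-v$ and the orientation-preservation of $T(F\to F')$ are stated but not argued. Your block factorization $\Lambda=\begin{pmatrix}1&0\\ \Lambda_{21}\Lambda_{11}^{-1}&1\end{pmatrix}\begin{pmatrix}\Lambda_{11}&\Lambda_{12}\\0&S\end{pmatrix}$ gives $\det\Lambda=\det\Lambda_{11}\cdot\det S$ and hence $\det S>0$ in one line, which is cleaner than trying to read orientation off the explicit formula $R+(\gamma-1)RP-\gamma RvJ$. For $v(F'|F)=-v$ you invoke $\Lambda(v)^{-1}=\Lambda(-v)$; this is true but not stated in the paper, so you should add a one-line justification (e.g.\ $\Lambda(v)^{-1}$ is again symmetric positive, hence a boost by definition~\ref{def-boosts}, and its $(2,1)$-block is forced to be $-\gamma v$ by theorem~\ref{Boosts-Rep}).
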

\begin{proof}\hfill

Notation:
\begin{itemize}
    \item Given $P\in A^3$, the function
\begin{equation*}
    A^1\ni t\mapsto (t,P)\in A^1\times A^3
\end{equation*}
will be denoted by $P$ as well.
\item Given $x\in V^3$, the function
\begin{equation*}
    V^1\ni t\mapsto (t,x)\in V^1\times V^3
\end{equation*}
will be denoted by $x$ as well.
\item $T\coloneqq F'\circ F^{-1}\colon A^1\times A^3\to B^1\times B^3$
\item $\Pi^1\colon B^1\times B^3\to B^1$ and $\Pi^3\colon B^1\times B^3\to B^3$ are the canonical projections.
\end{itemize}
1.

Suppose $P\in A^3$, then
\begin{equation*}
    P'\coloneqq (\Pi^3\circ T\circ P)\circ(\Pi^1\circ T\circ P)^{-1}\colon B^1\to B^3
\end{equation*}
is its path in $F'$. $P'$ is an affine function since the composition of affine functions is affine and the inverse of an affine function is affine. This already shows that $P$ has a constant velocity in $F'$. Now we show that each point has the same velocity:

Consider
\begin{equation*}
    A\coloneqq \begin{pmatrix}1\\0\end{pmatrix}\in L(V^1,V^1\oplus V^3),
\end{equation*}
then $\dif(P')=A$ and thus
\begin{equation*}
  \dif(P')=\underbrace{\dif\Pi^3\circ\dif T\circ A}_{=\gamma Rv}\circ(\underbrace{\dif \Pi^1\circ\dif T\circ A}_{=\gamma})^{-1}= R\circ v=:v(F|F')\in L(V^1,V^3).
\end{equation*}

2. and 4. (3. is analogous)

Choose $P,Q\in A^3$ such that $Q-P=x\in V^3$. Our goal is to prove that
\begin{equation*}
    \forall t\in B^1:Q(t)-P(t)=Rx+(\gamma-1)(R\circ P)x-\gamma (R\circ v\circ J)x.
\end{equation*}
Firstly, note that
\begin{multline*}
    (\Pi^3\circ T\circ Q)\circ(\Pi^1\circ T\circ Q)^{-1}-(\Pi^3\circ T\circ P)\circ(\Pi^1\circ T\circ P)^{-1}\\
    =(\mathrm d\Pi\circ\mathrm d T)\begin{pmatrix}(\Pi^1\circ T\circ Q)^{-1}-(\Pi^1\circ T\circ P)^{-1}\\Q-P\end{pmatrix}
\end{multline*}
We now prove
\begin{equation*}
    \forall t\in B^1:(\Pi^1\circ T\circ Q)^{-1}(t)-(\Pi^1\circ T\circ P)^{-1}(t)=-Jx
\end{equation*}
since this concludes the proof:

Choose some origin $O\in M$ and let
\begin{equation*}
    \begin{array}{cc}
        F\colon A^1\to V^1 &  \widetilde F\colon B^1\to V^1\\
        G\colon A^3\to V^3&
    \widetilde G\colon B^3\to V^3\\
    H\colon A^1\times A^3\to V^1\times V^3&
    \widetilde H\colon B^1\times B^3\to V^1\times V^3
    \end{array}
\end{equation*}
be the induced bijections. Note that
\begin{equation*}
    (\Pi^1\circ T\circ P)^{-1}=F^{-1}\circ(\underbrace{\widetilde F\circ\Pi^1\circ\widetilde H^{-1}}_{=\mathrm d\Pi^1}\circ\underbrace{\widetilde H\circ T\circ H^{-1}}_{\mathrm dT}\circ\underbrace{H\circ P\circ F^{-1}}_{=P-O})^{-1}\circ\widetilde F
\end{equation*}
and thus setting $\vec P\coloneqq P-O$ for all $P\in A^3$ yields
\begin{flalign*}
&(\Pi^1\circ T\circ Q)^{-1}-(\Pi^1\circ T\circ P)^{-1}&\\
&=F^{-1}\circ(\mathrm d\Pi^1\circ\mathrm dT\circ\vec Q)^{-1}\circ\widetilde F-F^{-1}\circ(\mathrm d\Pi^1\circ\mathrm dT\circ\vec P)^{-1}\circ\widetilde F&\\
&=(\mathrm d\Pi^1\circ\mathrm dT\circ\vec Q)^{-1}\circ\widetilde F-(\mathrm d\Pi^1\circ\mathrm dT\circ\vec P)^{-1}\circ\widetilde F.&
\end{flalign*}
Since
\begin{equation*}
    \forall x\in V^3:\forall t\in V^1:(\mathrm d\Pi^1\circ\mathrm dT \circ x)^{-1}(t)=\frac{t}{\gamma}-Jx
\end{equation*}
we finally obtain the desired result.
\end{proof}
\section{Velocity Reciprocity}
\begin{thm}\hfill
\begin{itemize}
    \item If $F$ and $F'$ measure the speed of each other, then the measured speeds are equal:
    \begin{equation*}
        \|v(F|F')\|=\|v(F'|F)\|
    \end{equation*}
    \item If an observer in $F'$ represents the direction of $v(F|F')$ by an arrow, then the arrow and $v(F'|F)$ have opposite directions from the point of view of an observer in $F$. In other words, there exists a positive real number $\alpha$ such that
    \begin{equation*}
        T(F\to F')\circ v(F'|F)=-\alpha v(F|F')\in L(V^1,V^3).
    \end{equation*}
    \item If $F'\circ F^{-1}$ is a Galilean transformation, then $\alpha=1$ and if $F'\circ F^{-1}$ is a Poincaré transformation, then
    \begin{equation*}
        \alpha=\frac{1}{\gamma}
    \end{equation*}
    (this is an occurrence of length contraction).
\end{itemize}
\end{thm}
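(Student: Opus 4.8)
The plan is to reduce every assertion to the explicit formulas already furnished by Theorem~\ref{thm_vector-transform}: the velocities $v(F|F')=Rv$ and $v(F'|F)=-v$ from item~5, together with the operators $T(F\to F')=R$ (Galilean case, item~3) and $T(F\to F')=R+(\gamma-1)(R\circ P)-\gamma(R\circ v\circ J)$ (Poincar\'e case, item~4). Once these are in hand the theorem is a short computation with $P$, $J$ and $\gamma$, so no genuinely new geometric input is required.

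For the first assertion I would use only that $R$ is a rotation, hence orthogonal with respect to the inner product on $V^3$. By the definition of the speed of an element of $L(V^1,V^3)$, for every unit of time $u$,
\begin{equation*}
\|v(F|F')\|(u)=\|Rvu\|=\|vu\|=\|(-v)u\|=\|v(F'|F)\|(u),
\end{equation*}
where the second equality is orthogonality of $R$ and the third uses $\|{-w}\|=\|w\|$. This proves $\|v(F|F')\|=\|v(F'|F)\|$ as functions $\mathbb T\to\mathbb L$.

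For the remaining two assertions I would compute $T(F\to F')\circ v(F'|F)=-\,T(F\to F')\circ v$ and match it against $-\alpha\,v(F|F')=-\alpha Rv$. The Galilean case is immediate: $T(F\to F')=R$ gives $T(F\to F')\circ v=Rv$, so $\alpha=1$. The substance is the Poincar\'e case, which rests on the two identities $P\circ v=v$ and $J\circ v=\beta^2\,\mathrm{id}_{V^1}$, where $\beta\coloneqq\|v\|/c$. The first holds because the image of $v$ is at most one-dimensional and $P$ is the projection onto it, so every $vu$ is fixed by $P$. For the second I would evaluate on a unit of time $e_0$: the definition of $J$ gives $J(ve_0)=\frac{\langle ve_0,ve_0\rangle}{\langle e_0,e_0\rangle}\,e_0=\beta^2 e_0$, using $\langle e_0,e_0\rangle=(ce_0)^2$ and $\langle ve_0,ve_0\rangle=\|ve_0\|^2$, and linearity extends this to all of $V^1$. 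Substituting both identities,
\begin{equation*}
T(F\to F')\circ v=\bigl[\,1+(\gamma-1)-\gamma\beta^2\,\bigr]Rv=\gamma(1-\beta^2)Rv=\tfrac{1}{\gamma}Rv,
\end{equation*}
the last equality being $\gamma(1-\beta^2)=(1-\beta^2)^{1/2}=1/\gamma$. Hence $T(F\to F')\circ v(F'|F)=-\tfrac{1}{\gamma}v(F|F')$, which proves the second assertion with the explicit value $\alpha=1/\gamma$ demanded by the third; positivity of $\alpha$ is clear since $\gamma\geq 1$. The only real obstacle is tracking the compositions $P\circ v$ and $J\circ v$ accurately—once they are identified, everything collapses through the relation $\gamma(1-\beta^2)=1/\gamma$.
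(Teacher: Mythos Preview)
Your proof is correct and follows essentially the same route as the paper: both invoke Theorem~\ref{thm_vector-transform} for $v(F|F')=Rv$, $v(F'|F)=-v$, and the explicit form of $T(F\to F')$, then reduce the Poincar\'e computation to the identities $P\circ v=v$ and $J\circ v=\beta^2\,\mathrm{id}_{V^1}$ (the paper states the equivalent $v\circ J\circ v=\beta^2 v$) before collapsing via $\gamma(1-\beta^2)=1/\gamma$. Your version is slightly more explicit in treating the Galilean case and in justifying $J\circ v=\beta^2\,\mathrm{id}_{V^1}$, but the argument is the same.
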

\begin{proof}
We prove the Lorentzian case, because the Galilean case is analogous and simpler:

Theorem \ref{thm_vector-transform} tells us that $v(F|F')=Rv$ and $v(F'|F)=-v$ and therefore $\|v(F|F')\|=\|v(F'|F)\|$. Since $P\circ v=v$ and
\begin{equation*}
    v\circ J\circ v=\frac{\|v\|}{c}\frac{\|v\|}{c}v\in L(V^1,V^1)
\end{equation*}
we obtain the desired result:
\begin{flalign*}
&T(F\to F')\circ v(F'|F)=-T(F\to F')\circ v=-(R\circ v)-(\gamma-1)(R\circ P\circ v)+\gamma (R\circ v\circ J\circ v)&\\
&=-\gamma Rv+\gamma RvJv=-\gamma\left[1-\frac{\|v\|}{c}\frac{\|v\|}{c}\right]Rv=-\frac{Rv}{\gamma}=-\frac{v(F|F')}{\gamma}&
\end{flalign*}
\end{proof}
\begin{rem}
In Newtonian Mechanics, we may assume that $\dif (F'\circ F^{-1})$ is a Galilean boost for each pair of reference frames - the reason is that Galilean boosts form a group. Then the equations
\begin{align*}
    T(F\to F')\circ v(F'|F)&=-v(F|F')\\
    v(P|F')&=T(F\to F')\circ v(P|F)+v(F|F')
\end{align*}
(where $P$ is a world line) simplify to
\begin{align*}
    v(F'|F)&=-v(F|F')\\
    v(P|F')&=v(P|F)+v(F|F')
\end{align*}
since $T(F\to F')=1$ for each pair of reference frames. But there is no physical motivation for this assumption. In fact, the assumption can be misleading: We then get the impression that velocity reciprocity means that
\begin{equation*}
    \forall F,F':v(F'|F)=-v(F|F'),
\end{equation*}
but since Lorentz boosts do not form a group, it then seems like velocity reciprocity does not hold true in the context of Special Relativity.
\end{rem}
\section{Interpretation of boosts}
\begin{thm}
Let $F$ and $F'$ be two reference frames on $M$ such that $F'\circ F^{-1}$ is a Galilean transformation, $\phi$ and $\phi'$ are orthonormal coordinates for $F$ and $F'$. If $e$ and $e'$ are the bases of $V^3$ associated to $\phi$ and $\phi'$, then the differential of
\begin{equation*}
    \phi'\circ F'\circ F^{-1}\circ\phi^{-1}\colon\mathbb R^4\to\mathbb R^4
\end{equation*}
is a boost if and only if $F$ observes that both bases are the same, i.e.
\begin{equation*}
    \forall i:T(F'\to F)e_i{}'=e_i.
\end{equation*}
\end{thm}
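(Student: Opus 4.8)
The plan is to reduce the statement to a single block-matrix computation of the differential of the coordinate transition function, and then to translate the resulting condition on the spatial block into the stated condition on $T(F'\to F)$. The physical reading "$F$ observes that both bases are the same" is captured by $T(F'\to F)e_i{}'=e_i$ precisely because $e_i{}'$ is a vector of $F'$ and $T(F'\to F)$ is the map by which $F$ represents vectors coming from $F'$; so the entire content is to check that this coincides with the spatial block of the transition differential being trivial.

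First I would use that $\phi$, $\phi'$ and $F'\circ F^{-1}$ are all affine, so that the affine chain rule recalled in the proof of Lemma \ref{lem_Galilean} gives
\begin{equation*}
    \dif(\phi'\circ F'\circ F^{-1}\circ\phi^{-1})=\dif\phi'\circ\dif(F'\circ F^{-1})\circ(\dif\phi)^{-1}.
\end{equation*}
Since an orthonormal coordinate system handles the time and space factors separately, its differential is block diagonal: writing $e^0\in L(V^1,\mathbb R)$ for the time functional fixed by the (common) unit of time and $e,e'\in L(V^3,\mathbb R^3)$ for the isomorphisms associated to $e_1,e_2,e_3$ and $e_1{}',e_2{}',e_3{}'$, we have $\dif\phi=\begin{pmatrix}e^0&0\\0&e\end{pmatrix}$ and $\dif\phi'=\begin{pmatrix}e^0&0\\0&e'\end{pmatrix}$. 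Using Lemma \ref{Galilean_equivalent} to write $\dif(F'\circ F^{-1})=\begin{pmatrix}1&0\\w&R\end{pmatrix}$ for the rotation $R$, multiplying the three block matrices yields
\begin{equation*}
    \dif(\phi'\circ F'\circ F^{-1}\circ\phi^{-1})=\begin{pmatrix}1&0\\e'w(e^0)^{-1}&e'Re^{-1}\end{pmatrix}.
\end{equation*}
By definition a Galilean boost on $\mathbb R^4$ is a transformation of this block form whose spatial block is the identity, so the differential is a boost if and only if $e'Re^{-1}=I_3$.

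It then remains to unwind $e'Re^{-1}=I_3$. Applying both sides to $e(e_i)$ and using injectivity of $e'$ shows that this is equivalent to $Re_i=e_i{}'$ for every $i$. By Theorem \ref{thm_vector-transform}(3) we have $T(F\to F')=R$, and $T(F'\to F)$ is the rotation part of the inverse transition differential $\begin{pmatrix}1&0\\w&R\end{pmatrix}^{-1}$, namely $R^{-1}$; hence $Re_i=e_i{}'$ is the same as $R^{-1}e_i{}'=e_i$, i.e. $T(F'\to F)e_i{}'=e_i$ for all $i$, which is the assertion that $F$ sees the two spatial bases as identical. This closes the equivalence.

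I expect the only genuinely delicate point to be the bookkeeping that identifies $T(F'\to F)$ with $R^{-1}$ rather than with $R$ or $R^{t}$; everything else is routine block multiplication. A secondary subtlety worth one sentence is that no separate equal-orientation hypothesis is needed: if $e$ and $e'$ had opposite orientations then $e'Re^{-1}$ would lie outside $\mathrm{SO}(3)$ and could not equal $I_3$, while at the same time $R^{-1}e_i{}'=e_i$ would be impossible, so both sides of the equivalence would fail together.
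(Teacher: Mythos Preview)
Your proof is correct and follows essentially the same route as the paper: compute the differential of the coordinate transition as a product of block matrices, read off that the boost condition is precisely that the spatial block $e'Re^{-1}$ equals the identity, and then translate this into the condition on $T(F'\to F)$. The only cosmetic difference is that the paper packages the change of basis via an auxiliary isomorphism $A$ with $e_i{}'=Ae_i$ (so $e'=e\circ A^{-1}$), whereas you work directly with $e$ and $e'$; your explicit identification $T(F'\to F)=R^{-1}$ is in fact a bit cleaner than the paper's final line.
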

\begin{proof}
Let $A\in L(V^3,V^3)$ be the isomorphism defined by $\forall i: e_i{}'=Ae_i$, then
\begin{flalign*}
&\dif(\phi'\circ F'\circ F^{-1}\circ \phi^{-1})=\dif\phi'\circ\dif(F'\circ F^{-1})\circ\dif\phi^{-1}&\\
&=\begin{pmatrix}1&0\\0&e\circ A^{-1}\end{pmatrix}\begin{pmatrix}1&0\\R\circ v&R\end{pmatrix}\begin{pmatrix}1&0\\0&e^{-1}\end{pmatrix}&\\
&=\begin{pmatrix}1&0\\e\circ A^{-1}\circ R\circ v&e\circ A^{-1}\circ R\circ e^{-1}
\end{pmatrix}&
\end{flalign*}
and $e\circ A^{-1}\circ R\circ e^{-1}=1\Leftrightarrow A=R\Leftrightarrow A=T(F'\to F)$.
\end{proof}
\begin{rem}
The last theorem does not hold if $F'\circ F^{-1}$ is a Poincaré transformation: If $\dif (\phi'\circ R'\circ R^{-1}\circ \phi^{-1})$ happens to be a boost, the basis of $F'$ is not perceived as equal to the basis of $F$ by an observer in $F$: Set $\kappa\coloneqq \phi\circ R$, then this boils down to the fact that the vector space isomorphism
\begin{equation*}
    T(\kappa\to\kappa')\in L(\mathbb R^3,\mathbb R^3)
\end{equation*}
defined in the obvious way does not map the standard basis to the standard basis.
\end{rem}
\section{Inertial frames and accelerated frames}
\subsection*{Frames accelerated with respect to another frame}
Let $F$ be a frame on a set $M$ and $P\subset M$ a world line w.r.t. $F$. It is natural to wonder about the existence and uniqueness of a frame $F'$ (e.g. uniqueness up to an affine transformation $T$ with
\begin{equation*}
    \dif T=\begin{pmatrix}
        1&0\\
        0&R
    \end{pmatrix}
\end{equation*}
for some rotation $R$ on $V^3$) such that
\begin{enumerate}
    \item $P$ is a world line w.r.t. $F'$, $P$ is at rest in $F'$ and
    \item all points in $F'$ are world lines w.r.t. $F$.
\end{enumerate}
We consider two simple cases:
\begin{itemize}
    \item If $P$ has a constant velocity w.r.t. $F$ and the speed of $P$ is strictly smaller than $c$, then we have at least two mathematical options: We can compose $F$ with an appropriate Galilean or a Poincaré transformation to obtain a frame that even has a uniform velocity w.r.t. $F$.
    \item Suppose that $P$ performs a uniform circular motion in $F$. We intuitively expect to find 1. a frame $F'$ such that all points in $F'$ rotate around the same axis with the same angular velocity\footnote{Note that the velocity of $F'$ w.r.t. $F$ is not bounded from above: The speed of the points goes to infinity as we move away from the rotation axis.} and 2. a frame $F''$ such that all points in $F''$ have the same velocity w.r.t. $F$ (namely the velocity of $P$). In fact we can consider the composition of $F$ with appropriate transformations in the general kinematic group to construct such frames.
\end{itemize}
In summary, the general kinematics group is a natural extension of the Galilean group which allows us to consider accelerated frames in Newtonian mechanics: Two frames can be defined to be accelerated w.r.t. each other if the transition functions are in the general kinematic group, but not in the Galilean group. However, an accelerated frame is usually meant to be accelerated w.r.t. the inertial frames, which we haven't introduced yet. Strictly speaking the rest of this chapter does only apply to Newtonian mechanics, since we lack a similar extension of the Lorentz group.
\subsection*{Transformation of velocities and accelerations}
Let $F$ and $F'$ be two reference frames on a set $M$ such that the transition functions are elements of the general kinematic group. In the following we use the notation from definition \ref{def_GKG}. We will assume that $\phi$ is the identity on $A^1$ - i.e. $A^1=B^1$ and $T=T'$. (The differential of $\phi$ is the identity on $V^1$ anyways, so the generalization - if ever necessary - is trivial.)

That being said, let $w\subset M$ be a world line w.r.t. $F$ and $P\colon I\to A^3$ the position w.r.t. $F$. We assume that $P$ is twice differentiable, i.e. the velocity $v\colon I\to L(V^1,V^3)$ and the acceleration $a\colon I\to Q(V^1,V^3)$ exist. Note that $w$ is also a world line w.r.t. $F'$ and $P'=\Sigma P$ is the position w.r.t. $F'$. We make the following two technical assumptions:
\begin{itemize}
    \item $R$ and $R^{-1}$ are both differentiable w.r.t. the operator norm.
    \item For every $O\in A^3$ the function $\Sigma O\colon A^1\to B^3$ is twice differentiable.
\end{itemize}
In this situation $P'$ turns out to be twice differentiable and we now determine the relation between $v$ and $v'$ as well as $a$ and $a'$. To do so, we consider the functions
\begin{equation*}
    \Dot{\Sigma}\colon A^1\times A^3\to L(V^1,V^3)
\end{equation*}
and
\begin{equation*}
    \Ddot{\Sigma}\colon A^1\times A^3\to Q(V^1,V^3)
\end{equation*}
defined through the requirement that $\Dot{\Sigma}O=\dif(\Sigma O)$ and $\Ddot{\Sigma}O=\dif(\Dot{\Sigma}O)$ (i.e. $\Dot{\Sigma}O$ and $\Ddot{\Sigma}O$ are the velocity and the acceleration of $O$ w.r.t. $F'$). That being said, a first application of the product rule to $P'=\Sigma P$ yields
\begin{equation}\label{eq_velocity-transform}
    v'=Rv+\Dot{\Sigma}P.
\end{equation}
For later purposes it is useful to introduce $B\coloneqq \Dot{R}R^{-1}$ and differentiating (\ref{eq_velocity-transform}) yields
\begin{equation}\label{eq_acceleration-trans}
a'=Ra+2\Dot{R}v+\Ddot{\Sigma}P=Ra+2BRv+\Ddot{\Sigma}P.
\end{equation}
The choice of an origin allows us to further decompose the right-hand side of (\ref{eq_velocity-transform}) and (\ref{eq_acceleration-trans}): Suppose that $O\in A^3$ and set $x\coloneqq  P-O$, then we have $\Sigma P=\Sigma O+Rx$ and hence by the product rule:
\begin{align}\label{eq_velocity-transform_origin}
    v'&=Rv+BRx+\Dot{\Sigma}O\\\label{eq_acc-transform_origin}
a'&=Ra+2BRv+BBRx+\Dot{B}Rx
\end{align}
We finally use the following lemma to introduce the angular velocity of $F$ w.r.t. $F'$ and to rewrite these equations in a more common form.
\begin{lem}
    Let $A^1$ be an affine space with translation space $V^1$ and
\begin{equation*}
    U\colon A^1\to L(V^3,V^3)
\end{equation*}
a function with the following properties:
\begin{itemize}
    \item The image of $U$ is a subset of the orthogonal group.
    \item $U$ and $U^{-1}$ are both differentiable.
\end{itemize}
Let $\Dot{U}$ be the differential of $U$, i.e. $\Dot{U}=\dif U\colon A^1\times V^1\to L(V^3,V^3)$, then $\Dot{U}U^{-1}$ is anti-symmetric. Thus, if we fix an orientation of $V^3$ and a unit of length, then there is a unique
\begin{equation*}
    \omega\colon A^1\times V^1\to V^3
\end{equation*}
such that
\begin{equation*}
   \Dot{U}U^{-1}v=\omega\times v
\end{equation*}
for all functions $v\colon A^1\to V^3$.
\end{lem}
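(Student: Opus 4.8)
The plan is to reduce the statement to two classical facts: that differentiating an orthogonality constraint yields an anti-symmetric operator, and that the anti-symmetric endomorphisms of an oriented three-dimensional inner product space are exactly the cross-product operators. Having fixed a unit of length, $V^3$ carries a real-valued inner product, so every $T\in L(V^3,V^3)$ has a unique adjoint $T^*$ determined by $\langle Tx,y\rangle=\langle x,T^*y\rangle$ for all $x,y\in V^3$; non-degeneracy of the inner product is what makes $T^*$ well defined. In this language the hypothesis that $U$ takes values in the orthogonal group becomes the single identity $U^*U=\mathrm{id}$, and because each $U(t)$ is an invertible endomorphism of a finite-dimensional space this is equivalent to $U^{-1}=U^*$ and $UU^*=\mathrm{id}$.

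Next I would differentiate this identity. Fix $t\in A^1$ and a direction $h\in V^1$, and abbreviate $U':=\dot U(t,h)$. Since the inner product is bilinear the product rule applies, so differentiating the constant map $s\mapsto\langle U(t+sh)x,U(t+sh)y\rangle=\langle x,y\rangle$ at $s=0$ gives
\begin{equation*}
    \langle U'x,Uy\rangle+\langle Ux,U'y\rangle=0\qquad\text{for all }x,y\in V^3,
\end{equation*}
that is, $(U')^*U+U^*U'=0$. Setting $A:=\dot U U^{-1}=U'U^*$ and taking adjoints gives $A^*=U(U')^*$; multiplying the previous identity on the left by $U$ and on the right by $U^*$ and using $UU^*=\mathrm{id}$ turns it into $U(U')^*=-U'U^*$, i.e.\ $A^*=-A$. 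This proves that $\dot U U^{-1}$ is anti-symmetric at every $(t,h)$.

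Finally I would invoke the isomorphism $\mathfrak{so}(3)\cong V^3$. The chosen orientation together with the unit of length fixes a volume form and hence a cross product $V^3\times V^3\to V^3$, and the assignment $\omega\mapsto(\,v\mapsto\omega\times v\,)$ is a linear map from $V^3$ into $L(V^3,V^3)$ whose image consists precisely of the anti-symmetric operators. It is injective because $\omega\times v=0$ for all $v$ forces $\omega=0$, and since both $V^3$ and the space of anti-symmetric operators are three-dimensional it is a linear isomorphism. Applying its inverse pointwise to the anti-symmetric operator $\dot U(t,h)U(t)^{-1}$ produces the unique vector $\omega(t,h)$ with $\dot U(t,h)U(t)^{-1}=\omega(t,h)\times(\cdot)$, and collecting these over all $(t,h)$ yields the claimed function $\omega\colon A^1\times V^1\to V^3$; uniqueness is inherited from injectivity.

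I expect the only real care to be bookkeeping rather than genuine difficulty: keeping track of adjoints taken with respect to the originally $W^1$-valued inner product and justifying the product rule for the bilinear pairing, both of which become routine once a unit of length is fixed and one passes to the induced real inner product. The cross-product isomorphism is standard, the single point worth stating explicitly being that $\omega\times(\cdot)=0\Rightarrow\omega=0$, which is what secures both the existence and the uniqueness of $\omega$.
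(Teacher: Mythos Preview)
Your argument is correct and follows essentially the same route as the paper: differentiate the orthogonality identity $\langle Ux,Uy\rangle=\langle x,y\rangle$ with the product rule to obtain $\langle \dot U x,Uy\rangle+\langle Ux,\dot U y\rangle=0$, then pass to $\dot U U^{-1}$ (the paper does this by substituting $U^{-1}v,U^{-1}w$ for $v,w$, you by multiplying with $U$ and $U^*$), and finally invoke the cross-product identification of $\mathfrak{so}(3)$ with $V^3$. The only cosmetic difference is that you phrase the first step via adjoints while the paper stays with the inner product throughout, and you spell out the $\mathfrak{so}(3)\cong V^3$ isomorphism explicitly whereas the paper simply asserts it.
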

\begin{proof}
Let $v$ and $w$ be two differentiable vector-valued functions on $A^1$, then
\begin{equation*}
    \langle v,w\rangle=\langle Uv,Uw\rangle
\end{equation*}
and hence by the product rule
\begin{flalign*}
    &\langle \dif v,w\rangle+\langle v,\dif w\rangle=\dif\langle v,w\rangle=\dif\langle Uv,Uw\rangle=\langle \Dot{U}v+U\dif v,Uw\rangle+\langle Uv,\Dot{U}w+U\dif w\rangle&\\
    &=\langle \Dot{U}v,Uw\rangle+\langle U\dif v,Uw\rangle+\langle Uv,\Dot{U}w\rangle+\langle Uv,U\dif w\rangle&
\end{flalign*}
Because $U$ is orthogonal, this is equivalent to
\begin{equation*}
    0=\langle \Dot{U}v,Uw\rangle+\langle Uv,\Dot{U}w\rangle.
\end{equation*}
Since $U$ is invertible and $U^{-1}\colon A^1\to L(V^3,V^3)$ is differentiable (the differential of $U^{-1}$ equals $U^{-1}\Dot{U}U^{-1}$), we can consider the differentiable functions $U^{-1}v$ and $U^{-1}w$ and we obtain
\begin{equation*}
    0=\langle \Dot{U}U^{-1}v,w\rangle+\langle v,\Dot{U}U^{-1}w\rangle.
\end{equation*}
\end{proof}
The function
\begin{equation*}
    \omega=\omega(F|F')\colon A^1\to L(V^1,V^3)
\end{equation*}
associated to $B$ through the last lemma is called the angular velocity of $F$ w.r.t. $F'$. We use it to rewrite (\ref{eq_velocity-transform_origin}) and (\ref{eq_acc-transform_origin}):
\begin{align}\label{eq_velo_trans_omega}
    v'&=Rv+\omega\times Rx+\Dot{\Sigma}O\\\label{eq_acc_trans_omega}
    a'&=Ra+2\omega\times Rv+\omega\times(\omega\times Rx)+\Dot{\omega}\times Rx+\Ddot{\Sigma}O
\end{align}
\subsection*{Inertial frames in Newtonian mechanics}
To define inertial frames, we fix a set of reference frames on a set $M$ such that all transition functions are elements of the general kinematic group. Since the Galilean group is a subgroup, we can introduce an equivalence relation through the definition that two frames are equivalent if and only if the transition functions are Galilean transformations.

Now the purpose of Newton's first law is to define inertial frames, i.e. a distinguished equivalence class:\\
Roughly speaking, the laws of physics discussed in Newtonian mechanics are only invariant under Galilean transformations, so the set of inertial frames can be defined to be precisely the equivalence class where these laws hold true. We use an example to illustrate the idea and to show how our formulation fits together with the original formulation of Newton's first and second law in terms of forces:

First of all, we postulate that a finite set of world lines is given.\footnote{Since the transition functions are in the general kinematic group, it makes sense to talk about world lines without referring to a reference frame} Next, we postulate the existence of a frame with the property that we can find an assignment of time-independent masses to the world lines such that the the representations of the world lines w.r.t. to the frame form a solution of the ODE known as the $n$-body problem of Newtonian mechanics. Such a frame is called inertial. Since (\ref{eq_acc_trans_omega}) reduces to $a'=Ra$ for Galilean transformations, all frames in its equivalence class are inertial as well and the masses are independent of the representative. Furthermore (\ref{eq_acc_trans_omega}) suggests that we can not find another equivalence class with inertial frames, i.e. the inertial frames form precisely one equivalence class.

If we fix a frame, then we can assign two forces to each world line: The actual force - mass times acceleration - and the force predicted by the ODE. The two forces are equal if the frame is inertial. If we interpret the forces mentioned in Newton's first and second law as the forces predicted by the ODE, then these laws are nothing but a characterization of inertial frames (and consistent with our definition):
\begin{quote}
\begin{enumerate}
    \item Every body continues in its state of rest, or of uniform motion in a straight line, unless it is compelled to change that state by forces impressed upon it.
    \item The change of motion of an object is proportional to the force impressed; and is made in the direction of the straight line in which the force is impressed.
\end{enumerate}
    
\end{quote}

\chapter{Special Relativity and Electromagnetism}
From now on we consider a set of reference frames on a set $M$ such that all transition functions are Poincaré transformations.
\section{Proper Time}
\begin{rem}\label{rem_def_integral}
Let $A^1$ be the affine space associated to some reference frame $R$. Since the translation space $V^1$ is oriented, $A^1$ has an obvious total order. Moreover, given $x,y\in A^1$ with $x<y$ we can consider the interval $I\coloneqq [x,y]$ and its order topology $T$. Let $\Sigma$ be the Borel $\sigma$-algebra (i.e. the smallest $\sigma$-algebra containing $T$), then there is a unique locally finite vector-valued measure $\mu\colon \Sigma\to V^1$ such that $\mu([p,q])=q-p$ for all $p,q\in I$ with $p\leq q$. If $\phi\colon I\to\mathbb R$ is continuous, then $\phi$ is bounded (because $(I,T)$ is a compact space). Hence $\phi\in \mathcal{L}^1(I,\Sigma,\mu)$ and
\begin{equation*}
    \int_x^y\phi\in V^1
\end{equation*}
is our notation for its integral.
\end{rem}
\begin{dfn}
Let $W\subset M$ be a world line, $R$ a reference frame and $t\colon M\to A^1$ the projection associated to $R$. According to our definition of world lines the image of $W$ under $t$ is an interval $I\subset A^1$ and $t\colon W\to I$ is bijective. Hence, $W$ inherits an ordering which is independent of $R$ since the differentials of Poincaré transformations are orthochronous. That being said, the \textbf{proper time} associated to a world line is the function
\begin{align*}
    W\times W&\to V^1\\
    (x,y)&\mapsto y-x
\end{align*}
defined as follows: Suppose that $x<y$ and let $e_0$ be a basis of $V^1$. Then the integral
\begin{equation*}
    y-x\coloneqq \int_{t(x)}^{t(y)}\sqrt{\frac{\dif Xe_0,\dif Xe_0}{\langle e_0,e_0\rangle}}
\end{equation*}
defined in remark \ref{rem_def_integral} is independent of $e_0$ and $R$. If $y\leq x$, then $y-x\coloneqq -(x-y)$.
\end{dfn}
\begin{proof}
To be precise, the following calculation involves two measure spaces $(I,\Sigma,\mu)$ and $(I',\Sigma',\mu')$. In addition, we make an abuse of notation by considering the obvious bijections $t\colon W\to I$ and $t\colon I'\to I$. As shown in the second part of the proof of theorem \ref{thm_covariance_WL} we have that
\begin{equation*}
    \sqrt{\frac{\displaystyle\dif X'e_0,\dif X'e_0}{\langle e_0,e_0\rangle}}=\sqrt{\frac{\dif Xe_0,\dif Xe_0}{\langle e_0,e_0\rangle}\circ t}\frac{\dif t}{\dif t'}
\end{equation*}
and hence the proof boils down to a change of variables:
\begin{flalign*}
&\int_{t'(x)}^{t'(y)}\sqrt{\frac{\displaystyle\dif X'e_0,\dif X'e_0}{\langle e_0,e_0\rangle}}=\int_{t'(x)}^{t'(y)}\sqrt{\frac{\dif Xe_0,\dif Xe_0}{\langle e_0,e_0\rangle}\circ t}\frac{\dif t}{\dif t'}=\int_{t(x)}^{t(y)}\sqrt{\frac{\dif Xe_0,\dif Xe_0}{\langle e_0,e_0\rangle}}&
\end{flalign*}
\end{proof}
\section{The Riemannian Metric}
Let $F$ and $R$ be two reference frames on $M$ and consider the Lorentz transformation $\Lambda\coloneqq \dif(R\circ F^{-1})$. Furthermore, suppose that $p\in M$ and $v,w\in T_pM$. Then
\begin{equation*}
    \eta(\dif R_pv,\dif R_pw)=\eta(\Lambda\dif F_pv,\Lambda\dif F_pw)=\eta(\dif F_pv,\dif F_pw)
\end{equation*}
and hence
\begin{equation*}
    \eta_p(v,w)\coloneqq \eta(\dif F_pv,\dif F_pw)
\end{equation*}
does not depend on $F$.
\section{4-vectors}
\begin{dfn}
Let $W\subset M$ be a world line, $i\colon W\to M$ the obvious inclusion and $F$ a reference frame. Note that proper time allows us to differentiate functions from $W$ to some affine space.
\begin{itemize}
    \item For all $p\in W$ the linear operator
\begin{equation*}
    U_p\coloneqq (\dif F_p)^{-1}\circ\dif(F\circ i)_p\in L(V^1,T_pM)
\end{equation*}
is called the \textbf{4-velocity} at $p$ and is clearly independent of the reference frame by our definition of the tangent bundle/by the chain rule.
\item For all $p\in W$ the quadratic function
\begin{align*}
    A_p\colon V^1&\to T_pM\\
    u&\mapsto (\dif F_p)^{-1}\dif(\dif(F\circ i)u)_pu
\end{align*}
is called the \textbf{4-acceleration} at $p$. Since all transitions functions are affine, $A_p$ is independent of $F$: If $R$ is another reference frame, then the differential of the transition function $R\circ F^{-1}$ is constant and hence
\begin{equation*}
  \dif(\dif (R\circ i)u)=\dif(\dif (R\circ F^{-1}\circ F\circ i)u)=\dif(\dif (R\circ F^{-1})\dif(F\circ i)u)=\dif (R\circ F^{-1})\dif(\dif(F\circ i)u).
\end{equation*}
\item Furthermore, if a mass $m$ is associated to $W$, then $f\coloneqq mA$ is called the \textbf{4-force}.
\end{itemize}
\end{dfn}
\begin{dfn}\label{def-classical-force}
Suppose a world line $W$, a mass $m$ and a reference frame $F$ are given. Furthermore, let $\boldsymbol{X}\colon I\to A^3$ be the trajectory w.r.t. $F$. Then its differential
\begin{equation*}
    \boldsymbol{V}\coloneqq \dif\boldsymbol{X}\colon I\to L(V^1,V^3)
\end{equation*}
is called the velocity w.r.t. $F$ and
\begin{align*}
    \gamma\coloneqq \left[1-\frac{\|\boldsymbol{V}\|}{c}\frac{\|\boldsymbol{V}\|}{c}\right]^{-1/2}\colon I\to[1,\infty[
\end{align*}
is called the Lorentz factor. Furthermore,
\begin{equation*}
    \boldsymbol{P}\coloneqq \gamma m\boldsymbol{V}\colon I\to L(V^1,V^3)
\end{equation*}
is called the momentum w.r.t. $F$ and
\begin{equation*}
    \boldsymbol{F}\coloneqq \mathrm{d}\boldsymbol{P}\colon I\to Q(V^1,V^3)
\end{equation*}
is called the force w.r.t. $F$.
\end{dfn}
\begin{lem} Suppose a world line $W$, a mass and a reference frame $F$ are given. Furthermore, let $t\colon M\to A^1$ be the projection associated to $F$. According to the definition of world lines we obtain a bijective function $t\colon W\to I$ onto some interval $I\subset A^1$. That being said, we have the following representation of the 4-velocity and the 4-force w.r.t. $F$: Let $u$ be a basis of $V^1$, then
\begin{equation}\label{eq_4velocity}
    \dif F\circ Uu\circ t^{-1}=\gamma(u,\boldsymbol{V}u)
\end{equation}
and
\begin{equation}\label{eq_4force}
    \dif F\circ fu\circ t^{-1}=\gamma(u\frac{\langle\boldsymbol{F}u,\boldsymbol{V}u\rangle}{\langle u,u\rangle},\boldsymbol{F}u)
\end{equation}
where the sections $Uu\colon W\to TM$ and $fu\colon W\to TM$ are defined in the obvious way.
\end{lem}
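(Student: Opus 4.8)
The plan is to reduce both identities to proper-time derivatives of the single map $F\circ i$, exploiting that $\dif F_p$ is by definition a vector space isomorphism, so that $\dif F_p(U_pu)=\dif(F\circ i)_pu$ and $\dif F_p(f_pu)=m\,\dif(\dif(F\circ i)u)_pu$, the two factors $\dif F_p$ and $(\dif F_p)^{-1}$ cancelling. Writing $X\coloneqq F\circ i\circ t^{-1}\colon I\to A^1\times A^3$ for the representation of the world line w.r.t. $F$ (as in the proof of Theorem \ref{thm_covariance_WL}), everything can then be expressed through $X$ once I know how proper time relates to the coordinate time $t$. Since $\dif F\circ Uu$ and $\dif F\circ fu$ take values in the fixed space $V^1\oplus V^3$, all intermediate derivatives can be taken in this fixed space and only glued back into $T_pM$ at the very end.

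First I would establish the conversion factor between the two parametrisations of $W$. The first component of $X$ is the inclusion $I\hookrightarrow A^1$ and the second is the trajectory $\boldsymbol X$, so $\dif Xu=(u,\boldsymbol Vu)$. Feeding this into the definition of proper time and using $\eta\big((u,\boldsymbol Vu),(u,\boldsymbol Vu)\big)=\langle u,u\rangle-\langle \boldsymbol Vu,\boldsymbol Vu\rangle$ shows that the integrand defining proper time equals $\gamma^{-1}$; by the fundamental theorem of calculus the derivative of proper time with respect to $t$ is $\gamma^{-1}$, hence the derivative of $t$ with respect to proper time is $\gamma$. Differentiating $F\circ i=X\circ t$ with respect to proper time and applying the chain rule then gives $\dif(F\circ i)u=\gamma\,\dif Xu=\gamma(u,\boldsymbol Vu)$ after composing with $t^{-1}$, which is exactly \eqref{eq_4velocity}.

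Next I would obtain \eqref{eq_4force} by differentiating this representation once more with respect to proper time and multiplying by $m$. Setting $g\coloneqq\gamma(u,\boldsymbol Vu)\colon I\to V^1\oplus V^3$, the chain rule through $t$ produces one further factor $\gamma$, so $\dif F\circ fu\circ t^{-1}=m\gamma\,\dif g(u)$. The product rule splits $\dif g(u)$ into its two components: the $V^3$-component is $\dif(\gamma\boldsymbol Vu)(u)=\tfrac1m\dif(\boldsymbol Pu)(u)=\tfrac1m\boldsymbol Fu$, because $\boldsymbol P=\gamma m\boldsymbol V$ and $\boldsymbol F=\dif\boldsymbol P$ is read as the quadratic map $\boldsymbol Fu=\dif(\boldsymbol Pu)(u)$, and this yields the claimed second entry $\gamma\boldsymbol Fu$; the $V^1$-component is $(\dif\gamma(u))\,u$, giving a first entry $\gamma m(\dif\gamma(u))u$.

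The one non-formal step, and the main obstacle, is the scalar identity
\begin{equation*}
m\,\dif\gamma(u)=\frac{\langle\boldsymbol Fu,\boldsymbol Vu\rangle}{\langle u,u\rangle},
\end{equation*}
which turns this first entry into the form stated in \eqref{eq_4force}; it is the relativistic power relation in disguise. I would prove it by differentiating $\gamma^{-2}=1-\langle\boldsymbol Vu,\boldsymbol Vu\rangle/\langle u,u\rangle$ to get $\langle\dif(\boldsymbol Vu)(u),\boldsymbol Vu\rangle=\langle u,u\rangle\gamma^{-3}\dif\gamma(u)$, then expanding $\boldsymbol Fu=\dif(\boldsymbol Pu)(u)=m\big[(\dif\gamma(u))\boldsymbol Vu+\gamma\,\dif(\boldsymbol Vu)(u)\big]$, taking the inner product with $\boldsymbol Vu$, and substituting $\langle\boldsymbol Vu,\boldsymbol Vu\rangle=(1-\gamma^{-2})\langle u,u\rangle$; the two $\gamma$-powers combine to give precisely $m\langle u,u\rangle\dif\gamma(u)$. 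The delicate part is bookkeeping: the inner products are valued in the non-negative extension of $W^1$ rather than in $\mathbb R$, so I must verify that every ratio appearing is a genuine (unit-independent) real number before differentiating, and that the quadratic-map interpretation of the second derivatives $\boldsymbol F$ and $\dif(\boldsymbol Vu)$ is applied consistently.
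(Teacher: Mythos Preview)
Your argument is correct and follows the same overall architecture as the paper: establish $\dif t/\dif\tau=\gamma$ from the definition of proper time, rewrite $F\circ i$ as $X\circ t$, push the chain rule through to get \eqref{eq_4velocity}, differentiate once more and split into components to reduce \eqref{eq_4force} to the scalar ``power relation'' $m\,\dif\gamma(u)=\langle\boldsymbol Fu,\boldsymbol Vu\rangle/\langle u,u\rangle$.

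The one genuine difference is how this scalar identity is obtained. You differentiate $\gamma^{-2}=1-\langle\boldsymbol Vu,\boldsymbol Vu\rangle/\langle u,u\rangle$, expand $\boldsymbol Fu$ via the product rule, and combine the pieces by hand. The paper instead observes that $\eta(\boldsymbol Uu,\boldsymbol Uu)=\langle u,u\rangle$ is constant, so $\eta\bigl(\dif(m\boldsymbol Uu)u,\boldsymbol Uu\bigr)=0$; reading off the time component of this Minkowski-orthogonality relation gives the identity in one line. Your route is more elementary (no appeal to the 4-dimensional metric beyond what was already needed for proper time), while the paper's route is more geometric and explains \emph{why} the temporal component must take this form---it is the statement that the 4-force is $\eta$-orthogonal to the 4-velocity. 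Both are perfectly valid; the paper's version is slightly slicker, yours makes the $3$-vector content more explicit.
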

\begin{proof} We use the following two facts:
\begin{itemize}
    \item Set $\tau\coloneqq t^{-1}\colon I\to W$. According to our definition of proper time and the fundamental theorem of calculus we have that $\dif\tau/\dif t=1/\gamma$. Thus $\dif t/\dif\tau=\gamma\circ t$ according to the inverse function rule.
    \item Let $X\colon I\to A^1\times A^3$ be the 4-position w.r.t. $F$, then $X=F\circ i\circ\tau$.
\end{itemize}
Now the proof of (\ref{eq_4velocity}) is straightforward:
\begin{flalign*}
&\dif F\circ Uu\circ \tau=\dif F\circ (\dif F)^{-1}\circ\dif(F\circ i)u\circ \tau=\dif(F\circ i)u\circ \tau=\dif(F\circ i\circ \tau\circ t)u\circ \tau&\\
&=\dif(X\circ t)u\circ \tau=(\dif X)u\frac{(\dif t)u}{u}\circ \tau=(\dif X)u\frac{\dif t}{\dif\tau}\circ \tau=\gamma(\dif X)u=\gamma(u,\boldsymbol{V}u)&
\end{flalign*}
Next, we want to prove (\ref{eq_4force}). Set $\boldsymbol{U}\coloneqq \gamma\dif X$, then the equation above implies that
\begin{flalign*}
 &\dif F\circ mAu\circ \tau=\dif F\circ(\dif F)^{-1}\circ m\dif(\dif(F\circ i)u)u\circ \tau
=m\dif(\dif(F\circ i)u)u\circ \tau&\\
&=m\dif(\dif(F\circ i)u\circ \tau\circ t)u\circ \tau=\dif(m\boldsymbol{U}u\circ t)u\circ \tau=\gamma\dif(m\boldsymbol{U}u)u.&
\end{flalign*}
Furthermore
\begin{flalign*}
\dif(m\boldsymbol{U}u)u=\dif(\gamma mu,\gamma m\boldsymbol{V}u)u=(\dif(\gamma mu)u,\dif(\gamma m\boldsymbol{V}u)u)=(\dif(\gamma mu)u,\boldsymbol{F}u).
\end{flalign*}
Set $x\coloneqq \dif(\gamma mu)u\colon I\to V^1$, then all that remains to be shown is that
\begin{equation*}
    \gamma\frac{\langle\boldsymbol{F}u,\boldsymbol{V}u\rangle}{\langle u,u\rangle}u=x.
\end{equation*}
Note that
\begin{equation*}
    \eta(\boldsymbol{U}u,\boldsymbol{U}u)=\frac{\langle u,u\rangle-\langle\boldsymbol{V}u,\boldsymbol{V}u\rangle}{1-\frac{\langle\boldsymbol{V}u,\boldsymbol{V}u\rangle}{\langle u,u\rangle}}=\langle u,u\rangle
\end{equation*}
i.e. the function $\eta(\boldsymbol{U}u,\boldsymbol{U}u)\colon I\to W^1$ is constant. By the product rule
\begin{equation*}
    0=\eta(\dif(m\boldsymbol{U}u)u,\boldsymbol{U}u)=\langle x,\gamma u\rangle-\langle\gamma\boldsymbol{F}u,\gamma\boldsymbol{V}u\rangle
\end{equation*}
or equivalently $\langle x,u\rangle=\gamma\langle\boldsymbol{F}u,\boldsymbol{V}u\rangle$. This implies the desired result:
\begin{equation*}
    x=\frac{\langle x,u\rangle}{\langle u,u\rangle}u=\gamma\frac{\langle\boldsymbol{F}u,\boldsymbol{V}u\rangle}{\langle u,u\rangle}u
\end{equation*}
\end{proof}
\section{Covariant Electromagnetism}
We begin our reformulation of classical electromagnetism. The exposure in \cite{frankel_2011} has been an important inspiration.

From now on we assume that a set of units has been fixed and all quantities are defined w.r.t. these units.  For example, for each $p\in M$ the metric
\begin{equation*}
    \eta_p\colon T_pM\times T_pM\to W^1
\end{equation*}
can be identified with a physical quantity
\begin{equation*}
    \widehat{\eta}_p\colon\mathbb L\to L(T_pM,T_pM^*)
\end{equation*}
since each unit of length defines a unit of area and hence a basis of $W^1$. See remark \ref{rem_dimensions} for the precise definition of physical quantities and a discussion of the invariance of the theory under a change of units.
\begin{dfn}\label{Decomposition}
Let $n$ be an integer, $0<n<4$ and $p\in M$. Given a reference frame $R$ and a unit of length $u$, the vector space isomorphism
\begin{equation*}
    \mathcal{R}=\mathcal{R}_n\colon\Lambda^n(T_pM^*)\to\Lambda^{n-1}(V^*)\oplus\Lambda^n(V^*)
\end{equation*}
(with $V=V^3$) is defined as follows:
\begin{itemize}
    \item Firstly, note that there is a unique unit of time $e_0$ such that $ce_0=l$. In addition, the vector space isomorphism
    \begin{equation*}
        \dif R_p\in L(T_pM,V^1\oplus V^3)
    \end{equation*}
    allows to identify $V^1$ and $V^3$ with subspaces of $T_pM$. That being said, we simply define $e^0\in T_pM^*$ through the requirement that the restriction to $V^3$ is equal to zero.
    \item Now  consider some $\alpha\in\Lambda^k(T_pM^*)$ and let
\begin{equation*}
    i\colon\Lambda(V^*)\to\Lambda(T_pM^*)
\end{equation*}
be the canonical inclusion defined by the reference frame. Since
\begin{equation*}
    x\coloneqq e_0\imult \alpha
\end{equation*}
and
\begin{equation*}
    y\coloneqq \alpha-e^0\wedge x
\end{equation*}
are both inside the image of $i$,
\begin{equation*}
\mathcal{R}\alpha\coloneqq (i^{-1}x,i^{-1}y)
\end{equation*}
is well-defined.
\end{itemize}
\end{dfn}
\begin{rem}\label{rem_def_charge_fields}
From now on we assume that we are given the following data:
\begin{itemize}
    \item A reference frame $R$.
    \item Two real-valued and positive physical quantities\footnote{If a real-valued physical quantity is positive w.r.t. to one set of units, then it is positive for all sets of units.} $k$ and $\alpha$ with arbitrary dimensions. In particular, $k$ and $\alpha$ may be dimensionless, e.g. $k=\alpha=1$.
    \item A set of world lines $W$ with a mass and a charge associated to each world line in $W$.
\end{itemize}
We define charge through the requirement that Coulomb's law takes the form
\begin{equation*}
    \|\boldsymbol{F}\|=\frac{k}{4\pi}\frac{q}{d}\frac{q'}{d}
\end{equation*}
where $d$ is the distance between $q$ and $q'$. Note that the dimension of charge depends on the dimension of $k$. In order to introduce the electromagnetic field we make the idealized assumption that there exist two unique vector fields $\boldsymbol{E}$ and $\boldsymbol{B}$ from $M$ to $V^3$ such that
\begin{equation*}
   \boldsymbol{F}=q(\boldsymbol{E}+\frac{\alpha}{c}\boldsymbol{v}\times\boldsymbol{B})
\end{equation*}
for all world lines in $W$. (The dimensions of $\boldsymbol{E}$ and $\boldsymbol{B}$ depend on the dimensions of $k$ and $\alpha$ and are only equal if $\alpha$ is a speed.) We can prove the covariance of this assumption, i.e. if $R'$ is another reference frame, then there exist unique vector fields $\boldsymbol{E}'$ and $\boldsymbol{B'}$ such that
\begin{equation*}
   \boldsymbol{F}'=q(\boldsymbol{E}'+\frac{\alpha}{c}\boldsymbol{v}'\times\boldsymbol{B}')
\end{equation*}
for all world lines in $W$. In fact this is an immediate consequence of the following theorem:
\end{rem}
\begin{cor}[Covariance of the Lorentz force]\label{Cor-ExperimentalData}
TFAE in the situation of remark \ref{rem_def_charge_fields}:
\begin{itemize}
    \item There is a unique 2-form $F$ such that
\begin{equation*}
    f^\flat=q\frac{\alpha}{c}U\imult F
\end{equation*}
for all world lines in $W$.
\item There is a unique pair of vector fields $(\boldsymbol{E},\boldsymbol{B})$ such that
\begin{equation*}
   \boldsymbol{F}=q(\boldsymbol{E}+\frac{\alpha}{c}\boldsymbol{v}\times\boldsymbol{B})
\end{equation*}
for all world lines in $W$.
\end{itemize}
In case of existence and uniqueness,
\begin{equation*}
    F=\mathcal{R}^{-1}(-\boldsymbol{E}^\flat/\alpha,*\boldsymbol{B}^\flat).
\end{equation*}
\end{cor}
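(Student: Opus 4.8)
The plan is to reduce the four-dimensional, coordinate-free identity $f^\flat=q\frac{\alpha}{c}\,U\imult F$ to a purely spatial statement in the adapted basis $e_0,e_1,e_2,e_3$ of Lemma \ref{time_component}, and then to read off the classical Lorentz force law from it. The essential observation is that, at each point $p\in M$, the map sending a $2$-form $F$ to the pair $(\boldsymbol{E},\boldsymbol{B})$ is a composition of vector space isomorphisms: the decomposition $\mathcal{R}=\mathcal{R}_2$ of Definition \ref{Decomposition}, the musical isomorphism $\flat$ on $V^3$, and the Hodge star $*$ on $V^3$. Since each is a bijection, specifying $F$ is equivalent to specifying $(\boldsymbol{E},\boldsymbol{B})$, and the explicit formula $F=\mathcal{R}^{-1}(-\boldsymbol{E}^\flat/\alpha,*\boldsymbol{B}^\flat)$ merely records this identification. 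Consequently the two existence-and-uniqueness statements will be equivalent as soon as I show that, for a fixed world line and a fixed point, the $2$-form identity and the classical identity correspond to one another under this bijection.

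First I would write $F=\mathcal{R}^{-1}(\boldsymbol{e},\boldsymbol{b})$, so that $F=e^0\wedge\boldsymbol{e}+\boldsymbol{b}$ (suppressing the canonical inclusion $i$), where $\boldsymbol{e}\in\Lambda^1(V^*)$ and $\boldsymbol{b}\in\Lambda^2(V^*)$ are the spatial electric and magnetic parts. Evaluating both $L(V^1,\cdot)$-valued sides on $e_0$ and abbreviating $U:=Ue_0$, $\boldsymbol{v}:=\boldsymbol{V}e_0$, the representation (\ref{eq_4velocity}) gives $\dif R\,U=\gamma(e_0,\boldsymbol{v})$. I would then compute $U\imult F$ directly from the derivation rules $e_0\imult e^0=1$ and $e^0|_{V^3}=0$, together with $e_0\imult\boldsymbol{e}=e_0\imult\boldsymbol{b}=0$ (because $\boldsymbol{e},\boldsymbol{b}$ are spatial), obtaining
\begin{equation*}
    U\imult F=\gamma\bigl(\boldsymbol{e}-\boldsymbol{e}(\boldsymbol{v})\,e^0+\boldsymbol{v}\imult\boldsymbol{b}\bigr),
\end{equation*}
whose $e^0$-coefficient and $V^3$-part I would treat separately.

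The key structural point is that both sides of the identity annihilate $U$: the right side because $(U\imult F)(U)=F(U,U)=0$ by antisymmetry, and the left side because $\eta(U,U)$ is constant (as shown in the proof of the representation lemma), so that $\eta(f,U)=0$. A $1$-form annihilating the timelike vector $U=\gamma(e_0+\boldsymbol{v})$ is determined by its restriction to $V^3$; hence the full identity is equivalent to the matching of spatial parts. Comparing the spatial part of $f^\flat$, which equals $-\gamma\boldsymbol{F}^\flat$ up to the fixed $W^1$-normalization of $\eta$ (this sign being the source of the minus in $-\boldsymbol{E}^\flat/\alpha$), with the spatial part $q\frac{\alpha}{c}\gamma(\boldsymbol{e}+\boldsymbol{v}\imult\boldsymbol{b})$ of the right side, and invoking the three-dimensional identity $\boldsymbol{v}\imult *\boldsymbol{B}^\flat=-(\boldsymbol{v}\times\boldsymbol{B})^\flat$, the cross-product structure of the Lorentz force emerges; the identity then reduces to $\boldsymbol{F}=q(\boldsymbol{E}+\frac{\alpha}{c}\boldsymbol{v}\times\boldsymbol{B})$ exactly when $\boldsymbol{e}=-\boldsymbol{E}^\flat/\alpha$ and $\boldsymbol{b}=*\boldsymbol{B}^\flat$.

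The main obstacle I expect is bookkeeping rather than conceptual: the metric $\eta$ is $W^1$-valued, so the lowering in $f^\flat$, the normalization $ce_0=l$ fixing $e^0$, and the inner product entering both the flats $\boldsymbol{E}^\flat,\boldsymbol{B}^\flat$ and the Hodge star must all be tracked carefully, so that the factors of $c$ and of $\langle e_0,e_0\rangle$ cancel and leave precisely the coefficients $1$ and $\alpha/c$ in front of $\boldsymbol{E}$ and $\boldsymbol{v}\times\boldsymbol{B}$; a sign verification of $\boldsymbol{v}\imult *\boldsymbol{B}^\flat$ against the chosen orientation is likewise needed. Finally I would observe that the correspondence $F_p\leftrightarrow(\boldsymbol{E}_p,\boldsymbol{B}_p)$ is pointwise and independent of the world lines, so the quantifier ``for all world lines in $W$'' passes through unchanged and the equivalence of the two uniqueness statements, together with the stated formula for $F$, follows at once.
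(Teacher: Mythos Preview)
Your proposal is correct and follows essentially the same route as the paper: both arguments observe that $(\boldsymbol{E},\boldsymbol{B})\mapsto\mathcal{R}^{-1}(-\boldsymbol{E}^\flat/\alpha,*\boldsymbol{B}^\flat)$ is a pointwise linear bijection, then compute $U\imult F$ from the splitting $F=e^0\wedge\mathscr{E}+\mathscr{B}$ and compare with the spatial Lorentz force via the identity $(\boldsymbol{v}\times\boldsymbol{B})^\flat=-\boldsymbol{v}\imult\bfhodge\boldsymbol{B}^\flat$. The one organizational difference is how the temporal component is handled: the paper applies $\mathcal{R}_1$ to both sides and checks the $e^0$-component explicitly using the representation (\ref{eq_4force}) of $f^\flat$, whereas you short-circuit this by noting that both $f^\flat$ and $U\imult F$ annihilate $U$, so equality on $V^3$ forces equality everywhere. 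This is not a genuinely different idea---the paper's formula (\ref{eq_4force}) is itself derived from $\eta(f,U)=0$---but your phrasing makes the redundancy of the time component more transparent and spares one computation.
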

\begin{proof}
Note that
\begin{align*}
    V^3\oplus V^3&\to\Lambda^2(T_pM^*)\\
    (\boldsymbol{E},\boldsymbol{B})&\mapsto\mathcal{R}^{-1}(-\boldsymbol{E}^\flat/\alpha,*\boldsymbol{B}^\flat)
\end{align*}
is a vector space isomorphism for each $p\in M$. That being said, the following lemma completes the proof:
\end{proof}
\begin{lem}\label{LorentzForceLEM}
Consider the situation of remark \ref{rem_def_charge_fields}. If $\boldsymbol{E}$ and $\boldsymbol{B}$ are two vector fields from $M$ to $V^3$ and $F=\mathcal{R}^{-1}(-\boldsymbol{E}^\flat/\alpha,*\boldsymbol{B}^\flat)$, then we have the following equivalence for each world line in $W$:
\begin{equation*}
    \boldsymbol{F}= q(\boldsymbol{E}+\frac{\alpha}{c}\boldsymbol{v}\times\boldsymbol{B})\Leftrightarrow f^\flat=q\frac{\alpha}{c}U\imult F
\end{equation*}
\end{lem}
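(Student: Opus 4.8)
The plan is to evaluate both sides of the claimed equivalence on the basis $u=e_0$ (the unit of time with $ce_0=l$ fixed in Definition \ref{Decomposition}), so that each side is a $1$-form, and to compare them after splitting $T_pM\cong V^1\oplus V^3$ via $\dif R_p$. First I would make $F$ explicit: unwinding $F=\mathcal R^{-1}(-\boldsymbol E^\flat/\alpha,*\boldsymbol B^\flat)$ through Definition \ref{Decomposition} gives $e_0\imult F=i(-\boldsymbol E^\flat/\alpha)$ and therefore
\[
F=e^0\wedge i(-\boldsymbol E^\flat/\alpha)+i(*\boldsymbol B^\flat),
\]
where $i$ is the canonical inclusion of spatial forms and $e^0$ is the temporal covector of the frame.

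The key structural step is to reduce the four scalar equations hidden in the identity to the three spatial ones. Both sides annihilate the $4$-velocity: on the right, $(Uu\imult F)(Uu)=F(Uu,Uu)=0$ by antisymmetry; on the left, $f^\flat(Uu)=\eta(fu,Uu)=0$, which is exactly the orthogonality $\eta(\dif(m\boldsymbol Uu)u,\boldsymbol Uu)=0$ established in the proof of the preceding lemma (it expresses that $\eta(Uu,Uu)$ is constant along the world line). Since $Uu=\gamma(u,\boldsymbol Vu)$ has nonzero temporal component $\gamma u$, the subspaces $\{0\}\oplus V^3$ and $\mathbb R\,Uu$ together span $V^1\oplus V^3$; hence two covectors that both vanish on $Uu$ coincide if and only if they agree on $V^3$. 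This makes the temporal equation redundant: it follows from the spatial one by pairing the latter with $\boldsymbol Vu$, which kills the magnetic contribution since $\langle\boldsymbol Vu\times\boldsymbol B,\boldsymbol Vu\rangle=0$ and reproduces precisely the power relation built into representation \eqref{eq_4force}.

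It then remains to compute the two spatial restrictions. From \eqref{eq_4force} the left side restricts to $f^\flat|_{V^3}=-\gamma(\boldsymbol Fu)^\flat$. For the right side I would expand $Uu\imult F$ by the Leibniz rule for the interior product: the term $Uu\imult\bigl(e^0\wedge i(-\boldsymbol E^\flat/\alpha)\bigr)$ restricts on $V^3$ to $\gamma\cdot(-\boldsymbol E^\flat/\alpha)$ (the $e^0$ summand vanishes on $V^3$), while the purely spatial term contributes $(\gamma\boldsymbol Vu)\imult(*\boldsymbol B^\flat)$. Here I invoke the three–dimensional identity $\boldsymbol a\imult(*\boldsymbol b^\flat)=-(\boldsymbol a\times\boldsymbol b)^\flat$, turning this into $-\gamma(\boldsymbol Vu\times\boldsymbol B)^\flat$. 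Multiplying by $q\alpha/c$, equating with $f^\flat|_{V^3}$, cancelling $-\gamma$ and using nondegeneracy of $\flat$ on $V^3$ then yields the spatial identity $\boldsymbol Fu=q\bigl(\boldsymbol E+\tfrac{\alpha}{c}\boldsymbol Vu\times\boldsymbol B\bigr)$, i.e. the Lorentz force law; running the computation backwards gives the converse implication.

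The main obstacle I anticipate is not this conceptual reduction but the constant–and–unit bookkeeping in the final comparison: one must relate the coordinate covector $e^0$ to the metric dual $e_0^\flat=\langle e_0,e_0\rangle\,e^0=(ce_0)^2\,e^0$, and track the powers of $c$, $\alpha$ and $\gamma$ together with the $W^1$-valued nature of $\flat$ and $\eta$, so that the spatial identity emerges with exactly the normalization of Remark \ref{rem_def_charge_fields}. Pinning down the sign of the three–dimensional Hodge identity in the chosen orientation is the other place where genuine care is needed.
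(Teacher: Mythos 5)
Your proposal is correct and follows essentially the same route as the paper's proof: decompose everything with respect to the frame via $\mathcal{R}_1$, reduce the identity of $1$-forms to its spatial part, and verify that part using the representation \eqref{eq_4force} of the $4$-force, the Leibniz expansion of $U\imult F$, and the identity $\boldsymbol{v}\imult\bfhodge\boldsymbol{B}^\flat=-(\boldsymbol{v}\times\boldsymbol{B})^\flat$. The one genuine (and slightly cleaner) variation is how you dispose of the temporal component: the paper computes both components of $\mathcal{R}_1(f^\flat)$ and $\mathcal{R}_1(U\imult F)$ and then notes that the temporal equation follows from the spatial one by contracting with $\boldsymbol{v}/c$, whereas you observe a priori that both sides annihilate $Uu$ (by antisymmetry of $F$ and by $\eta(Uu,Uu)$ being constant, as established in the preceding lemma), so only the restriction to $V^3$ ever needs to be computed.
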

\begin{proof}
Set
\begin{equation*}
    (\mathscr{E},\mathscr{B})\coloneqq (-\boldsymbol{E}^\flat/\alpha,\bfhodge\boldsymbol{B}^\flat)
\end{equation*}
and consider the following proposition:
\begin{equation*}
    P\coloneqq \left[\frac{\boldsymbol{v}}{c}\imult\boldsymbol{F}^\flat=- q\dfrac{\alpha}{c}\boldsymbol{v}\imult\mathscr{E}\text{ and }\boldsymbol{F}^\flat= q\alpha(-\mathscr{E}-\frac{\boldsymbol{v}}{c}\imult\mathscr{B})\right]
\end{equation*}
We conclude the proof by showing the following equivalences (the last equivalence is obvious, since $\mathcal{R}_1$ is bijective):
\begin{equation*}
    \boldsymbol{F}= q(\boldsymbol{E}+\frac{\alpha}{c}\boldsymbol{v}\times\boldsymbol{B})\Leftrightarrow P \Leftrightarrow   \mathcal{R}_1(f^\flat)=\mathcal{R}_1(q\frac{\alpha}{c}U\imult F)\Leftrightarrow f^\flat=q\frac{\alpha}{c}U\imult F
\end{equation*}
Firstly, we prove that
\begin{equation*}
    (\boldsymbol{E}+\frac{\alpha}{c}\boldsymbol{v}\times\boldsymbol{B})^\flat=\alpha(-\mathscr{E}-\frac{\boldsymbol{v}}{c}\imult\mathscr{B})
\end{equation*}
in order two obtain the first equivalence: Let $\Omega\in\Lambda^3(V^*)$ be the volume form associated to the oriented inner product space $V^3$, then $\boldsymbol{X}\imult\Omega=\bfhodge\boldsymbol{X}^\flat$ (see exercise 2-28 in \cite{Lee18}) and hence
\begin{equation*}
   (\boldsymbol{v}\times\boldsymbol{B})^\flat=\boldsymbol{B}\imult \boldsymbol{v}\imult\Omega=-\boldsymbol{v}\imult \boldsymbol{B}\imult\Omega=-\boldsymbol{v}\imult\mathscr{B}.
\end{equation*}
The second equivalence is an immediate consequence of the following two equations:
\begin{align}\label{4forceR}
    \mathcal{R}_1(f^\flat)&=\gamma( \tfrac{\boldsymbol{v}}{c}\imult\boldsymbol{F}^\flat,-\boldsymbol{F}^\flat)\\\label{imultUFR}
        \mathcal{R}_1(U\imult F)&=\gamma(-\boldsymbol{v}\imult\mathscr{E},c\mathscr{E}+\boldsymbol{v}\imult \mathscr{B})
\end{align}
Proof of (\ref{4forceR}): Firstly, note that if $x\in\mathbb R$ and $X\coloneqq (\dif R)^{-1}(xe_0,\boldsymbol{X})$, then
\begin{equation*}
    \mathcal{R}_1(X^\flat)=(x,-\boldsymbol{X}^\flat).
\end{equation*}
Now the desired equation follows from
\begin{equation*}
    (\dif R)f=\gamma(\tfrac{\boldsymbol{F\cdot v}}{c}e_0,\boldsymbol{F}).
\end{equation*}
Proof of (\ref{imultUFR}): Consider $x\coloneqq e_0\imult F$ and $y\coloneqq F-e^0\wedge x$. We can use
\begin{equation*}
 U\imult F=U\imult (e^0\wedge x)+U\imult y=-e^0\wedge(U\imult x)+(U\imult e^0)\wedge x+U\imult y
\end{equation*}
and
\begin{equation*}
    (\dif R)U=\gamma\begin{pmatrix}
     ce_0\\\boldsymbol{v}
    \end{pmatrix}
\end{equation*}
to obtain the desired result.
\end{proof}
\begin{ax}
Consider the setting from remark \ref{rem_def_charge_fields}. Furthermore, suppose that
\begin{itemize}
    \item $\rho$ is the \textbf{charge density} w.r.t. $R$, i.e. for all measurable $V\subset A^3$ the integral of $\rho$ over $V$ yields the charge inside $V$.
    \item $\boldsymbol{J}$ is the \textbf{current density} w.r.t. $R$, i.e. for all surfaces $S$ in $A^3$ the surface integral of $\boldsymbol{J}$ over $S$ yields the current through $S$.
\end{itemize}
Then the \textbf{Maxwell equations} hold true:
\begin{equation*}
    \begin{array}{cc}
        \boldsymbol{\nabla\cdot E}=k\rho & \boldsymbol{\nabla\cdot B}=0 \\
        \boldsymbol{\nabla\times {}}\dfrac{\boldsymbol{E}}{\alpha}=-L_{e_0}\boldsymbol{B} & \boldsymbol{\nabla\times B}=\dfrac{k}{\alpha}\dfrac{\boldsymbol{J}}{c}+L_{e_0}\dfrac{\boldsymbol{E}}{\alpha}
    \end{array}
\end{equation*}
\end{ax}
\begin{rem}
The different forms of Maxwell's equations that appear in the literature are due to different choices of the quantities $k$ and $\alpha$:
\begin{table}[H]
    \centering
    \begin{tabular}{ccc}
    \toprule
    &$k$&$\alpha$\\
    \midrule
        SI & $1/\epsilon_0$ & $c$  \\
        Heaviside-Lorentz & 1 & 1  \\
         Gaussian & $4\pi$ & 1  \\
           \bottomrule
    \end{tabular}
\end{table}
A similar table can be found in \cite{Jackson}. We emphasize that the choice of $k$ and $\alpha$ has nothing to do with a choice of units. The units can still be chosen arbitrarily.
\end{rem}
\begin{thm}\label{Maxwell-covariance}
If we consider the $2$-form $F\coloneqq \mathcal{R}^{-1}(-\boldsymbol{E}^\flat/\alpha,*\boldsymbol{B}^\flat)$ (as explained in corollary \ref{Cor-ExperimentalData}, $F$ does not depend on $R$) and the vector $J\coloneqq (\dif R)^{-1}(c\rho e_0,\boldsymbol{J})$, then we have the following equivalences:
\begin{equation*}
    \dif F=0\Leftrightarrow\begin{cases}
    \boldsymbol{\nabla\times {}}\dfrac{\boldsymbol{E}}{\alpha}=-L_{e_0}\boldsymbol{B}\\
    \boldsymbol{\nabla\cdot B}=0
    \end{cases}
\end{equation*}
and
\begin{equation*}
    \hodge\dif\hodge F=\frac{k}{\alpha}\frac{J^\flat}{c}\Leftrightarrow\begin{cases}
    \boldsymbol{\nabla\cdot E}=k\rho \\
    \boldsymbol{\nabla\times B}=\dfrac{k}{\alpha}\dfrac{\boldsymbol{J}}{c}+L_{e_0}\dfrac{\boldsymbol{E}}{\alpha}
    \end{cases}
\end{equation*}
\end{thm}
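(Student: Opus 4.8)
The plan is to reduce both equivalences to two structural lemmas describing how the exterior derivative $\dif$ and the spacetime Hodge star $\hodge$ interact with the frame-dependent decomposition $\mathcal R$ of Definition \ref{Decomposition}. Since each $\mathcal R_n$ is a vector space isomorphism, a spacetime-form equation holds if and only if both of its $\mathcal R$-components (which are genuine forms on $V^3$) satisfy the corresponding identity; translating these spatial identities through the standard dictionary between exterior calculus and vector calculus then yields Maxwell's equations in the stated form. Throughout I use $\dif e^0=0$ (the coframe is constant on the affine space) together with the splitting $\dif=\dif_s+e^0\wedge L_{e_0}$ of the spacetime differential into its purely spatial part $\dif_s$ and its time part, which is valid because $e^0(e_0)=1$ and $e^0$ annihilates $V^3$.

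First I would prove the differential lemma: if $\mathcal R_n\alpha=(x,y)$, then writing $\alpha=e^0\wedge x+y$ and using $\dif(e^0\wedge x)=-e^0\wedge\dif_s x$ gives
\[
  \mathcal R_{n+1}(\dif\alpha)=\bigl(L_{e_0}y-\dif x,\ \dif y\bigr),
\]
where $\dif$ on the right now denotes the spatial exterior derivative on $V^3$. Applying this with $\mathcal R_2F=(-\boldsymbol E^\flat/\alpha,\ *\boldsymbol B^\flat)$, the equation $\dif F=0$ becomes the pair $\dif(\boldsymbol E^\flat/\alpha)=-L_{e_0}(*\boldsymbol B^\flat)$ and $\dif(*\boldsymbol B^\flat)=0$. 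Using the three-dimensional identities $\dif(\boldsymbol X^\flat)={*}(\boldsymbol\nabla\times\boldsymbol X)^\flat$ and $\dif(*\boldsymbol X^\flat)=(\boldsymbol\nabla\cdot\boldsymbol X)\Omega$, together with the fact that the spatial Hodge star commutes with $L_{e_0}$, these collapse to $\boldsymbol\nabla\times(\boldsymbol E/\alpha)=-L_{e_0}\boldsymbol B$ and $\boldsymbol\nabla\cdot\boldsymbol B=0$, which is the first equivalence.

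For the second equivalence I would first establish the Hodge lemma: the spacetime star swaps the two slots of $\mathcal R$ up to spatial Hodge stars, signs, and powers of $c$ forced by $\sqrt{\langle e_0,e_0\rangle}=ce_0$, i.e.
\[
  \mathcal R_{4-n}(\hodge\alpha)=\bigl(c_1\,{*}y,\ c_2\,{*}x\bigr)\quad\text{when}\quad \mathcal R_n\alpha=(x,y).
\]
Applying this to $F$ (and using ${*}{*}=\mathrm{id}$ on $V^3$), then the differential lemma, then the Hodge lemma once more, produces $\mathcal R_1(\hodge\dif\hodge F)$ as a pair consisting of a scalar proportional to $\boldsymbol\nabla\cdot(\boldsymbol E/\alpha)$ and a $1$-form built from $(\boldsymbol\nabla\times\boldsymbol B)^\flat$ and $L_{e_0}(\boldsymbol E/\alpha)^\flat$. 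On the other side, the identity noted in the proof of Lemma \ref{LorentzForceLEM} gives $\mathcal R_1(J^\flat)=(c\rho,\ -\boldsymbol J^\flat)$, so
\[
  \mathcal R_1\!\left(\tfrac{k}{\alpha}\tfrac{J^\flat}{c}\right)=\left(\tfrac{k\rho}{\alpha},\ -\tfrac{k}{\alpha c}\boldsymbol J^\flat\right).
\]
Matching the two scalar components yields $\boldsymbol\nabla\cdot\boldsymbol E=k\rho$, and matching the two $1$-form components yields $\boldsymbol\nabla\times\boldsymbol B=\tfrac{k}{\alpha}\tfrac{\boldsymbol J}{c}+L_{e_0}(\boldsymbol E/\alpha)$, which is the second equivalence.

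The main obstacle is the Hodge lemma and the bookkeeping of its constants $c_1,c_2$: one must pin down the signs forced by the Lorentzian signature of $\eta$ and the powers of $c$ forced by the normalization $ce_0=l$, and then verify that after two applications of $\hodge$ (with the intervening ${*}{*}$ on $V^3$) these constants conspire to reproduce exactly the coefficients $k$ and $\tfrac{k}{\alpha c}$, with the correct relative sign between the $\boldsymbol\nabla\times\boldsymbol B$ and $L_{e_0}\boldsymbol E$ terms. Everything downstream — the curl/divergence dictionary, ${*}{*}=\mathrm{id}$, $*\Omega=1$, and the commutation of $L_{e_0}$ with the spatial star — is routine once the two decomposition lemmas are in place.
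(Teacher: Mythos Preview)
Your approach is essentially identical to the paper's: it too proves a ``differential lemma'' $\mathcal R_{n+1}(\dif\alpha)=(-\mathbf d x+L_{e_0}y,\ \mathbf d y)$ and a ``Hodge lemma'' swapping the two slots up to spatial stars and signs, then composes Hodge--differential--Hodge and matches against $\mathcal R_1(J^\flat)=(c\rho,-\boldsymbol J^\flat)$. The obstacle you flag is handled in the paper by an explicit orthonormal-basis formula for $\hodge$ (Lemma~\ref{hodgeLEM}); since $e_0$ is chosen with $ce_0=l$ the spacetime basis is already orthonormal, so no stray powers of $c$ appear and the constants are simply $\mathcal R_2(\hodge F)=(\bfhodge\mathscr B,-\bfhodge\mathscr E)$ and $\mathcal R_1(\hodge\alpha)=(\bfhodge y,\bfhodge x)$ for a $3$-form $\alpha$.
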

\begin{proof}
We will prove this theorem after the following remark:
\end{proof}
\begin{rem}\label{F_signature}\hfill
\begin{itemize}
    \item The last theorem proves the covariance of Maxwell's equations:
If they hold for one reference frame, then they hold for all reference frames.
\item In addition, this shows that $J\coloneqq (\dif R)^{-1}(c\rho e_0,\boldsymbol{J})$ does not depend on the $R$, i.e. 4-current is indeed a 4-vector.
\item If we consider the Riemannian metric $-\eta$ and still define $F$ through corollary \ref{Cor-ExperimentalData}, then theorem \ref{Maxwell-covariance} only holds true with $F$ replaced by $-F$.
\item Throughout this section we assumed that a continuous orientation of $M$ had been fixed (or equivalently an orientation of $V^3$, see section \ref{section_orietations}). But the Maxwell equations are invariant under a change of orientation: If we consider the Maxwell equations in terms of...
\begin{itemize}
    \item ...$F$, then this follows from the fact that the composition of two Hodge stars (unlike a single Hodge star) is invariant under a change of the orientation.
    \item ...$\boldsymbol{E}$ and $\boldsymbol{B}$, then this can be seen as follows: If $\boldsymbol{B}$ is the magnetic field w.r.t. one orientation, then $-\boldsymbol{B}$ is the magnetic field w.r.t. the other orientation. Similarly, if $\boldsymbol{X}$ is some vector field and $\boldsymbol{\nabla\times X}$ is the rotation w.r.t. one orientation, then $-\boldsymbol{\nabla\times X}$ is the rotation w.r.t. the other orientation.
\end{itemize}
\end{itemize}
\end{rem}
\begin{proof}[Proof of theorem \ref{Maxwell-covariance}]
Warning: In this proof we consider two different Riemannian manifolds, the euclidean space $E^3$ (the affine space $A^3$ associated to the reference frame together with the inner product on $V^3$ w.r.t. the unit of length) and Minkowski space. We use bold symbols to avoid confusion: If $\beta$ is an exterior form on $E^3$, then $\mathbf{d}\beta$ is its exterior differential and $\boldsymbol{*}\beta$ is its Hodge dual. 

Firstly, we use the fact that $\boldsymbol{\nabla\cdot X}=\hodge\dif\hodge \boldsymbol{X}^\flat$ and $\boldsymbol{\nabla\times X}=(\hodge\dif \boldsymbol{X}^\flat)^\sharp$ for each vector field $\boldsymbol{X}$ (see exercise 2-28 in \cite{Lee18}) to rewrite Maxwell's equations:
\begin{equation*}
    \begin{array}{cc}
    \bfhodge\mathbf{d} \dfrac{\boldsymbol{E}^\flat}{\alpha}=-L_{e_0}\boldsymbol{B}^\flat &  \bfhodge\mathbf{d}\bfhodge \boldsymbol{B}^\flat=0\\
    \bfhodge\mathbf{d}\bfhodge \dfrac{\boldsymbol{E}^\flat}{\alpha}=\dfrac{k}{\alpha}\rho  & \bfhodge\mathbf{d} \boldsymbol{B}^\flat=\dfrac{k}{\alpha}\dfrac{\boldsymbol{J^\flat}}{c}+L_{e_0}\dfrac{\boldsymbol{E}^\flat}{\alpha}
\end{array}
\end{equation*}
Since $**=1$ on $\Lambda^3(V^*)$, we can simplify two equations:
\begin{equation*}
    \begin{array}{cc}
     \mathbf{d} \dfrac{\boldsymbol{E}^\flat}{\alpha}+L_{e_0} \bfhodge\boldsymbol{B}^\flat=0 &  \mathbf{d}\bfhodge \boldsymbol{B}^\flat=0\\
   \bfhodge\mathbf{d}\bfhodge \dfrac{\boldsymbol{E}^\flat}{\alpha}=\dfrac{k}{\alpha}\rho& -L_{e_0}\dfrac{\boldsymbol{E}^\flat}{\alpha}+\bfhodge\mathbf{d} \boldsymbol{B}^\flat=\dfrac{k}{\alpha}\dfrac{\boldsymbol{J^\flat}}{c}
\end{array}
\end{equation*}
Now we set
\begin{equation*}
    (\mathscr{E},\mathscr{B})\coloneqq (-\frac{\boldsymbol{E}^\flat}{\alpha},*\boldsymbol{B}^\flat)
\end{equation*}
and rewrite the equations one more time:
\begin{equation*}
    \begin{array}{cc}
     -\mathbf{d} \mathscr{E}+L_{e_0}\mathscr{B} =0&  \mathbf{d}\mathscr{B}=0\\
   -\bfhodge\mathbf{d}\bfhodge \mathscr{E}=\dfrac{k}{\alpha}\rho & -L_{e_0}\mathscr{E}-\bfhodge\mathbf{d}\bfhodge\mathscr{B}=-\dfrac{k}{\alpha}\dfrac{\boldsymbol{J^\flat}}{c}
\end{array}
\end{equation*}
Thus, it remains to be shown:
\begin{equation*}
    \mathcal{R}_3(\dif F)=0\Leftrightarrow\begin{cases}
    -\mathbf{d} \mathscr{E}+L_{e_0}\mathscr{B} =0\\
    \mathbf{d}\mathscr{B}=0
    \end{cases}
\end{equation*}
and
\begin{equation*}
    \mathcal{R}_1(\hodge\dif\hodge F)=\mathcal{R}_1\frac{J^\flat}{c}\Leftrightarrow\begin{cases}
    -\bfhodge\mathbf{d}\bfhodge \mathscr{E}=\dfrac{k}{\alpha}\rho \\
    -L_{e_0}\mathscr{E}-\bfhodge\mathbf{d}\bfhodge\mathscr{B}=-\dfrac{k}{\alpha}\dfrac{\boldsymbol{J^\flat}}{c}
    \end{cases}
\end{equation*}
Since
\begin{equation*}
    \mathcal{R}_1(J^\flat)=(c\rho,-\boldsymbol{J}^\flat)
\end{equation*}
(see the proof of lemma \ref{LorentzForceLEM}), the next lemma concludes the proof:
\end{proof}
\begin{lem}
Suppose $F$ is a 2-form on $M$ and $F=\mathcal{R}^{-1}(\mathscr{E},\mathscr{B})$, then:
\begin{align}\label{AHA}
    \mathcal{R}_3(\dif F)&=(-\mathbf{d}\mathscr{E}+L_{e_0}\mathscr{B},\mathbf{d}\mathscr{B})\\\label{Brh}
    \mathcal{R}_1(\hodge\dif\hodge F)&=(-\bfhodge\mathbf{d}\bfhodge\mathscr{E},-\bfhodge\mathbf{d}\bfhodge\mathscr{B}-L_{e_0}\mathscr{E})
\end{align}
\end{lem}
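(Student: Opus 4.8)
The plan is to exploit the frame-induced splitting $T_pM\cong V^1\oplus V^3$ and reduce both identities to the two structural operations out of which $\mathcal R$ is built. Writing an arbitrary form on $M$ as $e^0\wedge p+q$ with $p,q$ in the image of the inclusion $i$ — so that $\mathcal R(e^0\wedge p+q)=(p,q)$ by the computation $e_0\imult(e^0\wedge p+q)=p$ — I would first record how $\dif$ and $\hodge$ act on this decomposition. Because $e^0$ is constant, $\dif e^0=0$, and for a spatial form $\omega$ the exterior derivative splits as
\begin{equation*}
    \dif\omega=e^0\wedge L_{e_0}\omega+\mathbf{d}\omega ,
\end{equation*}
with $\mathbf{d}$ the spatial exterior derivative and $L_{e_0}$ the time derivative. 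Second, on an orthonormal coframe adapted to the splitting one verifies, degree by degree, Hodge relations of the form $\hodge(e^0\wedge a)=\pm\bfhodge a$ and $\hodge b=e^0\wedge\bfhodge b$ for spatial $a,b$, where the sign is dictated by the degree and by the signature of $\eta$.

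With these two facts, (\ref{AHA}) would follow immediately. Writing $F=e^0\wedge\mathscr E+\mathscr B$ and using $\dif e^0=0$ together with $e^0\wedge e^0=0$ gives
\begin{equation*}
    \dif F=-e^0\wedge\dif\mathscr E+\dif\mathscr B=e^0\wedge(-\mathbf{d}\mathscr E+L_{e_0}\mathscr B)+\mathbf{d}\mathscr B ,
\end{equation*}
and reading off the two components yields $\mathcal R_3(\dif F)=(-\mathbf{d}\mathscr E+L_{e_0}\mathscr B,\mathbf{d}\mathscr B)$ as claimed.

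For (\ref{Brh}) I would apply the same two moves in sequence three times. First compute $\hodge F$ by the Hodge relations, which returns a form of the same shape $e^0\wedge(\,\cdot\,)+(\,\cdot\,)$ but with $\mathscr E,\mathscr B$ replaced by their $\bfhodge$-duals in swapped positions; then split $\dif(\hodge F)$ via the identity above; and finally apply $\hodge$ to the resulting $3$-form. Two simplifications are needed along the way: the spatial metric is time-independent, so $L_{e_0}$ commutes with $\bfhodge$, and $\bfhodge\bfhodge=1$ on $\Lambda^k(V^*)$ for the relevant degrees (the identity already used in the main proof), which collapses the double dual $-\bfhodge L_{e_0}\bfhodge\mathscr E$ into $-L_{e_0}\mathscr E$. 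Collecting the two components then gives $\mathcal R_1(\hodge\dif\hodge F)=(-\bfhodge\mathbf{d}\bfhodge\mathscr E,\,-\bfhodge\mathbf{d}\bfhodge\mathscr B-L_{e_0}\mathscr E)$.

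The hard part will be the sign bookkeeping rather than anything conceptual. The Hodge signs flip with degree — for example $\hodge(e^0\wedge\mathscr E)=-\bfhodge\mathscr E$ on a spatial $1$-form but $\hodge(e^0\wedge p)=+\bfhodge p$ on a spatial $2$-form — and each such sign must be combined correctly with the sign generated when $e^0$ is moved past a form via $e^0\wedge\dif\omega=-\dif(e^0\wedge\omega)$, and with the signature $(+,-,-,-)$ of $\eta$. Fixing the orientation and the signature once and verifying every Hodge relation on a single adapted orthonormal basis before appealing to linearity is what keeps the computation reliable.
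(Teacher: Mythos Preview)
Your proposal is correct and follows essentially the same route as the paper: decompose $F=e^0\wedge\mathscr E+\mathscr B$, split $\dif$ into $e^0\wedge L_{e_0}+\mathbf d$ on spatial forms to obtain (\ref{AHA}), and then reduce (\ref{Brh}) to (\ref{AHA}) via the relations between the four-dimensional $\hodge$ and the three-dimensional $\bfhodge$ on the adapted coframe. The only cosmetic difference is that the paper records the Hodge relations via an explicit coordinate formula (its Lemma on $\hodge e^{i_1\ldots i_k}$), whereas you state them as $\hodge(e^0\wedge a)=\pm\bfhodge a$, $\hodge b=e^0\wedge\bfhodge b$ and defer the sign verification to a basis check; the content and the sign bookkeeping you flag are exactly what the paper carries out.
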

\begin{proof}
In the following, the isomorphisms $i$ and $\mathcal{R}$ from definition \ref{Decomposition} are mostly left implicit, e.g.
\begin{equation*}
    F=e^0\wedge\mathscr{E}+\mathscr{B}=(\mathscr{E,\mathscr{B}}).
\end{equation*}
We start by proving (\ref{AHA}) and then we use this result to prove to (\ref{Brh}):

Recall that
\begin{equation*}
    \dif\omega=\sum_{i=0}^3e^i\wedge L_{e_i}\omega
\end{equation*}
for each exterior form $\omega$. On the one hand, we can use
\begin{equation*}
    \forall i:L_Xe^i=L_X\dif x^i=\dif (L_Xx^i)=\dif(e^iX)
\end{equation*}
(see equation 4.21 in \cite{frankel_2011}) to obtain
\begin{flalign*}
&\dif(e^0\wedge\mathscr{E})=\sum_{i=0}^3e^i\wedge L_{e_i}(e^0\wedge\mathscr{E})&\\
&=\sum_{i=0}^3e^i\wedge(\underbrace{L_{e_i}e^0\wedge\mathscr{E}}_{=\dif(e^0e_i)\wedge\mathscr{E}=0}+e^0\wedge L_{e_i}\mathscr{E})=\sum_{i=0}^3e^i\wedge e^0\wedge L_{e_i}\mathscr{E}&\\
&=\sum_{i=1}^3e^i\wedge e^0\wedge L_{e_i}\mathscr{E}=-e^0\wedge\sum_{i=1}^3e^i\wedge L_{e_i}\mathscr{E}=-e^0\wedge\textbf{d}\mathscr{E}&
\end{flalign*}
and on the other hand
\begin{equation*}
    \dif\mathscr{B}=e^0\wedge L_{e_0}\mathscr{B}+\sum_{i=1}^3e^i\wedge L_{e_i}\mathscr{B}=e^0\wedge L_{e_0}\mathscr{B}+\mathbf{d}\mathscr{B}.
\end{equation*}
In summary,
\begin{equation*}
        \dif F=\dif(e^0\wedge\mathscr{E}+\mathscr{B})=\dif(e^0\wedge\mathscr{E})+\dif\mathscr{B}=e^0\wedge (-\mathbf{d}\mathscr{E}+L_{e_0}\mathscr{B})+\mathbf{d}\mathscr{B}
    =(-\mathbf{d}\mathscr{E}+L_{e_0}\mathscr{B},\mathbf{d}\mathscr{B}).
\end{equation*}
To prove (\ref{Brh}), we need the following lemma.
\end{proof}
\begin{lem}\label{hodgeLEM}
Let $V$ be an $n$-dimensional oriented real vector space together with a non-degenerate symmetric bilinear form $g$. If $e_1,\ldots,e_n$ is a positively oriented orthonormal basis of $V$, then
\begin{equation*}
    *e^{i_1\ldots i_k}=e^{i_1}g^{-1}e^{i_1}\cdots g^{i_k}g^{-1}e^{i_k}\epsilon_{i_1\ldots i_k i_{k+1}\ldots i_n}e^{i_{k+1}\ldots i_n}
\end{equation*}
where the RHS is not a sum: Suppose $0<k<n$ and $i_1<\ldots<i_k$, then $(i_{k+1},\ldots,i_n)$ is the unique tuple such that $i_{k+1}<\ldots<i_n$ and $(i_1,\ldots,i_n)$ is a permutation of $(1,\ldots,n)$.
\end{lem}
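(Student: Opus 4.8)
The plan is to reduce the claim to the defining property of the Hodge star. Write $\omega\coloneqq e^1\wedge\cdots\wedge e^n$ for the volume form fixed by the orientation and $g$, let $\langle\cdot,\cdot\rangle$ be the bilinear form induced by $g$ on $\Lambda^k(V^*)$, and recall that $*$ is characterised by $\alpha\wedge *\beta=\langle\alpha,\beta\rangle\,\omega$ for all $\alpha,\beta\in\Lambda^k(V^*)$. I would first record two facts about the orthonormal basis. Abbreviating $s_j\coloneqq e^jg^{-1}e^j=g(e_j,e_j)\in\{-1,1\}$, the determinant formula for the induced form gives $\langle e^I,e^J\rangle=\delta_{IJ}\prod_{j\in I}s_j$ for increasing multi-indices $I,J$ of length $k$; and for an increasing $k$-index $K$ and an increasing $(n-k)$-index $L$, the product $e^K\wedge e^L$ vanishes unless $L$ is the complementary index $K^c$, in which case $e^K\wedge e^{K^c}=\epsilon_{k_1\ldots k_n}\,\omega$, where $\epsilon_{k_1\ldots k_n}$ is the sign of the permutation listing $K$ followed by $K^c$.

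Next I would fix the increasing index $I=(i_1,\ldots,i_k)$ and expand $*e^I=\sum_L c_L\,e^L$ over increasing $(n-k)$-indices $L$. Pairing against an arbitrary increasing $k$-index $K$ and using the two facts above,
\begin{equation*}
    \delta_{KI}\Big(\prod_{j\in I}s_j\Big)\omega=\langle e^K,e^I\rangle\,\omega=e^K\wedge *e^I=\sum_L c_L\,e^K\wedge e^L=c_{K^c}\,\epsilon_{k_1\ldots k_n}\,\omega,
\end{equation*}
since only the term $L=K^c$ contributes. Choosing $K\neq I$ forces $c_{K^c}=0$; as $K\mapsto K^c$ is a bijection between increasing $k$-indices and increasing $(n-k)$-indices, every coefficient except $c_{I^c}$ vanishes, so $*e^I$ is indeed a multiple of $e^{I^c}=e^{i_{k+1}\ldots i_n}$.

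Finally, choosing $K=I$ in the displayed identity gives $c_{I^c}\,\epsilon_{i_1\ldots i_n}=\prod_{j\in I}s_j$, and since $\epsilon_{i_1\ldots i_n}^2=1$ we obtain $c_{I^c}=\Big(\prod_{j=1}^k s_{i_j}\Big)\epsilon_{i_1\ldots i_n}$, which is exactly the asserted formula once the $s_{i_j}$ are written back as $e^{i_j}g^{-1}e^{i_j}$. The computation itself is routine; the only real care required is the sign bookkeeping — verifying that the induced form on $\Lambda^k(V^*)$ really has diagonal $\prod_{j\in I}s_j$ and that reordering $e^{i_1}\wedge\cdots\wedge e^{i_n}$ into $\omega$ produces precisely the Levi-Civita factor $\epsilon_{i_1\ldots i_n}$ — together with confirming that the normalisation $\alpha\wedge *\beta=\langle\alpha,\beta\rangle\,\omega$ of $*$ agrees with the convention used elsewhere (e.g. in exercise 2-28 of \cite{Lee18}).
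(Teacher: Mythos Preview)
Your argument is correct: expanding $*e^I$ in the basis $\{e^L\}$ of $\Lambda^{n-k}(V^*)$ and testing against each $e^K$ via the defining relation $\alpha\wedge *\beta=\langle\alpha,\beta\rangle\,\omega$ isolates the single nonzero coefficient and produces exactly the claimed sign factor $\big(\prod_j s_{i_j}\big)\epsilon_{i_1\ldots i_n}$. The sign bookkeeping you flag --- the diagonal values of the induced form on $\Lambda^k(V^*)$ and the Levi-Civita factor from reordering $e^I\wedge e^{I^c}$ into $\omega$ --- is handled cleanly.

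As for comparison with the paper: there is essentially nothing to compare. The paper does not supply its own proof of this lemma; it simply refers the reader to page~168 of \cite{altland_von_delft_2019} for a derivation of the coordinate representation from the coordinate-invariant definition. Your proof is precisely such a derivation, carried out in full, and is the standard route. The only caveat you already note yourself --- that the normalisation $\alpha\wedge *\beta=\langle\alpha,\beta\rangle\,\omega$ must match the convention in force --- is the right thing to check, since the paper never states its definition of $*$ explicitly and relies on \cite{Lee18} and \cite{altland_von_delft_2019} for the conventions.
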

\begin{proof}
For a derivation of the coordinate representation of the Hodge dual based on the coordinate invariant definition, see page 168 in \cite{altland_von_delft_2019}.
\end{proof}
\begin{proof}[Proof of theorem \ref{Maxwell-covariance}]
Let $(e_i)_{1\leq i\leq 3}$ be a positively oriented orthonormal basis of $V^3$, then
\begin{equation*}
    (\dif R_p{}^{-1}e_i)_{0\leq i\leq 3}
\end{equation*}
is a positively oriented orthonormal basis of $T_pM$ and we can use (\ref{AHA}) and lemma \ref{hodgeLEM} to obtain that
\begin{equation*}
    \mathcal{R}\hodge F=\mathcal{R}\hodge\mathcal{R}^{-1}(\mathscr{E},\mathscr{B})=(\bfhodge\mathscr{B},-\bfhodge\mathscr{E}).
\end{equation*}
Then (\ref{AHA}) implies that
\begin{equation*}
    \mathcal{R}\dif\hodge F=\mathcal{R}\dif\mathcal{R}^{-1}\mathcal{R}\hodge F=\mathcal{R}\hodge\dif \mathcal{R}^{-1}(\bfhodge\mathscr{B},-\bfhodge\mathscr{E})=\mathcal{R}\hodge\dif \mathcal{R}^{-1}(\bfhodge\mathscr{B},-\bfhodge\mathscr{E})=(-\mathbf{d}\bfhodge\mathscr{B}-L_{e_0}\bfhodge\mathscr{E},-\mathbf{d}\bfhodge\mathscr{E}).
\end{equation*}
Let $\alpha$ be a 3-form on $M$ such that $\alpha=\mathcal{R}^{-1}(x,y)$, then we can use lemma \ref{hodgeLEM} one more time to show that $\mathcal{R}\hodge\alpha=(\bfhodge y,\bfhodge x)$ and this concludes the proof.
\end{proof}
\printbibliography
\end{document}